%% file: main.tex
\ifdefined\pdfobjcompresslevel
\fi
\documentclass[11pt]{article}

\usepackage[a4paper,margin=1in]{geometry}
\usepackage{amsmath,amssymb,amsthm}
\usepackage{mathtools}
\usepackage{enumitem}
\usepackage{booktabs}
\usepackage{graphicx}
\usepackage{float}
\usepackage{placeins}
\usepackage{authblk}
\usepackage{hyperref}
\hypersetup{hidelinks}

\usepackage{tikz}
\usetikzlibrary{arrows.meta,positioning,shapes.geometric}
\definecolor{hivepatch}{HTML}{E1D5E7}
\definecolor{hiveconnector}{HTML}{A8A8A8}
\definecolor{hiveedge}{HTML}{515151}
\definecolor{hiveedgedead}{HTML}{3F3F3F}
\definecolor{hivewhiteedge}{HTML}{F2EFE9}
\definecolor{hivewhiteedgedead}{HTML}{E5E0D8}
\definecolor{hivewhiteoutline}{HTML}{A9A39A}

\graphicspath{{images/}{ourImages/}}

\newtheorem{theorem}{Theorem}[section]
\newtheorem{lemma}[theorem]{Lemma}

\newtheorem{corollary}[theorem]{Corollary}
\theoremstyle{definition}
\newtheorem{definition}[theorem]{Definition}
\newtheorem{remark}[theorem]{Remark}

\newcommand{\Hive}{Hive}
\newcommand{\nHive}{$n$-Hive}
\newcommand{\singleQueenHive}{single-Queen \nHive{}}
\newcommand{\UCL}{Unbounded Two-Player Constraint Logic}

\title{Saving the Queen is hard:\\ 
Hive is EXPTIME-complete}

\author[1]{Francesco~Carboni\thanks{\texttt{carboni.2058986@studenti.uniroma1.it}}}
\author[1,2]{Daniele~Muscillo\thanks{\texttt{muscillo.2080466@studenti.uniroma1.it}}}

\affil[1]{Sapienza University of Rome}
\affil[2]{Orthogonal School, Elicsir Foundation}

\date{October 8, 2026}

\usepackage{subfiles}
\begin{document}

\subfile{Sec/Introduction}

\subfile{Sec/Preliminaries}

\subfile{Sec/TheReduction}

\subfile{Sec/RoyalRegions}

\subfile{Sec/BeetleStacks}

\subfile{Sec/End}

\begingroup\small
\makeatletter
\let\hive@originalthebibliography\thebibliography
\renewcommand{\thebibliography}[1]{%
  \hive@originalthebibliography{ACGMPT26}%
}
\let\hive@originalbibitem\bibitem
\renewcommand{\bibitem}[2][]{%
  \def\hive@thiskey{#2}%
  \def\hive@nytkey{alberti_et_al:LIPIcs.FUN.2026.2}%
  \ifx\hive@thiskey\hive@nytkey
    \hive@originalbibitem[ACGMPT26]{#2}%
  \else
    \hive@originalbibitem[#1]{#2}%
  \fi
}
\makeatother
\bibliographystyle{alpha}
\bibliography{references}
\endgroup
\end{document}

%% file: Sec/Introduction.tex
\maketitle

\begin{abstract}
\Hive{} is a boardless two-player strategy game in which the pieces
form the playing surface. We prove that deciding whether White can
force a win in \singleQueenHive{}, the generalization with exactly
one Queen Bee per player, is EXPTIME-complete, even when all pieces
are already placed.
Our hardness proof reduces from unbounded two-player Constraint
Logic: Grasshoppers encode edge orientations, while the One Hive
rule enforces the vertex constraints. We obtain the EXPTIME upper
bound by considering positions up to translation. The result
answers the single-Queen question raised by Andel and Rin. Keeping
the same inventory and movement rules, it also gives EXPTIME-completeness
for multi-Queen variants.
To the best of our knowledge, this is the first use of unbounded
two-player Constraint Logic to prove EXPTIME-completeness for a
natural problem.
\end{abstract}

\section{Introduction}
\label{sec:intro}

The computational complexity of games is studied through \emph{generalized games}: families of increasingly large instances for which one asks whether a puzzle has a solution or a player has a winning strategy. A recurring pattern is that one-player problems with polynomially bounded solution lengths are NP-complete, as in Tetris \cite{mithardnessgroup_et_al:LIPIcs.FUN.2026.32}, Sudoku \cite{yatoseta2003puzzles}, and natural generalizations of recent puzzles such as Pips and Strands \cite{alberti_et_al:LIPIcs.FUN.2026.2}. When the length of a puzzle solution is not polynomially bounded, PSPACE-completeness becomes a natural benchmark; examples include Sokoban, Rush Hour, and sliding-block puzzles \cite{demainehearn2008constraintlogic}. 
For two-player games, polynomially bounded play naturally leads
to PSPACE; Hex~\cite{reisch1981hex} and Othello~\cite{iwatakasai1994othello} are PSPACE-complete examples.
Games without a polynomial bound on play can instead be
EXPTIME-complete, as illustrated by generalized Chess
\cite{fraenkellichtenstein1981}, Checkers
\cite{robson1984checkers}, and Go under the rules
considered by Robson \cite{robson1983go}.

Constraint Logic provides a useful framework for proving such results
\cite{demainehearn2008constraintlogic,hearn2006thesis}. Its bounded and
unbounded one-player variants capture NP and PSPACE, respectively.
Bounded two-player Constraint Logic models polynomially long games and
has been used to prove PSPACE-completeness for games such as
Amazons and Konane. We use the unbounded two-player version, abbreviated 2CL, which is
EXPTIME-complete. Demaine, Hendrickson, and Lynch used unbounded 2CL
to establish EXPTIME-completeness within their general framework
for motion planning through gadgets \cite{dhl2020gadgets}.
Demaine, Gasarch, and Hajiaghayi distinguish this role in developing
hardness frameworks from applications to natural problems, noting
that unbounded 2CL had not yet been used to prove a natural problem
EXPTIME-complete \cite[p.~335]{dgh2026draft}.

We provide such an application for a generalization of \Hive{},
the boardless two-player strategy game designed by John Yianni
and published by Gen42 Games in 2001.
Its hexagonal pieces form a connected playing surface. Each player
controls a Queen Bee and other pieces with distinct movement rules.
The base game contains Queens, Ants, Beetles, Grasshoppers, and Spiders;
three official expansions add the Mosquito, Ladybug, and Pillbug.
Together they form the MLP ruleset used for rated play on Hivegame.com.%

A player loses when their Queen is surrounded on all six sides.
The rules combine local movement constraints with the global
\emph{One Hive} rule, which requires all pieces to remain connected.
\Hive{} also has an active community of tournament players, who also develop and discuss
unofficial insects and variants
\cite{worldhivecommunity2026,worldhivetournaments2026ehc,
boardgamegeek2026hivevariants}, and computer-playing agents,
developed through academic, educational, and open-source community
projects
\cite{out2017montecarlo,kampert2021mimicking,
orthogonalschool2025hivebot,mzinga2026}.
Despite this activity, the formal computational complexity of \Hive{} has
received comparatively little attention.

The boardless nature of \Hive{} makes it natural to generalize
the game by increasing the number of pieces.
No larger board is needed: additional copies of the pieces suffice,
while each player retains a single Queen.
We played such a version using two sets at the 2026 European
\Hive{} Convention \cite{worldhivetournaments2026ehc}.
The decision problem studied here concerns play from a given
position, rather than only from the empty starting position.

\paragraph{Related work.}
Andel and Rin \cite{andelrin2026hive} proved PSPACE-hardness
for a generalization of \Hive{} with an unrestricted number
of Queens per player, where surrounding any one Queen causes
that player to lose. They explicitly note that allowing multiple
Queens is arguably less natural than retaining a single Queen per
player, and identify the latter as more faithful to the spirit of
the original game.
They also discussed a variant in which
all of a player's Queens must be surrounded.
They left the exact complexity open and conjectured
EXPTIME-completeness for their main variant.
The complexity of the more natural single-Queen version remained
open as well. Their reduction uses all five base-game piece types and makes
essential use of multiple Queens to create selfmate hexes. They note
that Ants are used only rarely and may be replaceable, and ask whether
hardness can also be established without Spiders.

\paragraph{Our contribution.}
We study \singleQueenHive{}, in which each player has exactly one
Queen Bee, as in the standard game, while the numbers of the remaining
pieces may grow with the instance size. We prove that its winning-strategy
decision problem is EXPTIME-complete, by a polynomial-time reduction
from \UCL{} and a matching upper bound. Hardness holds even with all
pieces placed and for every inventory variant with enough Grasshoppers,
Beetles, and Mosquitoes to realize the construction.

We also consider multi-Queen variants with the two loss conditions
studied by Andel and Rin, leaving the inventory and movement rules
unchanged (Definition~\ref{def:multi-queen-variants}). Both are
EXPTIME-complete: single-Queen instances belong to both variants,
and the same EXPTIME upper bound applies to all three models.

Our construction also reduces the required piece set. Its active gadgets
use only Queen Bees, Beetles, Grasshoppers, and Mosquitoes, and in
particular require neither Ants nor Spiders. Additional inventory pieces
can be kept immobile as described in Remark~\ref{rem:unused-inventory}.
Moreover, the Mosquitoes can be eliminated while preserving essentially
the same gadget structure. The resulting construction requires
substantially more Beetles and a more involved correctness argument,
but its active gadgets use only Queen Bees, Beetles, and Grasshoppers.
% V4-CHECK EXTENSIONS: the Mosquito-free construction and correctness
% argument are now supplied in BeetleStacks.tex. Verify the replacement
% invariants and the royal-region arguments inherited by this extension.

%% file: Sec/Preliminaries.tex
\section{Preliminaries}
\label{sec:preliminaries}

\subsection{The target problem: \texorpdfstring{\singleQueenHive{}}{single-Queen n-Hive}}
\label{sec:hive}

\Hive{} has no board: its hexagonal pieces form the playing surface. To encode a position, we
regard them as occupying cells of the infinite hexagonal grid $\mathcal{H}$. The cell-adjacency
graph of this grid is the triangular lattice. Each cell has six neighbours, and each straight
lattice step points in one of six directions. The grid imposes no boundary on play.

Let $\mathcal{C}=\{W,B\}$ denote White and Black. The piece types of modern \Hive{} with its
official expansions are
\[
  \mathcal{T}=\{Q,A,\mathrm{Bt},G,S,M,P,L\},
\]
standing for Queen Bee, Ant, Beetle, Grasshopper, Spider, Mosquito, Pillbug, and Ladybug.
We use the movement rules of the base game and its official expansions.
An instance explicitly lists a common inventory of $n$ pieces for each
player. If $m_X$ is the number of pieces of type $X$ in that inventory, then
$\sum_{X\in\mathcal T}m_X=n$ and $m_Q=1$. White and Black have the same multiplicity $m_X$ for
every type $X$; pieces not yet placed remain in their respective reserves.

The simulation and punishment mechanisms use Queen Bees, Grasshoppers,
Beetles, and Mosquitoes. We recall their movement rules below.
Additional inventory pieces, when required, are placed in an
immobile appendage described in Remark~\ref{rem:unused-inventory}. We write $X^C$ for a piece of type $X$ controlled by $C$, so $G^W$ is a White
Grasshopper and $Q^B$ the Black Queen. Subscripts distinguish copies, and $*$ is a colour
wildcard used when a piece's colour is irrelevant. 
In the single-Queen construction, $Q^C$ is player $C$'s Queen and
$F^C$ its unique initially empty neighbour, the \emph{mate cell}
(Section~\ref{sec:royal}). 
Pieces may climb onto others, forming stacks. A \emph{stack} is a finite
nonempty sequence $\sigma=(p_1,\ldots,p_k)$, listed from bottom to top.
It is legal if every $p_i$ with $i>1$ is a Beetle or a Mosquito.
A stack of length one is \emph{flat}. Only the top piece can move or
use an ability; covered pieces cannot act. For placement purposes,
a stack has the colour of its top piece.

Players alternate turns. A player may place a piece from their reserve,
move an uncovered piece they control, or use an available special ability.
Pieces are neither captured nor removed.

A player cannot move pieces before placing their Queen, which must be
placed by their fourth turn. After each player's first
placement, every piece they place must touch at least one piece of their
own colour and none of the opponent's colour.

Movement is also subject to two general rules, illustrated in
Figure~\ref{fig:rules-global}(b)--(d).

\begin{description}[itemsep=2pt]
\item[One Hive rule.]
We also refer to the set of occupied cells as the \emph{hive}; it must remain connected
throughout a move. In particular, a ground-level
piece cannot leave its cell if doing so disconnects the hive, even when its destination would
reconnect it. This applies to jumps as well as slides.
We call a ground-level piece a \emph{cut piece} if removing it
disconnects the hive.

\item[Freedom to Move rule.]
A sliding piece must fit through every gap it crosses and remain in
contact with the hive. For a ground-level slide from $x$ to an adjacent
empty cell $y$, exactly one of the two common neighbours of $x$ and $y$
must be empty: one leaves room to pass, while the other maintains contact.
For a Beetle step, let $h_x$ and $h_y$ be the source and destination
heights before the move, with height zero for an empty cell.
The travel level is $\max\{h_x,h_y+1\}$; the two common neighbours
block the step if both reach that level.
Grasshopper jumps, including those copied by Mosquitoes, are not
subject to this sliding restriction.
\end{description}

If no legal action is available, the player passes.
Every piece is already placed in the instances produced by our reduction,
so placement does not occur there.

We now describe how the four piece types used by the simulation move;
Figure~\ref{fig:rules-movement} summarizes their available destinations.
\begin{description}[leftmargin=*,itemsep=2pt,font=\normalfont\itshape]
\item[Queen Bee.]
A Queen Bee slides one cell to an adjacent empty cell.

\item[Grasshopper.]
A Grasshopper jumps rather than slides. It chooses one of the six lattice directions, jumps
over a nonempty contiguous run of occupied cells, and lands on the first empty cell beyond
it. It cannot jump in a direction whose immediately adjacent cell is empty.

\item[Beetle.]
A Beetle moves one cell per turn. On the ground it slides to an adjacent empty cell or climbs
onto an adjacent occupied one. Above the hive it slides between neighbouring stack tops and
may descend into an adjacent empty cell. Until uncovered, a piece it covers cannot move,
use an ability, or supply an ability to an adjacent Mosquito.

\item[Mosquito.]
At ground level, a Mosquito may copy an ability of a neighbouring
piece at the start of its move, regardless of colour. Only the top
piece of a neighbouring stack supplies an ability, and a Mosquito
cannot copy another Mosquito, including an elevated one.
Consequently, a ground-level Mosquito whose occupied neighbours are
all Mosquitoes copies no movement ability and has no legal move of its own.
Above ground level it moves as a Beetle until it returns to the ground.

\end{description}
\paragraph{The Pillbug and the preceding turn.}
Instead of moving itself, an uncovered Pillbug may relocate an adjacent
unstacked piece of either colour: it lifts that piece onto itself
and places it in an empty adjacent cell, subject to One Hive and
the ability's movement restrictions \cite{hivepillbugv4}.
It cannot relocate the piece moved or placed on the immediately
preceding turn. A piece relocated by this ability cannot move,
use an ability, or be relocated again on the next turn.
The same restrictions apply when a Mosquito copies the Pillbug.
Thus the game state records the last piece moved or placed,
together with a flag indicating relocation by this ability;
after a pass, this record is empty.

\begin{figure}[tbp]
  \centering
  \begin{minipage}[t]{0.235\linewidth}
    \centering
    \includegraphics[height=2.9cm,keepaspectratio]{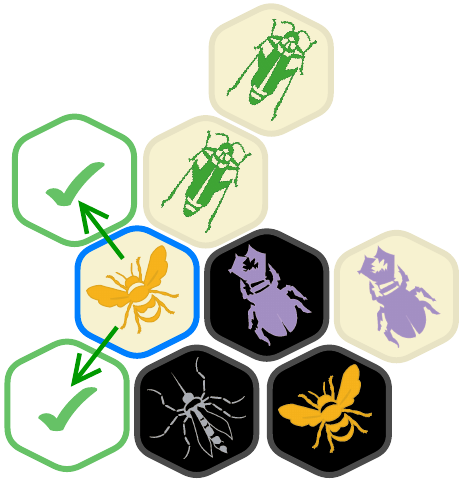}
    \par\smallskip{(a) Queen Bee}
  \end{minipage}\hfill
  \begin{minipage}[t]{0.235\linewidth}
    \centering
    \includegraphics[height=2.9cm,keepaspectratio]{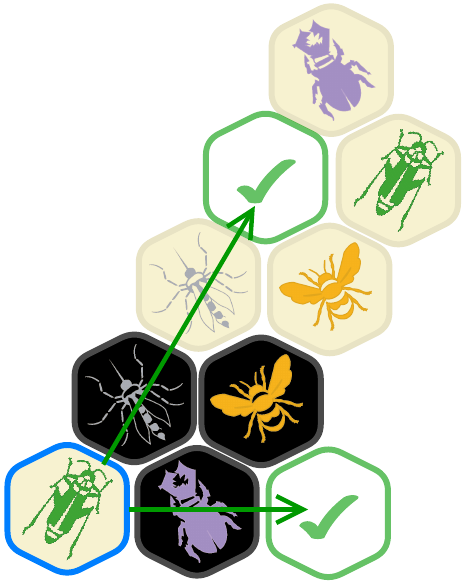}
    \par\smallskip{(b) Grasshopper}
  \end{minipage}\hfill
  \begin{minipage}[t]{0.235\linewidth}
    \centering
    \includegraphics[height=2.9cm,keepaspectratio]{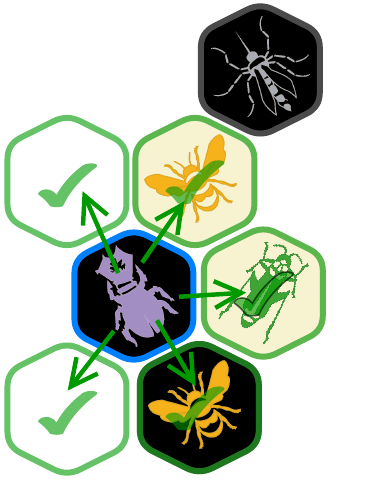}
    \par\smallskip{(c) Beetle}
  \end{minipage}\hfill
  \begin{minipage}[t]{0.235\linewidth}
    \centering
    \includegraphics[height=2.9cm,keepaspectratio]{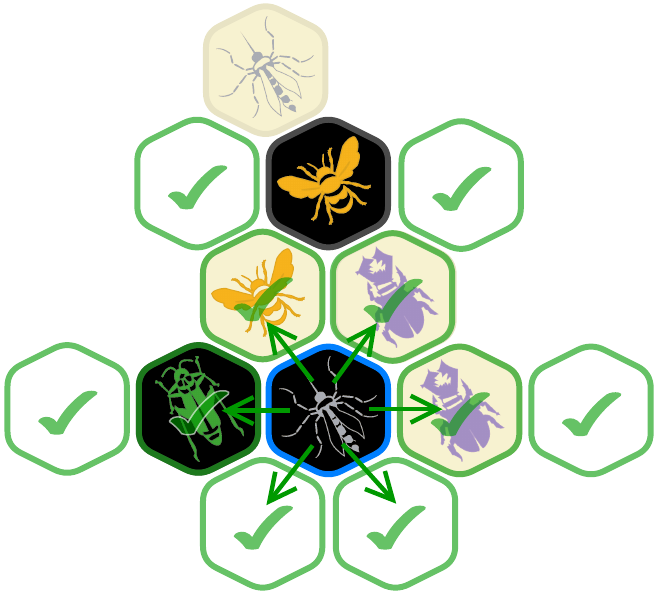}
    \par\smallskip{(d) Mosquito}
  \end{minipage}
  \caption{The four piece types used by the simulation.
  Arrows and marked cells show their available destinations. A Mosquito
  selects an ability from a neighbouring non-Mosquito top piece at the
  start of its move.}
  \label{fig:rules-movement}
  \label{fig:rules-queen}
  \label{fig:rules-grasshopper}
  \label{fig:rules-beetle}
  \label{fig:rules-mosquito}
\end{figure}

\begin{figure}[H]
  \centering
  \begin{minipage}[t]{0.235\linewidth}
    \centering
    \includegraphics[width=\linewidth,height=2.8cm,keepaspectratio]
      {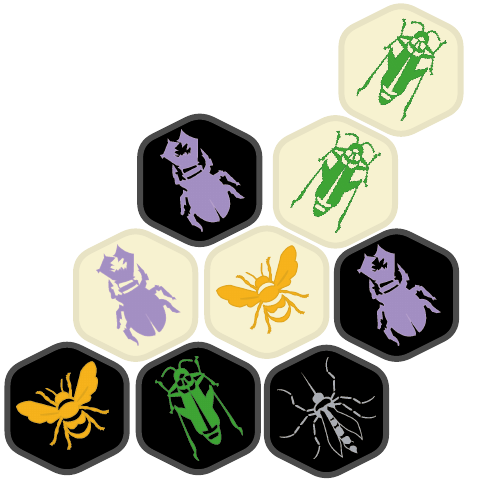}
    \par\smallskip (a) Winning condition
  \end{minipage}\hfill
  \begin{minipage}[t]{0.235\linewidth}
    \centering
    \includegraphics[width=\linewidth,height=2.8cm,keepaspectratio]
      {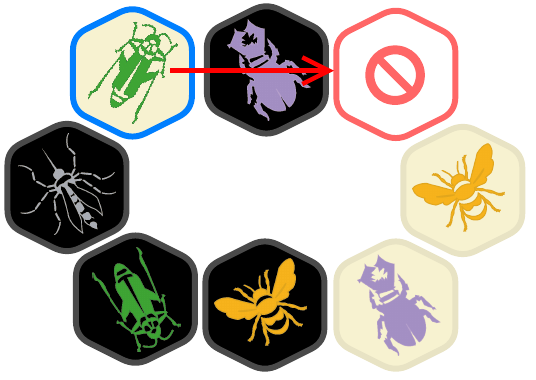}
    \par\smallskip (b) One Hive
  \end{minipage}\hfill
  \begin{minipage}[t]{0.235\linewidth}
    \centering
    \includegraphics[width=\linewidth,height=2.8cm,keepaspectratio]
      {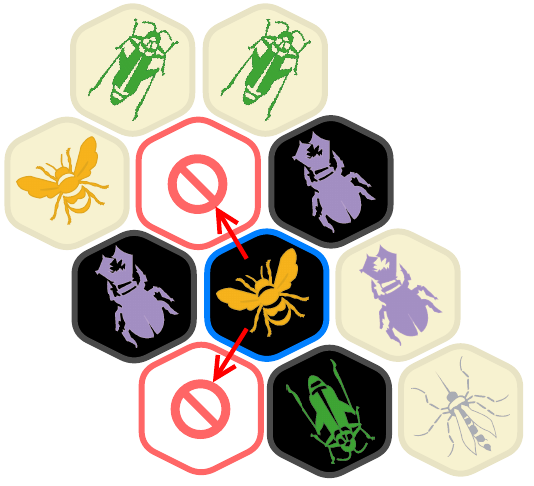}
    \par\smallskip (c) Freedom to Move
  \end{minipage}\hfill
  \begin{minipage}[t]{0.235\linewidth}
    \centering
    \includegraphics[width=\linewidth,height=2.8cm,keepaspectratio]
      {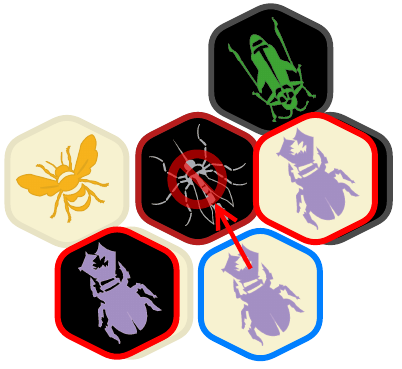}
    \par\smallskip (d) Elevated gate
  \end{minipage}
  \caption{Winning and movement restrictions. (a) A surrounded Queen loses.
  (b) One Hive forbids disconnecting moves. (c) Freedom to Move blocks a
  slide through a closed gap. (d) An elevated gate blocks a Beetle slide over the hive.}
  \label{fig:rules-global}
  \label{fig:rules-winning}
\end{figure}

\begin{definition}
\label{def:position}
Given the common inventory $\mathbf m=(m_X)_{X\in\mathcal T}$, a
\emph{\singleQueenHive{} position} is a finite partial function
$P:\mathcal H\rightharpoonup\mathrm{Stacks}$ such that:
\begin{enumerate}[label=\textup{(\arabic*)}]
\item every stack of $P$ is legal;
\item for each $C\in\{W,B\}$ and $X\in\mathcal T$, at most $m_X$ pieces of type $X$
      controlled by $C$ occur in $P$;
\item the support
      $\operatorname{supp}(P)=\{h\in\mathcal H:P(h)\text{ is defined}\}$ is connected in the
      triangular lattice;
\item for each $C\in\{W,B\}$, exactly one piece of $P$ has type $Q$ and colour $C$.
\end{enumerate}
Pieces in the inventory but absent from $P$ remain in reserve.
\end{definition}

The inventory condition $m_Q=1$ enforces the single-Queen model;
the last condition above requires both Queens to be on the hive.
The decision question concerns play from the specified position.

A position describes the arrangement of pieces. The full \emph{game state}
also includes the player to move and the preceding-turn record
$r=(p,b)$: $p$ identifies the last piece moved or placed, and $b$
indicates whether it was relocated by the Pillbug ability.
A pass gives the empty record. This is the only additional memory
required by the rules described above. In the simulated part of
our construction the record has no effect, since no Pillbug
ability is available.

A position contains at most $2n$ pieces, so its connected support has
diameter at most $2n-1$ in the triangular lattice. After translation,
the cells, ordered stack contents, player to move, and record $r$
can all be encoded using polynomially many bits in $n$.

\paragraph{Winning condition, draws, and infinite play.}
A player loses as soon as all six cells adjacent to their Queen are
occupied, regardless of the surrounding pieces' types, colours, or heights.
This condition is illustrated in Figure~\ref{fig:rules-winning}(a).
If both Queens are surrounded, the outcome is a draw.
These conventions also determine the outcome of an already terminal input.

For example, suppose six pieces form a straight
line in the order
\[
M^W,M^B,Q^W,Q^B,M^W,M^B
\]
and both reserves are empty: all six pieces are immobile and
both players pass indefinitely.
More generally, play may continue forever without either player winning. We
treat every such infinite play as a draw; a draw is not a forced win for
White.\\
The six-piece example also illustrates the
two-Mosquito tail, a \emph{pinning technique} used repeatedly throughout the proof.
Mosquitoes $M_1,M_2$ extending a straight line from a flat non-Pillbug piece
$p$, with no other occupied neighbours, pin all three pieces: $p$ and $M_1$
are cut pieces, $M_2$ cannot copy $M_1$, and $M_1$ cannot obtain from $p$ a
relocation ability that moves $M_2$.

\begin{definition}
\label{def:decision}
An instance of \textit{single-Queen $n$-Hive} specifies a unary size parameter
$1^n$, an explicitly listed common inventory
$\mathbf m=(m_X)_{X\in\mathcal T}$ satisfying
$\sum_{X\in\mathcal T}m_X=n$ and $m_Q=1$, a position $P$ for that
inventory, the player $C\in\{W,B\}$ to move, and the preceding-turn
record $r$. Both players receive the same inventory. The corresponding decision problem asks whether White has a strategy
that forces a win from the specified game state.
\end{definition}

\begin{definition}
\label{def:multi-queen-variants}
The two \emph{multi-Queen variants} keep the inventory and movement
rules of \singleQueenHive{}, but allow any number $m_Q\ge1$ of
Queens per player, all already on the hive. Following the two loss conditions considered by Andel and Rin~\cite{andelrin2026hive},
a player loses when any one of their Queens is surrounded in the first variant.
In the second variant, a player loses when all of their Queens are surrounded. If both players satisfy the loss condition, the outcome
is a draw. All other rules and the forced-win question remain unchanged.
\end{definition}

\subsection{The source problem: \texorpdfstring{\UCL}{Unbounded Two-Player Constraint Logic}}
\label{sec:2cl}

We reduce from unbounded two-player Constraint Logic
\cite{demainehearn2008constraintlogic}. A \emph{constraint graph} is an undirected graph
$\mathcal{G}=(V,E)$ whose edges have weights in $\{1,2\}$ and whose vertices have minimum inflow
requirements. Here every vertex requires inflow at least $2$. A \emph{configuration} orients
every edge; it is \emph{legal} if the total weight directed into each vertex meets that
requirement. Reversing one edge is legal when the resulting configuration is legal.

We use AND/OR constraint graphs. An \emph{AND vertex} has two weight-$1$ edges (red) and one
weight-$2$ edge (blue). Its blue edge can point outward if and only if both red edges point
inward. An \emph{OR vertex} has three blue edges; any one can point outward if and only if at
least one of the other two points inward. Figure~\ref{fig:2cl-vertices} illustrates both
constraints. Every vertex has degree $3$.

\begin{figure}[h]
\centering
\begingroup
  \definecolor{clred}{RGB}{255,0,0}
  \definecolor{clblue}{RGB}{0,0,255}
  \tikzset{
    cl outline/.style={line cap=round,line join=round},
    cl red/.style={cl outline,draw=clred,line width=3.2pt},
    cl blue/.style={cl outline,draw=clblue,line width=5pt},
    cl core/.style={line cap=round,line join=round}
  }

  % Edge and chevron are drawn in the SAME two passes:
  % first both coloured outlines, then both white/black interiors.
  % This removes the coloured seam at the arrow--edge junction.
  % Arguments: weight, owner colour, edge angle, arrow position,
  %            direction (+1 outward, -1 inward).
  \def\CLEdge#1#2#3#4#5{%
    \ifnum#1=1\relax
      \def\CLStyle{cl red}%
      \def\CLCoreWidth{1.4pt}%
      \def\CLHeadLength{.29}%
      \def\CLHeadWidth{.17}%
    \else
      \def\CLStyle{cl blue}%
      \def\CLCoreWidth{2.2pt}%
      \def\CLHeadLength{.32}%
      \def\CLHeadWidth{.20}%
    \fi
    \begin{scope}[rotate=#3]
      \draw[\CLStyle]
        (0,0) -- (1.25,0)
        ({#4-#5*\CLHeadLength},\CLHeadWidth)
          -- (#4,0)
          -- ({#4-#5*\CLHeadLength},-\CLHeadWidth);
      \draw[cl core,draw=#2,line width=\CLCoreWidth]
        (0,0) -- (1.25,0)
        ({#4-#5*\CLHeadLength},\CLHeadWidth)
          -- (#4,0)
          -- ({#4-#5*\CLHeadLength},-\CLHeadWidth);
    \end{scope}%
  }

  \def\CLVertex{%
    \filldraw[fill=gray!65,draw=gray!80,line width=.4pt]
      (0,0) circle (3.1pt);
  }

  % Equal-width panels spread the drawings across the available line.
  % Scale affects the geometry; label font sizes remain unchanged.
  \def\CLPanel#1#2{%
    \begin{minipage}[t]{.245\linewidth}
      \centering
      \begin{tikzpicture}[scale=1.15]
        \path[use as bounding box]
          (-1.30,-1.35) rectangle (1.30,1.40);
        #2
        \CLVertex
        \node at (0,-1.15) {#1};
      \end{tikzpicture}%
    \end{minipage}%
  }

  \noindent
  \CLPanel{(a) AND}{%
    \CLEdge{1}{white}{210}{.29}{-1}
    \CLEdge{1}{black}{330}{.29}{-1}
    \CLEdge{2}{white}{ 90}{.85}{ 1}
  }\hfill
  \CLPanel{(b) AND}{%
    \CLEdge{1}{white}{210}{.29}{-1}
    \CLEdge{1}{black}{330}{.89}{ 1}
    \CLEdge{2}{white}{ 90}{.29}{-1}
  }\hfill
  \CLPanel{(c) OR}{%
    \CLEdge{2}{black}{210}{.29}{-1}
    \CLEdge{2}{white}{330}{.89}{ 1}
    \CLEdge{2}{black}{ 90}{.85}{ 1}
  }\hfill
  \CLPanel{(d) OR}{%
    \CLEdge{2}{black}{210}{.29}{-1}
    \CLEdge{2}{white}{330}{.29}{-1}
    \CLEdge{2}{black}{ 90}{.85}{ 1}
  }
\endgroup
\caption{Four legal orientations of AND and OR vertices. Red and blue outlines
indicate edge weights $1$ and $2$; white and black interiors indicate control
by White and Black. Arrowheads show the orientations. Each vertex requires
inflow at least $2$.}
\label{fig:2cl-vertices}
\end{figure}

Each edge is controlled by exactly one player: $E_W$ and $E_B$ partition $E$. On their turn,
player $C$ may reverse one edge of $E_C$, provided the configuration remains legal.
Each player has a target edge, $e_W\in E_W$ and $e_B\in E_B$, and wins as soon as they
reverse their own target. Edges may be reversed arbitrarily many times. An infinite play
in which neither player wins is a draw, not a forced win for White.

\begin{definition}
\label{def:2cl}
An instance of \UCL{} specifies an AND/OR constraint graph $\mathcal{G}$,
an initial legal configuration $\mathcal{O}$, a partition $E_W,E_B$ of its
edges, target edges $e_W\in E_W$ and $e_B\in E_B$, and the player
$C_0\in\{W,B\}$ to move. The corresponding decision problem asks whether
White has a strategy that forces a win from the specified game state.
\end{definition}

\begin{remark}[Pass convention]
\label{rem:no-pass}
We allow players to pass even when a legal edge reversal is available.
This convention is compatible with the reduction of Demaine and Hearn
from $G_6$, where passing is explicitly permitted
\cite[\S6.2]{demainehearn2008constraintlogic}.
A pass in $G_6$ can also be realized by flipping a fresh variable
belonging to the player and occurring only in a tautological clause,
leaving that player's strategy unchanged.
In 2CL, such a move can be implemented by a separate component
containing no target edge: join two free edge terminators
\cite[Figure~5(a)]{demainehearn2008constraintlogic}
and assign all edges of the resulting component to the same player.
Repeatedly reversing the joining edge then represents passing.
During variable play, these moves reproduce the passes of $G_6$;
during locking and formula evaluation, passing neither answers
a forcing threat nor accelerates a winning path, so the same strategy
and timing arguments apply.
We adopt explicit passes for convenience and implement them in
\Hive{} using the gadgets of Section~\ref{sec:pass}.
\end{remark}
% V4-CHECK SOURCE: identify how the Demaine--Hearn construction represents
% an unchanged assignment and check the effect of passes in its locking
% and winning phases. The simulation is not globally one move for one move:
% locking and formula evaluation use sequences of edge reversals.
% This concerns the precise source theorem below, not the Hive pass gadget.

\begin{theorem}
\label{thm:2cl-exptime}
\UCL{} is EXPTIME-complete, even for planar AND/OR constraint graphs.
\end{theorem}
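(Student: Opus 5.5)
The statement is essentially the theorem of Demaine and Hearn \cite{demainehearn2008constraintlogic}. The plan is to recall their proof and check that it survives the conventions fixed above, notably the pass convention of Remark~\ref{rem:no-pass}. The proof has two directions.

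\emph{Membership in EXPTIME.} A game state of \UCL{} consists of an orientation of the $|E|$ edges together with the player to move. There are therefore at most $2^{|E|+1}$ states, and each has polynomially many successors: the legal reversals plus the pass. Deciding whether White forces a win is then a reachability game on an explicit graph of exponential size. We compute White's attractor to the set of positions in which White has just reversed $e_W$, while excluding positions where Black has already reversed $e_B$. The computation is the standard backward induction: a White-to-move state is added if some successor is already in the set, and a Black-to-move state is added if all successors are. Each round takes exponential time, and there are at most exponentially many rounds. Infinite plays count as draws, and the attractor correctly labels them as not won by White, since it consists exactly of the states from which White can force reaching the target in finitely many moves. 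This direction is routine.

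\emph{Hardness.} Here we follow Demaine and Hearn and reduce from Stockmeyer and Chandra's game $G_6$. In $G_6$, each player owns a set of Boolean variables and a DNF-style formula, and on their turn may flip one of their own variables (or pass). A player wins when their formula becomes true. The construction has four ingredients:
\begin{enumerate}
\item variable gadgets, consisting of edges controlled by the owning player, whose orientation encodes the truth value;
\item latch and lock gadgets, which force a move to correspond to a single variable flip and allow it to be committed;
\item formula evaluation, where AND and OR vertices propagate satisfaction of a term toward the player's target edge $e_C$;
\item a timing argument, showing that a player who has satisfied their formula reverses their target before the opponent can respond or race to their own target.
\end{enumerate}
Every gadget is built from AND and OR vertices with inflow requirement $2$, as in Definition~\ref{def:2cl}. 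Inflow conditions that are not AND/OR are replaced by standard degree-$3$ substitutes: CHOICE, FANOUT, and free edge terminators.

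Planarity comes from a crossover gadget built from AND/OR vertices. This gadget is well known for one-player nondeterministic constraint logic. For the two-player setting, one must check that a crossover placed on an edge controlled by one player cannot be exploited by the opponent. Demaine and Hearn handle this by noting that crossovers occur only in the formula-evaluation part of the construction, where all edges belong to a single player.

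\emph{Main obstacle.} The delicate step is reconciling the precise source theorem with the conventions used here: explicit passes, targets reversed by their owners, and infinite play counted as a draw. My first approach would be to take the Demaine--Hearn reduction from $G_6$ with passes permitted, as cited in Remark~\ref{rem:no-pass}. I would then verify that their winning-race argument holds when either side may pass at any time. A pass never satisfies a formula and never shortens a propagation path, so White's winning strategy in $G_6$ translates to a winning strategy in the constraint graph, and Black's non-losing strategies translate likewise. If any doubt remains, passes can be simulated explicitly by a separate target-free component, as described in the remark. In that case the result follows directly from their theorem for the no-explicit-pass model.
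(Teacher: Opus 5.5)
Your main route matches the paper's. The paper does not reprove the theorem; it cites Demaine and Hearn. Remark~\ref{rem:no-pass} then supplies the same compatibility check you propose: passes reproduce $G_6$ passes during variable play and are useless during locking and formula evaluation. A target-free component, two joined free edge terminators, is available to realize a pass. Your membership argument is also fine, and it is the same backward-propagation argument the paper gives for Hive in Lemma~\ref{lem:hive-membership-final}. To make ``White has just reversed $e_W$'' a property of states, note that White-win states are exactly those in which $e_W$ differs from its initial orientation.

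The fallback in your last paragraph, however, runs in the wrong direction. An always-reversible target-free component shows that the pass-enabled game on $\mathcal G$ has the same outcome as the no-pass game on $\mathcal G$ plus that component. That reduces the pass model \emph{to} the no-pass model. It therefore cannot, by itself, transfer hardness \emph{from} a theorem about the no-pass model. For that, the hard instances of the theorem must either already contain such a component for each player, which makes explicit passes redundant, or keep their outcome when one is added. The latter is exactly the timing check you hoped to bypass. So the fallback is not an independent route, and your argument rests entirely on that check.

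When you carry out the check, cover the defender's side as well. Your justification only says that a pass never helps the passer make progress: it satisfies no formula and shortens no path. The paper also uses that, during locking and evaluation, a pass never answers a forcing threat. What must really be excluded is that the original argument relies on a player being compelled to move.
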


For the reduction to \Hive{}, we must represent legal edge reversals,
enforce the AND and OR constraints, and implement passes.

\subsection{Overview of the proof}
\label{sec:overview}

We prove that \singleQueenHive{} is EXPTIME-complete. For hardness, we reduce
from the pass-enabled 2CL game defined in Section~\ref{sec:2cl};
its EXPTIME-completeness is stated in Theorem~\ref{thm:2cl-exptime}.
The reduction has two interacting layers: at ground level, Grasshoppers
encode edge orientations and the One Hive rule enforces the AND and OR
constraints. Above the hive, Mosquitoes acting as Beetles provide a response to moves that
leave this simulation.

\paragraph{Preparing the construction.}
For the source graph $\mathcal{G}$, let $N=|V(\mathcal{G})|+|E(\mathcal{G})|$.
We map an orthogonal drawing to the triangular lattice and scale it
by $s$ to separate gadgets and reserve their jump rays and landings
(Section~\ref{sec:embedding}). Each source vertex is replaced by an AND or OR gadget that uses the One Hive rule to enforce the corresponding inflow constraint. One straight segment of each edge
carries its Grasshopper gadget; the remaining segments extend the
incident vertex structures. Two outer-face gadgets implement source
passes without changing any encoded orientation (Section~\ref{sec:pass}).

\paragraph{Faithful simulation.}
Before either target reversal, a \emph{faithful configuration} has
edge and pass Grasshoppers at designated resting cells and all other
pieces in their prescribed positions, apart from neutralized lateral
stacks and harmless royal-guard moves (Section~\ref{sec:royal}).
A \emph{faithful move} represents a legal source-edge reversal or
pass; harmless royal-guard moves also represent passes.
A lateral lid move and its immediate prescribed countercover form a
\emph{neutral interruption}. Any other move from a faithful
configuration is a \emph{deviation}; neutral interruptions and their
prescribed replies are excluded.

A \emph{faithful simulation} consists of faithful moves and completed
neutral interruptions up to the first target reversal. Each completed
interruption uses one turn per player but changes no encoded source
orientation. After the prescribed countercover, the source position
and player to move are as they were before the interruption, so the
source strategy can be resumed.

\paragraph{The route to victory and punishing deviations.}
A target reversal sends its Grasshopper into a \emph{royal region}
containing the opponent's Queen. The prescribed traversal has
$\Theta(s)$ rows of length $\Theta(s)$ and requires $\Omega(s^2)$
moves before the final winning move (Section~\ref{sec:royal}).
Traps release opposing Mosquito rovers, which follow the connected
edge--connector network to the deviator's Queen in $O(N^2s)$ moves
(Sections~\ref{sec:deviations-traps} and~\ref{sec:correctness-assembly}).
For a first deviation before either target reversal, the deviator
still faces the full traversal. Choosing $s=\Theta(N^3)$ with a
sufficiently large constant makes punishment faster; the global
argument accounts for intervening responses and excludes a faster
win or an immediate draw.
% V4-CHECK TIMING: P < W compares a punishment with a full royal traversal.
% It does not bound the time remaining after a traversal has already begun.

\paragraph{Correctness and Mosquito removal.}
The full strategy correspondence, including deviations and infinite
play, is proved in Section~\ref{sec:reduction-correctness}. The
construction uses $O(N^2s+s^2)$ pieces and is computable in polynomial
time, giving EXPTIME-hardness. Membership in EXPTIME is proved
separately from the finite game graph of positions normalized up to
translation (Lemma~\ref{lem:hive-membership-final}).
Finally, we strengthen the reduction by removing Mosquitoes
(Section~\ref{sec:beetle-stacks}): most of them serve only as
immobile structural supports and we replace these supports with
Beetle stacks. This yields the same
hardness result using only Queens, Grasshoppers, and Beetles.

\paragraph{Terminology.}
We introduce several terms used repeatedly throughout the reduction,
including terminology specific to \Hive{}.
The adjective \emph{royal} refers to pieces or structures belonging
to the royal regions introduced in Section~\ref{sec:royal}.
A piece is \emph{pinned} when it cannot legally move in the position
under consideration; this does not by itself prevent abilities
that leave the piece in place.
A \emph{lid} is the top piece of a covering stack.
To \emph{countercover} a displaced lid means to place an opposing
piece on top of it, preventing it from acting while covered.
A \emph{guard} is a piece assigned to respond to a deviation and
neutralize its effect.
A \emph{rover} is a piece released to carry out punishment by
travelling above the hive.

%% file: Sec/TheReduction.tex
\section{The reduction: embedding and gadgets}
\label{sec:reduction-gadgets}

We describe how to transform
a 2CL instance into a \Hive{} position.

Many pieces used solely as structural supports may belong to either player;
we state explicitly when ownership is irrelevant. In the figures, we draw
such pieces in the colour assigned to their structure: purple for vertex
gadgets and gray for connectors.

\subsection{Geometric embedding}
\label{sec:embedding}

We obtain the embedding of $\mathcal{G}$ in Hive by drawing $\mathcal{G}$ orthogonally, transferring the drawing to
the triangular lattice of Hive-cell centres, and stretching it to make room
for the gadgets. Since $\mathcal{G}$ is planar and has maximum degree three,
Biedl and Kant~\cite{biedlKant1998} give an orthogonal drawing of each
connected component on an $O(N)\times O(N)$ square grid, with at most two bends
per edge. Placing the components side by side preserves these bounds. Every
edge then has length $O(N)$, and all edges together have total length $O(N^2)$.

Let $a,b$ be unit vectors of the triangular lattice, with $b$ obtained by
rotating $a$ counterclockwise through $60^\circ$. The invertible linear map
$T(x,y)=xa+yb$ is an isomorphism from the point lattice $\mathbb Z^2$ to
$\mathbb Za+\mathbb Zb$. It is also a graph isomorphism from the square grid to the subgrid formed
by the $a$- and $b$-directions, before adding the third direction $b-a$.
As an orientation-preserving map, it preserves
planarity and the cyclic order of edges; because $a$ and $b$ are unit vectors,
it also preserves grid-path lengths. Thus we obtain in polynomial time a
planar drawing of $\mathcal{G}$ on the triangular lattice, with at most two bends per
edge and total length $O(N^2)$. Figure~\ref{fig:embedding-stages} shows the
steps.

\begin{figure}[H]
\centering
\begin{tikzpicture}[
  embedge/.style={
    draw=blue!65!black,
    line width=1.1pt,
    line cap=round,
    line join=round
  },
  embgrid/.style={draw=gray!32,line width=0.35pt},
  embbend/.style={fill=orange!85!black},
  every node/.style={font=\scriptsize}
]

  \node[font=\footnotesize] at (1.875,3.38)
    {(a) Graph $\mathcal{G}$};
  \node[font=\footnotesize] at (5.875,3.38)
    {(b) Orthogonal drawing};
  \node[font=\footnotesize] at (9.875,3.38)
    {(c) Linear transformation};
  \node[font=\footnotesize] at (13.875,3.38)
    {(d) Triangular lattice};

  % (a) A small planar subcubic graph.
  \draw[embedge]
    (0.45,1.55)--(1.875,2.63)--(3.30,1.55)
    --(1.875,0.40)--cycle;
  \draw[embedge] (0.45,1.55)--(3.30,1.55);

  \fill (0.45,1.55) circle[radius=0.06cm];
  \fill (3.30,1.55) circle[radius=0.06cm];
  \fill (1.875,2.63) circle[radius=0.06cm];
  \fill (1.875,0.40) circle[radius=0.06cm];

  \node[anchor=east,xshift=-4pt] at (0.45,1.55) {$u$};
  \node[anchor=west,xshift=4pt] at (3.30,1.55) {$v$};
  \node[above,yshift=3pt] at (1.875,2.63) {$p$};
  \node[below,yshift=-3pt] at (1.875,0.40) {$q$};

  % (b) Orthogonal drawing.
  \begin{scope}[shift={(4,0)}]
    \begin{scope}[shift={(0.62,0.40)}]
      \begin{scope}[x=0.55cm,y=0.55cm]

        \foreach \i in {0,...,4} {
          \draw[embgrid] (\i,0)--(\i,4);
          \draw[embgrid] (0,\i)--(4,\i);
        }

        \draw[embedge] (0,2)--(0,4)--(2,4);
        \draw[embedge] (2,4)--(4,4)--(4,1);
        \draw[embedge] (0,2)--(2,2)--(2,1)--(4,1);
        \draw[embedge] (0,2)--(0,0)--(2,0);
        \draw[embedge] (2,0)--(4,0)--(4,1);

        \foreach \x/\y in {0/4,4/4,0/0,4/0,2/2,2/1}
          \fill[embbend] (\x,\y) circle[radius=0.045cm];

        \foreach \x/\y in {0/2,4/1,2/4,2/0}
          \fill (\x,\y) circle[radius=0.06cm];

        \node[anchor=east,xshift=-4pt] at (0,2) {$u$};
        \node[anchor=west,xshift=4pt] at (4,1) {$v$};
        \node[above,yshift=3pt] at (2,4) {$p$};
        \node[below,yshift=-3pt] at (2,0) {$q$};

      \end{scope}
    \end{scope}
  \end{scope}

  % (c) Linear image, preserving the lengths of the grid sides.
  \begin{scope}[shift={(8,0)}]
    \begin{scope}[shift={(0.40,0.40)}]
      \begin{scope}[
        x={(0.55cm,0cm)},
        y={(0.275cm,0.476314cm)}
      ]

        \foreach \i in {0,...,4} {
          \draw[embgrid] (\i,0)--(\i,4);
          \draw[embgrid] (0,\i)--(4,\i);
        }

        \draw[embedge] (0,2)--(0,4)--(2,4);
        \draw[embedge] (2,4)--(4,4)--(4,1);
        \draw[embedge] (0,2)--(2,2)--(2,1)--(4,1);
        \draw[embedge] (0,2)--(0,0)--(2,0);
        \draw[embedge] (2,0)--(4,0)--(4,1);

        \foreach \x/\y in {0/4,4/4,0/0,4/0,2/2,2/1}
          \fill[embbend] (\x,\y) circle[radius=0.045cm];

        \foreach \x/\y in {0/2,4/1,2/4,2/0}
          \fill (\x,\y) circle[radius=0.06cm];

        \node[anchor=east,xshift=-4pt] at (0,2) {$u$};
        \node[anchor=west,xshift=4pt] at (4,1) {$v$};
        \node[above,yshift=3pt] at (2,4) {$p$};
        \node[below,yshift=-3pt] at (2,0) {$q$};

      \end{scope}
    \end{scope}
  \end{scope}

  % (d) Triangular lattice, at the same scale as panel (c).
  \begin{scope}[shift={(12,0)}]
    \begin{scope}[shift={(0.40,0.40)}]
      \begin{scope}[
        x={(0.55cm,0cm)},
        y={(0.275cm,0.476314cm)}
      ]

        % The two transformed grid-line families.
        \foreach \i in {0,...,4} {
          \draw[embgrid] (\i,0)--(\i,4);
          \draw[embgrid] (0,\i)--(4,\i);
        }

        % The third family completes the triangular lattice.
        \foreach \i in {0,...,3} {
          \foreach \j in {0,...,3} {
            \draw[embgrid] ({\i+1},\j)--(\i,{\j+1});
          }
        }

        \draw[embedge] (0,2)--(0,4)--(2,4);
        \draw[embedge] (2,4)--(4,4)--(4,1);
        \draw[embedge] (0,2)--(2,2)--(2,1)--(4,1);
        \draw[embedge] (0,2)--(0,0)--(2,0);
        \draw[embedge] (2,0)--(4,0)--(4,1);

        \foreach \x/\y in {0/4,4/4,0/0,4/0,2/2,2/1}
          \fill[embbend] (\x,\y) circle[radius=0.045cm];

        \foreach \x/\y in {0/2,4/1,2/4,2/0}
          \fill (\x,\y) circle[radius=0.06cm];

        \node[anchor=east,xshift=-4pt] at (0,2) {$u$};
        \node[anchor=west,xshift=4pt] at (4,1) {$v$};
        \node[above,yshift=3pt] at (2,4) {$p$};
        \node[below,yshift=-3pt] at (2,0) {$q$};

      \end{scope}
    \end{scope}
  \end{scope}

\end{tikzpicture}
    \caption{From $\mathcal{G}$ to a triangular-lattice drawing: the graph, an
    orthogonal drawing, its image under the linear map $T$, with both grid
    directions transformed, and the same drawing with the third lattice
    direction added. Black dots are vertices of $\mathcal{G}$; orange dots mark bends.}
    \label{fig:embedding-stages}
\end{figure}

The \emph{AND and OR cores} are the constant-size central parts of the
vertex gadgets. Their terminal contacts enforce the inflow constraints
through the One Hive rule (Section~\ref{sec:vertex-gadgets}).
Fix a constant-size region large enough for a core and its ports,
or for one of the required straight and turn patterns.
Reserve a region of this size at every mapped grid point within
the drawing's bounding box. A source vertex receives its core;
an intermediate point of an arc receives a transit pattern.
Points off the drawing need no pieces. These are reservations
of space, not instructions to fill every cell.
The margin is chosen after fixing the core, port, and turn patterns.
It preserves their local contacts while separating unrelated structures.

We now scale the drawing by $s+O(1)$, choosing the constant margin so that each unit step of the mapped grid has $s$ consecutive Hive cells between the regions at its ends. The AND and OR cores therefore have room at graph vertices, while every straight run of an arc contains at least $s$ usable cells. Figure~\ref{fig:embedding-buffers} illustrates one unit step.

\begin{figure}[H]
    \centering
    \begin{tikzpicture}[font=\small]
      \begin{scope}[shift={(2.4,2.2)}]
        \begin{scope}[x={(0.42cm,0cm)},y={(0.21cm,0.364cm)}]
          \tikzset{embedding hex/.style={regular polygon,
            regular polygon sides=6,shape border rotate=30,
            minimum size=0.47cm,inner sep=0pt,line width=0.5pt}}
          \def\hexcell#1#2#3{%
            \node[embedding hex,draw=white,fill=#3] at (#1,#2) {};}
          \def\whitehexcell#1#2{%
            \node[embedding hex,draw=hivewhiteoutline,fill=hivewhiteedge]
              at (#1,#2) {};}
          % Equal-sized regions at two consecutive coarse-grid points.
          \foreach \i in {-2,-1,0,1,2} {
            \foreach \j in {-2,-1,0,1,2} {
              \pgfmathtruncatemacro{\dist}
                {max(abs(\i),max(abs(\j),abs(\i+\j)))}
              \ifnum\dist<3 \hexcell{\i}{\j}{hivepatch} \fi
            }
          }
          \foreach \di in {-2,-1,0,1,2} {
            \foreach \j in {-2,-1,0,1,2} {
              \pgfmathtruncatemacro{\dist}
                {max(abs(\di),max(abs(\j),abs(\di+\j)))}
              \pgfmathtruncatemacro{\ii}{16+\di}
              \ifnum\dist<3 \hexcell{\ii}{\j}{hivepatch} \fi
            }
          }
          % Usable edge-row cells and the other two ports.
          \foreach \i in {3,...,13} {
            \hexcell{\i}{0}{hiveedge}
          }
          \foreach \j in {3,4,5} {
            \whitehexcell{0}{\j}
            \whitehexcell{0}{-\j}
          }
          \foreach \i/\j in {2/0,0/2,0/-2} {
            \fill[black!70] (\i,\j) circle[radius=0.055cm];
          }
          \fill[black!80] (0,0) circle[radius=0.09cm];
          \fill[black!80] (16,0) circle[radius=0.09cm];
          \node[font=\tiny,text=white] at (0,0) {$v$};
        \end{scope}
      \end{scope}
      \node[anchor=west] at (0.1,4.25) {graph-vertex core};
      \draw[-{Latex[length=2mm]},thin] (1.9,4.10)--(2.42,2.37);
      \node[align=center] at (11.1,4.12) {next coarse\\grid point};
      \draw[-{Latex[length=2mm]},thin] (10.65,3.90)--(9.14,2.36);
      \draw[gray!70!black] (3.61,1.28)--(7.91,1.28);
      \draw[gray!70!black] (3.61,1.19)--(3.61,1.37);
      \draw[gray!70!black] (7.91,1.19)--(7.91,1.37);
      \node[gray!70!black] at (5.76,0.94) {$s$ usable cells};
      
    \end{tikzpicture}
    \caption{One stretched unit step. Purple patches are the same constant-size
    regions reserved at every coarse-grid point. Dark-gray and off-white cells
    distinguish edge arcs.
    A patch may contain a graph-vertex core, a transit
    pattern, or no pieces at all.}
    \label{fig:embedding-buffers}
\end{figure}

\begin{lemma}
\label{lem:layout}
For every graph $\mathcal{G}$ in the source family and every sufficiently large integer
$s$, the following layout can be constructed in time polynomial in $N+s$:
\begin{enumerate}[label=\textup{(\arabic*)}]
\item $\mathcal{G}$ has a planar triangular-lattice drawing with at most two bends per
      edge and total edge length $O(N^2s)$; each straight segment
      contains at least $s$ usable Hive cells;
\item every point of the mapped grid has a constant-size reserved region, and
      fixed-width strips belonging to nonincident arcs can be kept disjoint;
\item every bounded face of the drawing contains a parallelogram of side
      $\Theta(s)$, hence $\Omega(s^2)$ Hive-cell centres.
\end{enumerate}
\end{lemma}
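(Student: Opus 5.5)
We follow the construction described just before the statement, verifying the three items in turn.

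\emph{Drawing.} Since the source family consists of planar AND/OR graphs of maximum degree three (Theorem~\ref{thm:2cl-exptime}), Biedl and Kant~\cite{biedlKant1998} give, in time polynomial in $N$, an orthogonal drawing of each connected component on an $O(N)\times O(N)$ grid with at most two bends per edge. Placing the components side by side, with one empty grid column between consecutive components, keeps the grid $O(N)\times O(N)$ and keeps nonincident parts of different components apart. We then apply $T$. Because $T$ is an orientation-preserving linear isomorphism sending the unit square-grid steps to the unit lattice vectors $a,b$, it preserves planarity, the rotation system at each vertex, the bends, and grid-path lengths. Every edge therefore still consists of at most three straight segments of total coarse length $O(N)$.

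\emph{Item (1).} Fix the constant-size patterns: the AND and OR cores with their ports, and the straight and turn transit patterns. Let $r$ be their common radius plus a constant margin. Scale by $L=s+2r+1$, so that consecutive coarse points map to lattice points at distance $L$ along $a$ or $b$. Between the reserved regions at the two ends of a unit step there are then exactly $s$ consecutive cells along that lattice line. A straight segment of coarse length $\ell\ge1$ thus contains at least $\ell s\ge s$ usable cells. There are $O(N)$ edges of coarse length $O(N)$, giving total coarse length $O(N^2)$, so the total length is $O(N^2)L=O(N^2s)$ for fixed $r$. All coordinates are $O(Ns)$, so writing down the drawing takes time polynomial in $N+s$.

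\emph{Item (2).} This is the main obstacle. Reserved regions are hexagonal balls of radius $r$ around the points $T(L\mathbb Z^2)$; since $L>2r$, balls at distinct points are disjoint. Each arc segment is a fixed-width strip of width $w\le r$ (a constant) along a lattice line of $T(L\mathbb Z^2)$, with its ends inside reserved regions.

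We compare strips of two nonincident arcs. In the original drawing, distinct arcs share no grid point and no unit step, apart from common endpoints of incident arcs. Two segments of nonincident arcs therefore either lie on distinct parallel grid lines, which after scaling are at lattice distance at least $L\cdot\tfrac{\sqrt3}{2}$, or cross no common grid point. The delicate case is two segments in the $a$- and $b$-directions passing near a single coarse point. This happens only if one of them passes through that point; by the Biedl--Kant property, that point is then a bend or transit point of only that arc. The other segment stays at distance at least $L/2$ from every point of the first segment's line other than near their intersection point. The strips could meet only within $O(w)$ of a common lattice point, and no such common point exists.

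Hence it suffices to choose $s\ge s_0$ with $L\cdot\tfrac{\sqrt3}{2}>2w+2r$, which holds for all sufficiently large $s$. The one additional check is that the third lattice direction $b-a$ introduces no new adjacencies. Two cells in different strips are adjacent in the triangular lattice only if their distance is $1$, and this is excluded by the same separation bound, since the lattice metric is within a constant factor of the Euclidean one.

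\emph{Item (3).} Each bounded face of the orthogonal drawing is a union of unit grid squares and contains at least one full unit square $[x,x+1]\times[y,y+1]$ whose interior avoids the drawing. Under scaling and $T$, this square becomes a lattice parallelogram with side $L=\Theta(s)$, spanned by $La$ and $Lb$. Removing the reserved regions of radius $r$ at its corners and the strips of width $w$ along its sides leaves a parallelogram of side $L-2(r+w)=\Theta(s)$ still inside the face. Any such parallelogram contains $\Theta(s^2)$ lattice points, hence $\Omega(s^2)$ Hive-cell centres.
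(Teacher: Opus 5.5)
Your proposal is correct and follows essentially the same route as the paper. Both arguments take a Biedl--Kant drawing, map it by $T$, stretch it by $s+O(1)$, inherit the separation of nonincident features from their constant separation in the integer drawing, and note that an open unit grid square inside each bounded face becomes a parallelogram of side $\Theta(s)$. Your case analysis for item (2) only expands the paper's one-line clearance claim, which also follows at once from $T$ being a bi-Lipschitz linear map.
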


\begin{proof}
The orthogonal drawing has total edge length $O(N^2)$. The linear map and stretch
preserve its embedding and bends, while multiplying its lengths by $s+O(1)$.
Distinct nonincident features have positive, constant separation in the
integer-grid drawing; after stretching, their fixed-width regions have the
required clearance.

The mapped drawing uses only the $a$- and $b$-directions of the triangular
lattice; no graph-edge segment uses the third direction, $b-a$. Its edges
therefore lie on the boundaries of the elementary parallelograms of the
mapped square grid and cannot cross their interiors. Every bounded face
consequently contains the interior of at least one such parallelogram,
the image under $T$ of an open unit grid square. After stretching, its sides
have length $s+O(1)$. Leaving a constant-width margin inside it still gives
$\Omega(s^2)$ Hive-cell centres.
\end{proof}

\subsection{Edge gadgets and covering stacks}
\label{sec:edge-gadget}

Fix an edge $e=uv$ owned by $C\in\{W,B\}$, and write $\overline{C}$
for the other player. We represent $e$ by one Grasshopper $G^C$
and a straight occupied row with terminals $t_u$ and $t_v$.
The Grasshopper occupies exactly one terminal: its presence at
$t_v$ means that $e$ is directed into $v$.

The row cell adjacent to each terminal contains a
\emph{covering stack} controlled by $\overline{C}$, the opponent of the
Grasshopper's owner. Fix a constant $h\ge4$ and write
\[
  MS^{\overline{C}}=(\mathrm{Bt}^{C},
        M^{C}_1,\ldots,M^{C}_h,M^{\overline{C}}),
\]
from bottom to top. The top Mosquito is the \emph{lid};
the intermediate Mosquitoes form the \emph{body}.
The superscript denotes the lid's owner: the lid belongs to $\overline{C}$,
whereas all body Mosquitoes and the bottom Beetle belong to $C$.
Thus moving the lid exposes a piece belonging to the Grasshopper's owner.
Each stack has height $h+2$. Four body Mosquitoes suffice: one remains
above the bottom Beetle to prevent other Mosquitoes from copying its
movement, two become successive rovers, and one countercovers the
displaced lid. The choice $h=4$ therefore gives a six-piece stack,
ordered from bottom to top as Beetle, copy-blocking Mosquito, second
rover, first rover, countercover, and lid.
The covering stacks defined here occur only next to edge-gadget terminals;
the distinct Beetle stacks in the royal regions are described in
Section~\ref{sec:royal}. Each terminal
is adjacent to the stack on its edge row and, as described in
Section~\ref{sec:vertex-gadgets}, to one or two stacks in the incident
vertex gadget. Thus ordinary terminals have two adjacent stacks, whereas
the weight-$2$ terminal of an AND gadget has three.
All stacks adjacent to a terminal of $G^C$ have this same ownership:
their lids are controlled by $\overline{C}$, and their bodies belong to $C$.
The presence of the Beetles at the bottoms of the stacks also makes
this configuration constructible by legal moves: a ground-level
Mosquito can climb only by copying a Beetle.
Without any Beetles in the construction, no initially flat Mosquito
could ever rise above ground level.

We now specify the edge row. Choose $\ell\ge6$ consecutive cells
$k_0,\ldots,k_{\ell-1}$ on a straight lattice line.
The terminals are $t_v=k_0$ and $t_u=k_{\ell-1}$.
Place a distinct copy of $MS^{\overline{C}}$ at each of
$k_1$ and $k_{\ell-2}$, and a flat Mosquito of colour $C$ at every cell
$k_2,\ldots,k_{\ell-3}$.
This colour choice visually identifies the row as an edge owned by $C$;
it is not required for the operation of the gadget.
This middle segment is the \emph{dead zone} $D_e$.
During faithful play, $G^C$ occupies one terminal and the other
terminal is empty.

From either occupied terminal, the Grasshopper rule selects the
opposite terminal as the landing cell: every intervening cell
is occupied and that terminal is empty.
The jump is legal if and only if the One Hive rule permits its
departure. This is the key to the simulation: under faithful play,
the vertex gadget permits that departure if and only if the
corresponding outward reversal satisfies the inflow constraint.
Thus the row determines where the Grasshopper lands,
while the vertex determines whether it may leave.
\vspace{-1.5em}

\begin{figure}[H]
\centering
% Vertical adjustment for panel (b): a more negative value moves only
% the edge-gadget image farther down, without moving its subcaption.
\def\edgegadgetyshift{-0.50cm}
\begin{minipage}[b]{0.22\linewidth}
  \centering
  \includegraphics[width=\linewidth,height=3.4cm,keepaspectratio]
    {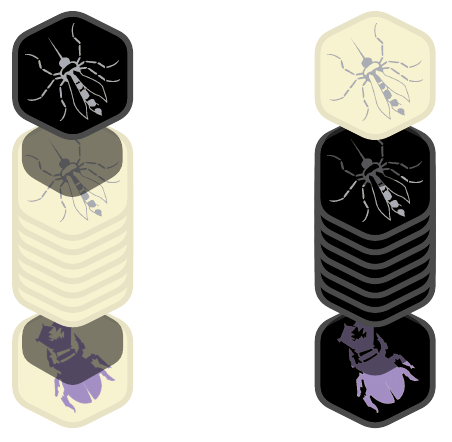}
  \par\smallskip\small (a) Covering stacks
\end{minipage}\hfill
\begin{minipage}[b]{0.45\linewidth}
  \centering
  \raisebox{\edgegadgetyshift}[\height][\depth]{%
    \includegraphics[width=\linewidth]{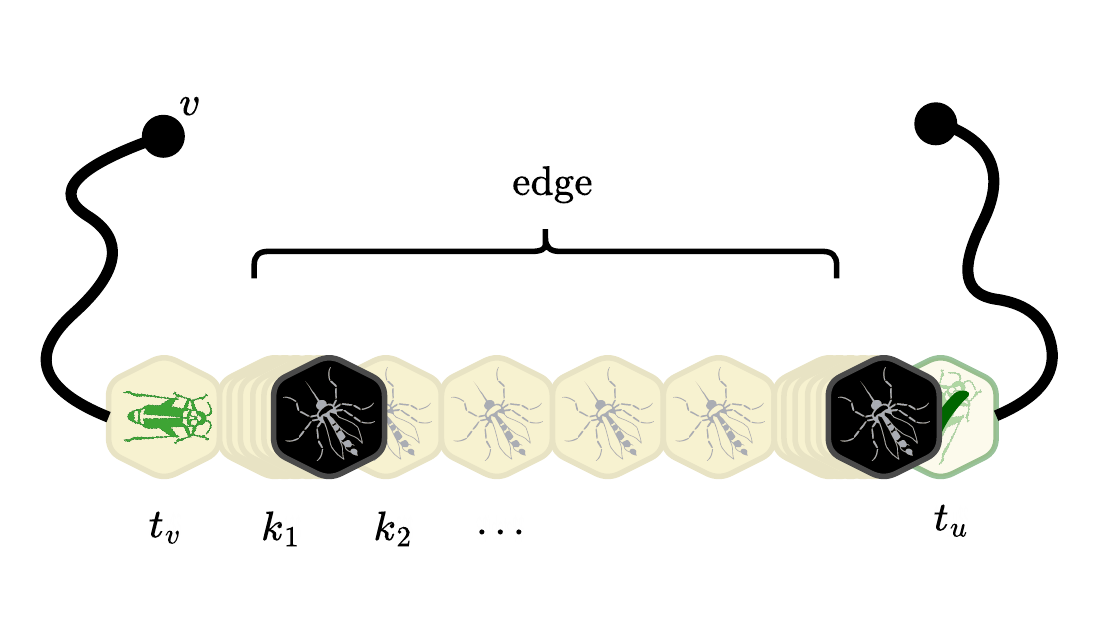}%
  }
  \par\smallskip\small (b) White edge gadget
\end{minipage}\hfill
\begin{minipage}[b]{0.25\linewidth}
  \centering
  \resizebox{\linewidth}{!}{%
  \begin{tikzpicture}[
    font=\small,
    graphedge/.style={
      draw=gray!62!black,line width=1.05pt,
      line cap=round,line join=round},
    armhalo/.style={
      draw=violet!55,opacity=0.23,line width=0.45cm,
      line cap=round,line join=round}
  ]
    \begin{scope}[x={(0.70cm,0cm)},y={(0.35cm,0.606cm)}]
      % The vertex structures follow the two ends of the polygonal edge.
      \draw[armhalo] (0,2)--(2,2)--(2,1.78);
      \draw[armhalo] (4,1)--(2,1)--(2,1.22);

      \draw[graphedge] (0,2)--(0,4)--(2,4);
      \draw[graphedge] (2,4)--(4,4)--(4,1);
      \draw[graphedge] (0,2)--(2,2)--(2,1)--(4,1);
      \draw[graphedge] (0,2)--(0,0)--(2,0);
      \draw[graphedge] (2,0)--(4,0)--(4,1);

      \draw[gray!30!black,line width=5pt,line cap=round]
        (2,1.78)--(2,1.22);
      \fill[gray!30!black] (2,1.78) circle[radius=0.055cm];
      \fill[gray!30!black] (2,1.22) circle[radius=0.055cm];

      \foreach \x/\y in {0/2,4/1,2/4,2/0} {
        \filldraw[fill=hivepatch,draw=hivepatch!70!black,line width=0.7pt]
          (\x,\y) circle[radius=0.16cm];
        \fill[black!80] (\x,\y) circle[radius=0.037cm];
      }
      \node[anchor=east,xshift=-5pt] at (0,2) {$u$};
      \node[anchor=west,xshift=5pt] at (4,1) {$v$};
      \node[above,yshift=4pt] at (2,4) {$p$};
      \node[below,yshift=-4pt] at (2,0) {$q$};
      \node[font=\scriptsize,text=gray!30!black,anchor=west]
        at (2.2,1.55) {$\sigma_e$};
    \end{scope}
  \end{tikzpicture}%
  }
  \par\smallskip\small (c) Edge realization
\end{minipage}
\caption{The edge construction. (a) The two colour-swapped covering
stacks. (b) The Grasshopper at $t_v$ encodes orientation into $v$.
(c) Vertex structures extend along a polygonal edge to the straight
segment $\sigma_e$ carrying its edge gadget.}
\label{fig:stacktrap1}
\label{fig:edge-hive}
\label{fig:polygonal-arc}
\end{figure}

\paragraph{Polygonal edges.}
\label{sec:polygonal}
The Grasshopper jumps along a straight occupied line and cannot
follow the bends of a polygonal edge. For each ordinary edge $e=uv$, choose one
straight run of its scaled arc, possibly between two bends. It
contains at least $s$ usable cells; for sufficiently large $s$,
reserve a subsegment $\sigma_e$ for the unique edge gadget and its
terminal interfaces. The portions of the arc between $\sigma_e$ and
the two vertices $u$ and $v$ become stationary arms of their respective
vertex structures; either portion may be empty. The vertex cores
remain at the grid points representing $u$ and $v$, while their
structures extend along the drawn arc to the edge gadget.
An arm extends the same floating components as its vertex core.
In particular, the two components reaching a weight-$2$ AND terminal
follow separate tracks and meet only through its Grasshopper.
Straight and turn patterns must preserve these separations and avoid
contacts with the fixed edge row outside the terminal interface.
A turn bends the separated tracks together without joining them;
the constant-width space reserved for the arm accommodates both tracks.

At the ends of $\sigma_e$, the arms use the prescribed
terminal interfaces, leaving the gadget's straight Grasshopper run
unchanged and the cell immediately beyond each terminal empty. We have only
partitioned the original arc into one edge gadget and two possible
arms, so their total length remains $O(N^2s)$.
For a target edge, the winning-side arm is omitted as described in
Section~\ref{sec:royal}.

Every occupied neighbour of a dead-zone Mosquito exposes another
Mosquito, including the lids and the flat Mosquitoes at connector
attachments. Hence no dead-zone piece has a movement ability to
copy during faithful play. Covered stack pieces cannot act either.

The lids themselves can move as Beetles. The responses to their
moves are analyzed in Section~\ref{sec:deviations-traps}, together
with the Grasshopper traps.
\subsection{Vertex gadgets}
\label{sec:vertex-gadgets}

A vertex gadget determines whether an incident Grasshopper may
leave its terminal. It is formed from Mosquitoes arranged into
connected groups whose only contacts with the incident edge gadgets pass
through occupied terminals. We call these groups \emph{floating
components}. The purple structural Mosquitoes in
Figure~\ref{fig:vertex-gadgets} may each be of either colour, chosen
independently: their colour affects neither the connectivity tests nor
the immobility arguments below.
At each interface with an incident edge gadget, the cell
of every floating component adjacent to the terminal contains a distinct
copy of the covering stack of Section~\ref{sec:edge-gadget}.
For an incident Grasshopper $G^C$, this is $MS^{\overline{C}}$. Together with the edge-row stack, this gives two
stacks at an ordinary terminal. The
weight-$2$ AND terminal is adjacent to both floating components and hence
has three. Thus the opponent of the Grasshopper's owner controls all
stacks at that terminal.

Let $t_1,t_2,t_3$ be the incident terminals. In an AND gadget,
$t_3$ represents the weight-$2$ edge and there are two floating
components: $K_1$ may contact the exterior through $t_1$ or $t_3$,
whereas $K_2$ may do so through $t_2$ or $t_3$. An OR gadget has one
floating component $K$, which may contact the exterior through any of
$t_1,t_2,t_3$. Only occupied terminals supply these contacts, as shown
in Figure~\ref{fig:vertex-gadgets}. The edge rows and face connectors
together form the connected network constructed in
Section~\ref{sec:onehive}.

\begin{figure}[H]
    \centering
    \begin{tikzpicture}
      \node[anchor=south west,inner sep=0] (vertexpicture) at (0,0)
        {\includegraphics[width=1.0\linewidth,trim=0 70 0 0,clip]{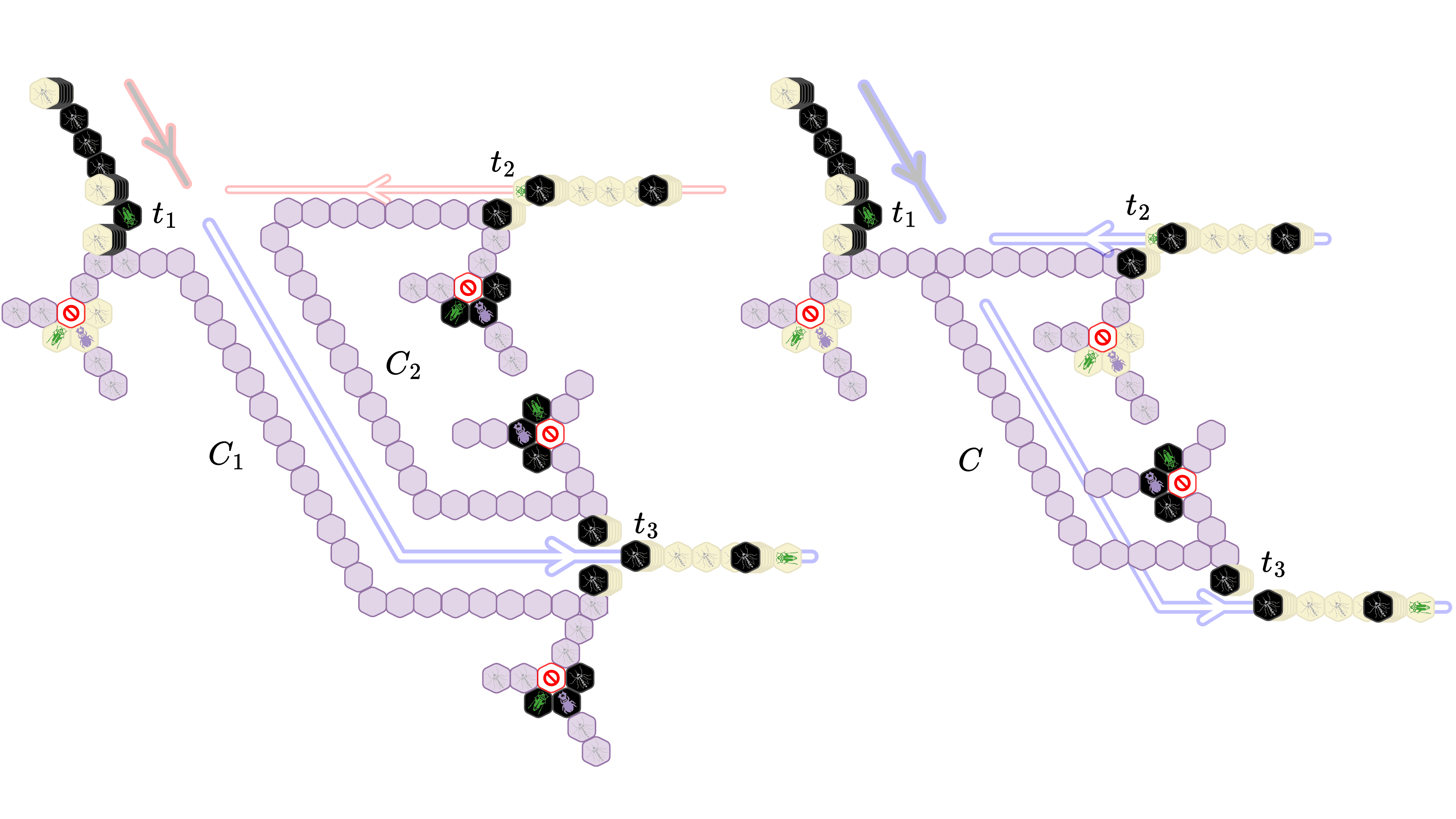}};
      % Relabel the floating components in the original Keynote figure.
      \begin{scope}[x={(vertexpicture.south east)},y={(vertexpicture.north west)}]
        \node[fill=white,inner sep=1pt,font=\large] at (0.155,0.405) {$K_1$};
        \node[fill=white,inner sep=1pt,font=\large] at (0.275,0.524) {$K_2$};
        \node[fill=white,inner sep=1pt,font=\large] at (0.667,0.401) {$K$};
      \end{scope}
    \end{tikzpicture}
    \caption{Left: the AND gadget. Right: the OR gadget. Purple hexagons represent Mosquitoes. The red prohibition
    symbols mark empty cells that the encoding Grasshoppers should not
    enter. The pieces beside these cells
    form the trap mechanisms described in
    Section~\ref{sec:deviations-traps}.}
    \label{fig:vertex-gadgets}
\end{figure}

The remaining vertex-gadget pieces are arranged to be immobile during
faithful play. Each uncovered internal Mosquito is either adjacent only to
Mosquitoes, and hence copies no movement ability, or belongs to a
paired-Mosquito pin. In this pattern, described in the preliminaries,
One Hive pins the auxiliary piece and the inner Mosquito, while the
terminal Mosquito has no ability to copy. The same local pattern is used by the trap supports in
Section~\ref{sec:deviations-traps}. All remaining stack pieces are covered.

\begin{lemma}[Vertex gadgets]
\label{lem:vertex-gating}
Assume that the incident edge gadgets are connected by the face connectors.
In a faithful configuration:
\begin{enumerate}[label=\textup{(\roman*)}]
\item all floating components are attached to this network if and only if
      the source inflow constraint is satisfied;
\item the designated terminal-to-terminal jump of an incident Grasshopper
      is legal if and only if the corresponding edge reversal is legal;
\item apart from incident Grasshoppers and exposed covering-stack lids,
      all vertex-gadget pieces are immobile.
\end{enumerate}
\end{lemma}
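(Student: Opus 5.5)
We prove the three parts in order, treating (i) as a purely combinatorial statement about which terminals are occupied, deriving (ii) from (i) through the One Hive rule, and handling (iii) by a local case analysis of the gadget's pieces.

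\textbf{Part (i).} In a faithful configuration each incident Grasshopper sits at exactly one terminal of its edge row. By construction, a floating component touches the exterior network only through occupied terminals, and only at the cells holding its terminal covering stacks. Its other cells have no neighbours outside the gadget. This is where the separations of Lemma~\ref{lem:layout}(2) and the arm tracks of the polygonal-edge construction are used. A terminal $t_i$ is occupied by the Grasshopper exactly when the edge points into the vertex, so contacts correspond to inward edges.

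For an OR gadget, $K$ is attached if and only if some $t_i$ is occupied. That happens if and only if at least one blue edge points inward, which is inflow at least $2$. For an AND gadget, $K_1$ is attached if and only if $t_1$ or $t_3$ is occupied, and $K_2$ if and only if $t_2$ or $t_3$ is occupied. Both hold if and only if $t_3$ is occupied or both $t_1,t_2$ are. This is exactly inflow at least $2$ with weights $2,1,1$. We also need that the edge rows themselves remain connected to the network through the face connectors, independently of the Grasshoppers. This is the standing assumption of the lemma.

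\textbf{Part (ii).} By the edge-gadget discussion, the only possible landing cell is the opposite terminal. Grasshopper jumps ignore Freedom to Move, so legality reduces to One Hive at departure. When the Grasshopper leaves $t_v$, the hive minus that cell must stay connected. The edge row still reaches the network via its own stacks and connectors, so the only pieces at risk are the floating components at $v$.

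After departure, the terminal configuration at $v$ is that of the reversed orientation with the edge pointing outward. The landing at $t_u$ adds contacts only at $u$, and these cannot rescue a component of $v$. Applying (i) to the intermediate occupancy, the components at $v$ stay attached if and only if $v$'s constraint holds after the reversal. The constraint at $u$ only gains inflow, so it is automatically satisfied. Hence the jump is legal exactly when the reversal is legal. One subtlety: we must check that the constraint at $u$ is not needed at the intermediate moment. It is not, since One Hive is tested only on the position with the Grasshopper lifted.

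\textbf{Part (iii).} This is the main obstacle. We classify every vertex-gadget piece in a faithful configuration and argue piece by piece.
\begin{itemize}
\item Covered stack pieces cannot act.
\item Uncovered flat Mosquitoes whose occupied neighbours all expose Mosquitoes (including lids, which are Mosquitoes) copy nothing.
\item The remaining uncovered pieces sit in two-Mosquito tails. The preliminaries' pinning argument applies: the base piece and the inner Mosquito are cut pieces, and the terminal Mosquito sees only a Mosquito.
\end{itemize}

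The delicate points are these:
\begin{itemize}
\item A Mosquito adjacent to an occupied terminal could copy the Grasshopper. We plan to rule this out by checking that such cells are exactly the stack cells, whose tops are elevated Mosquitoes that move as Beetles and hence are lids. This is why lids are excluded from the statement.
\item Cut-piece status must hold in every faithful configuration, not just one. Here we use that a tail piece disconnects its pendant Mosquito regardless of Grasshopper positions.
\end{itemize}

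This requires inspecting the fixed finite patterns of Figure~\ref{fig:vertex-gadgets} together with the straight and turn arm patterns. We would verify each pattern type once, since arms are repetitions of constantly many local patterns.
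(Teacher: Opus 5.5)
Your proposal is correct and follows essentially the same route as the paper. Part (i) matches the attachment conditions $(x_1\lor x_3)\land(x_2\lor x_3)$ and $x_1\lor x_2\lor x_3$ to the inflow thresholds, (ii) notes that only the departure vertex can lose connectivity while the arrival vertex only gains inflow, and (iii) uses the Mosquito-only adjacency, paired-pin, and covered-stack classification — and your write-up is, if anything, more explicit than the paper's on why legality reduces to One Hive at departure and why terminal-adjacent cells cannot let a Mosquito copy the Grasshopper.
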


\begin{proof}
Let $x_i$ indicate whether $t_i$ is occupied. For an AND vertex,
both components are attached precisely when
\[
  (x_1\lor x_3)\land(x_2\lor x_3),
\]
equivalently $x_1+x_2+2x_3\ge2$; for an OR vertex, attachment is
equivalent to $x_1\lor x_2\lor x_3$, or
$2(x_1+x_2+x_3)\ge2$. These are exactly the two source constraints, proving (i).
Changing one occupied $x_i$ from $1$ to $0$ models the departure of its
Grasshopper. Arrival at the other endpoint increases that vertex's inflow,
so only the departure vertex can make the edge reversal illegal.
Thus the designated terminal-to-terminal jump is legal if and only if
the corresponding edge reversal is legal, proving (ii).
The Mosquito-only adjacencies, paired pins, and covered stack pieces
described above prove (iii).
\end{proof}

The construction in Section~\ref{sec:onehive}
connects the edge gadgets into a single network
without creating additional contacts with the floating components.
The branches in the drawings realize these contacts:
the two AND components remain disjoint until they meet an occupied
terminal, whereas the OR component joins its three terminal branches.
The vertex arms extend these same components, and the face connectors
attach to the edge rows away from the terminals.
\subsection{Deviations and traps}
\label{sec:deviations-traps}

Consider a faithful configuration of the edge and vertex gadgets
constructed so far. The only pieces that can have a legal move are the
exposed lids of covering stacks and the Grasshoppers.
Auxiliary pieces introduced later are either immobile during faithful play
or have their possible moves analyzed separately, as for the pass gadgets
and royal regions (Sections~\ref{sec:pass} and~\ref{sec:royal}).

Not every available move of these two classes is a faithful move. A move of a
covering-stack lid does not represent any source move. A Grasshopper move is
faithful only when it is the designated jump representing a legal source-edge
reversal or source pass; the same Grasshopper may also be able to jump along
an unintended occupied ray. Thus the lid moves and the unintended
Grasshopper jumps are precisely the deviations analyzed below.

Both kinds of deviation have a prescribed answer that loses no tempo.
The opponent's next move neutralizes the deviation and starts releasing
opposing Mosquitoes above the hive; we call such a Mosquito a
\emph{rover}. A lid deviation makes two rovers available in succession:
when the first leaves the source stack, it exposes the second. Only one of
them must reach an edge gadget. After a Grasshopper deviation, the rover's
first route move is the climb described below.

\paragraph{Covering-stack traps.}
\label{sec:stack-trap}
Let $t$ be a terminal of $G^C$, guarded by copies of $MS^{\overline{C}}$.
If $\overline{C}$ moves a lid from its stack at $x$ to an adjacent cell $y$,
it exposes the Mosquito $M_h^C$ above ground level. Player $C$
immediately covers the lid with this Mosquito in one Beetle move and
leaves it there as a permanent guard. This prevents the displaced lid
from copying a Grasshopper or moving again. The same countercover
applies whether the lid descends to the ground or moves above the hive,
and whether $t$ is empty or occupied; Figure~\ref{fig:stacktrap-empty}
shows examples at OR and AND terminals.
The number and spacing of the elevated stacks at terminals and
Grasshopper traps ensure that no height gate obstructs the prescribed
countercovers or the escape of the released Mosquitoes; this property
requires a more detailed analysis for the Beetle construction
in Section~\ref{sec:beetle-stacks}.\\
The reply leaves the source cell occupied and exposes the next body Mosquito
as a rover. When this rover leaves, a second becomes available.
Since $h\ge4$, at least one body Mosquito remains above the bottom
Beetle after the countercover, preventing other Mosquitoes from copying the bottom Beetle's movement ability. Several lids
may deviate on different turns, but each
receives its own immediate countercover, whose guard never transfers.

\begin{figure}[H]
    \centering
    \begin{minipage}[t]{0.175\linewidth}
        \centering
        \raisebox{0.50cm}[\height][\depth]{%
            \includegraphics[width=\linewidth]{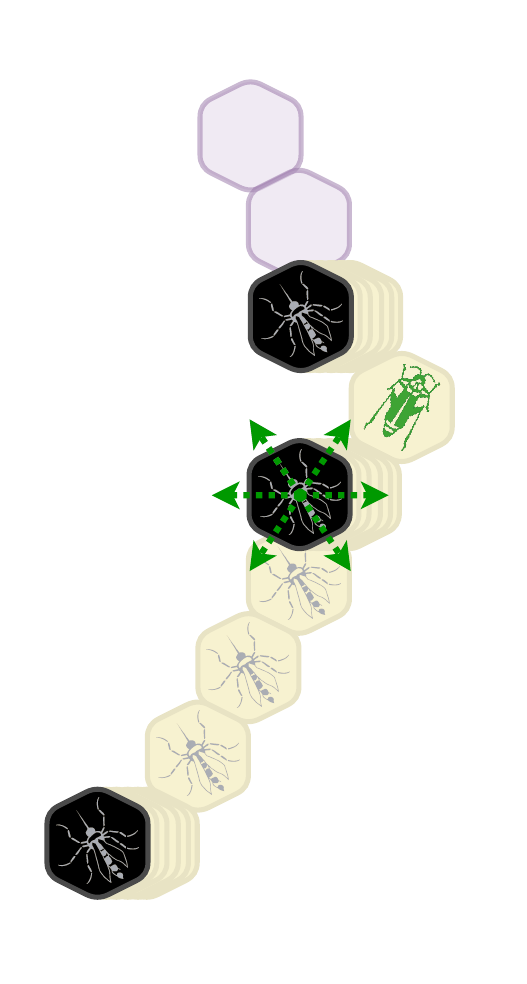}%
        }
        \par\smallskip\scriptsize (a) The deviation at the OR terminal.
    \end{minipage}\hfill
    \begin{minipage}[t]{0.19\linewidth}
        \centering
        \raisebox{0.35cm}[\height][\depth]{%
            \includegraphics[width=\linewidth]{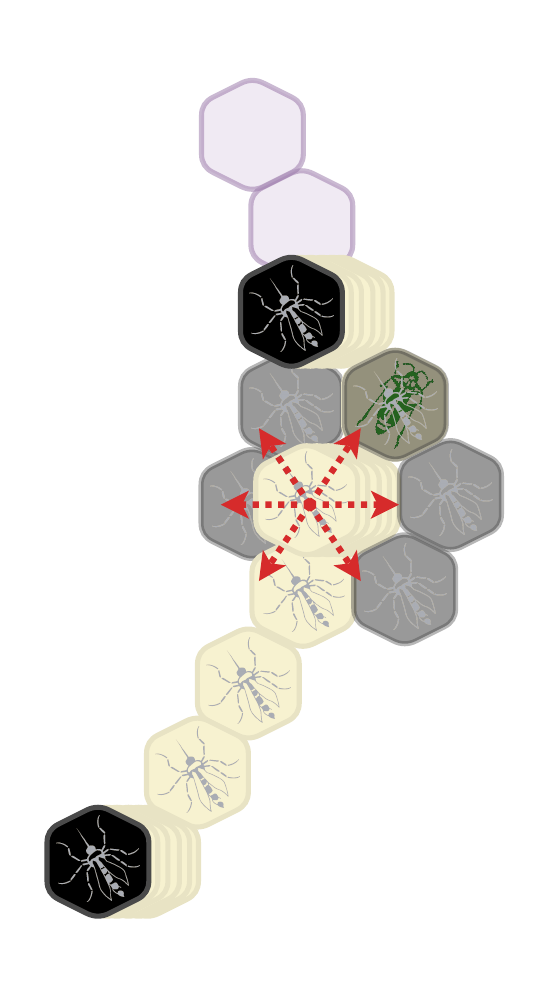}%
        }
        \par\smallskip\scriptsize (b) The exposed Mosquito covers it.
    \end{minipage}\hfill
    \begin{minipage}[t]{0.29\linewidth}
        \centering
        \includegraphics[width=\linewidth]{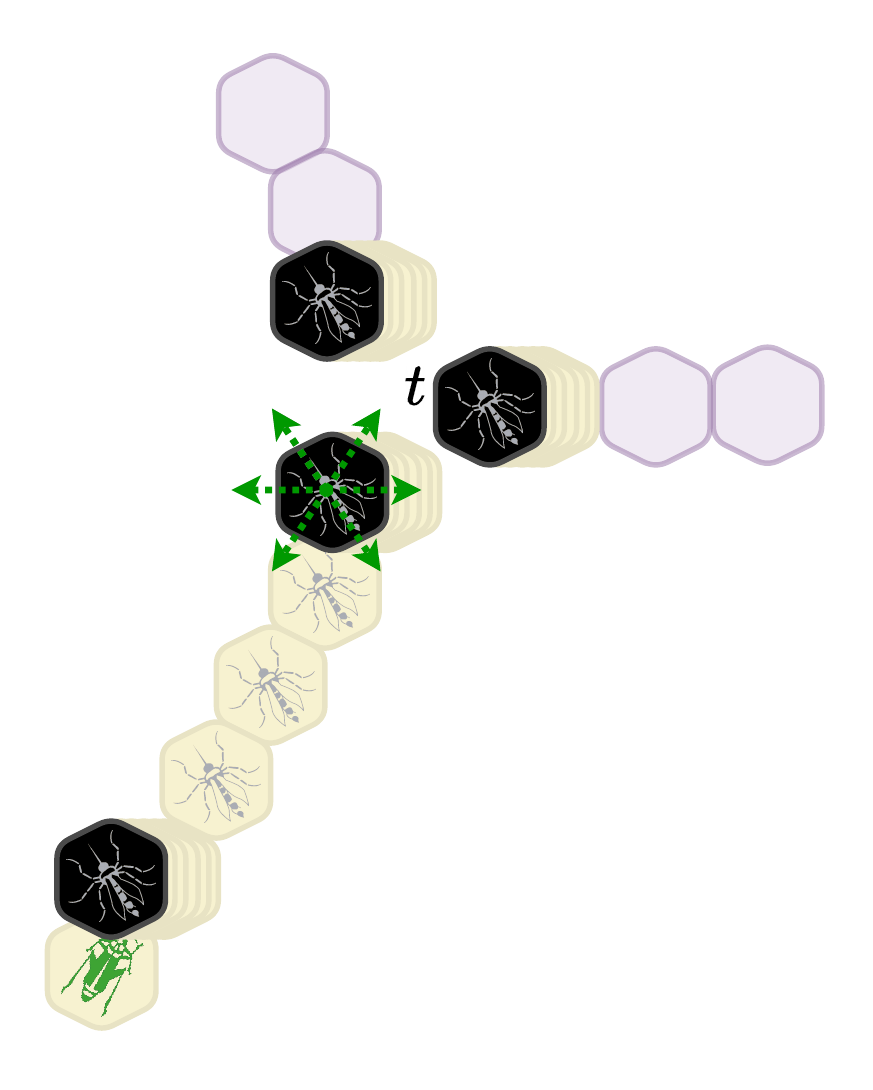}
        \par\smallskip\scriptsize (c) The deviation at the AND terminal.
    \end{minipage}\hfill
    \begin{minipage}[t]{0.29\linewidth}
        \centering
        \includegraphics[width=\linewidth]{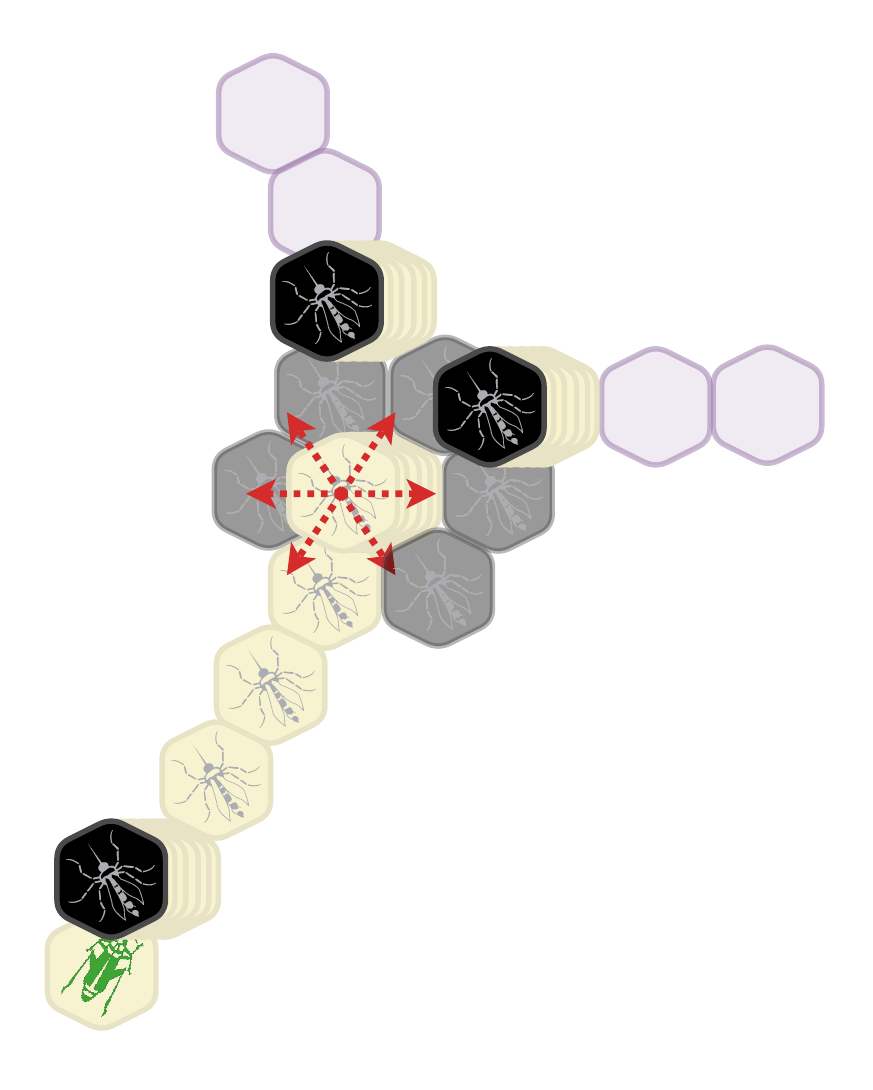}
        \par\smallskip\scriptsize (d) The exposed Mosquito covers it.
    \end{minipage}
    \caption{The same countercover at an empty terminal of an OR
    (a)--(b) and an AND (c)--(d) vertex.}
    \label{fig:stacktrap-empty}
\end{figure}

Ground-level lid moves and their countercovers may change occupied rays
or contacts near $t$: $G^C$ may acquire new landings or become blocked,
and the source inflow constraints may be relaxed. These changes need
not be repaired. The affected Grasshopper belongs to $C$, who can leave
it fixed and pursue punishment with the released rovers instead of
continuing the source simulation. Even if the relaxed constraints allow
$\overline{C}$ additional edge reversals, they do not shorten its protected
royal completion. The global timing argument in
Section~\ref{sec:royal-punishment} shows that rover punishment finishes
first. Thus no Grasshopper interception is needed.

\paragraph{The Grasshopper trap.}
\label{sec:grasshopper-trap}

The pieces of a vertex gadget may give an encoding
Grasshopper additional jump directions from its terminal, besides the
intended edge row. A jump along one of these rays does not
represent an edge reversal; we call it a \emph{Grasshopper deviation}.

For each such ray, we place a trap inside the vertex gadget at constant
cell-scale distance from the terminal of departure $t$. Let $z$ be the
first empty cell along the ray. Figure~\ref{fig:trap-hive} shows the local
arrangement and sequence. The pieces involved are the encoding Grasshopper
$G^C$, an opposing Grasshopper $G^{\overline C}$, an opposing Mosquito
$M^{\overline C}$, and an opposing Beetle $\mathrm{Bt}^{\overline C}$ adjacent to
$M^{\overline C}$. For readability, throughout this paragraph we abbreviate them
as $G$, $\overline G$, $\overline M$, and $\overline{\mathrm{Bt}}$, respectively.
The trap support attaches to the corresponding vertex component.
Before the deviation, $\overline G$ and $\overline M$
are pinned by the One Hive rule. The adjacent $\overline{\mathrm{Bt}}$ is
also a cut piece and supplies the movement ability later copied by $\overline M$.

When $G$ lands at $z$, it supplies an alternative connection
that releases $\overline G$. This Grasshopper is positioned to
jump across the occupied ray to the vacated terminal $t$, the
first empty cell in that direction.
The reply restores the previous pattern of occupied terminals.
The full position is no longer faithful, since $G$ remains at $z$
and the piece at $t$ belongs to $\overline C$.

Removing $\overline G$ from its original cell makes $G$ at $z$
the sole connection of the trap's pendant support. Thus after the
terminal replacement, $G$ is immobilized and $\overline M$ is free to act as a rover.

The released Mosquito can begin its punishment route by copying
$\overline{\mathrm{Bt}}$ and climbing directly onto the occupied support cell
immediately beyond $G$ in the direction of $t$. Removing
$\overline M$ leaves the support connected through $z$, and the two common
neighbours of this Beetle step do not both reach travel level two, so they
do not block the climb. This move is already the first step of the rover's
route, not a second trapping response.

Intervening turns cannot remove this opportunity. The trapped
Grasshopper $G$ and the adjacent $\overline{\mathrm{Bt}}$ are cut pieces;
the local support remains
in place, and no external move reaches the cells needed for the climb.
A deviation elsewhere is neutralized at its own guard; it can delay
the climb while that response is made, but cannot disable it.

\begin{figure}[H]
  \centering
  \begin{minipage}[t]{0.31\linewidth}
    \centering
    \includegraphics[width=\linewidth]{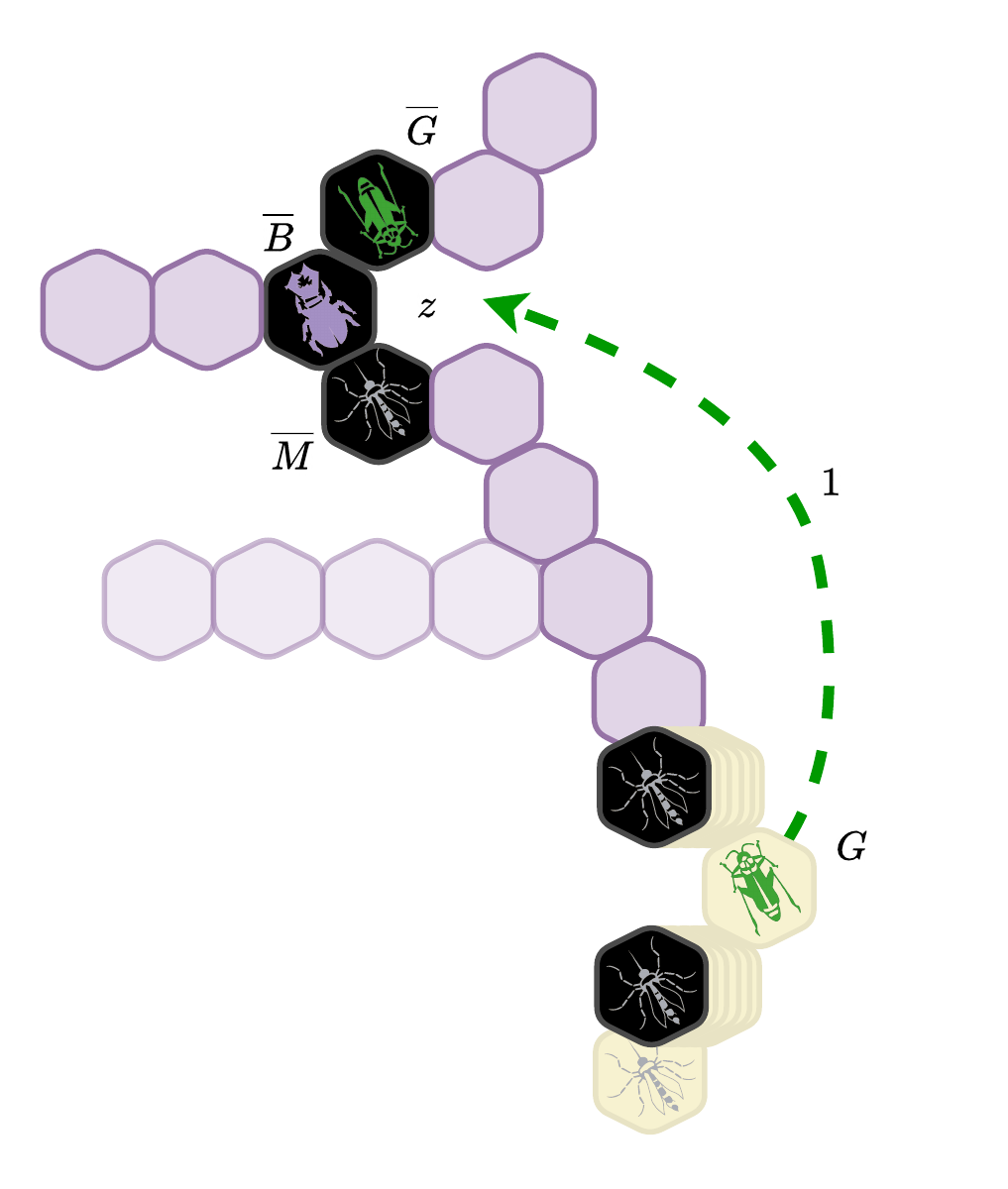}
    \par\smallskip\scriptsize (a) $G$ deviates and lands at $z$.
  \end{minipage}\hfill
  \begin{minipage}[t]{0.31\linewidth}
    \centering
    \includegraphics[width=0.87\linewidth]{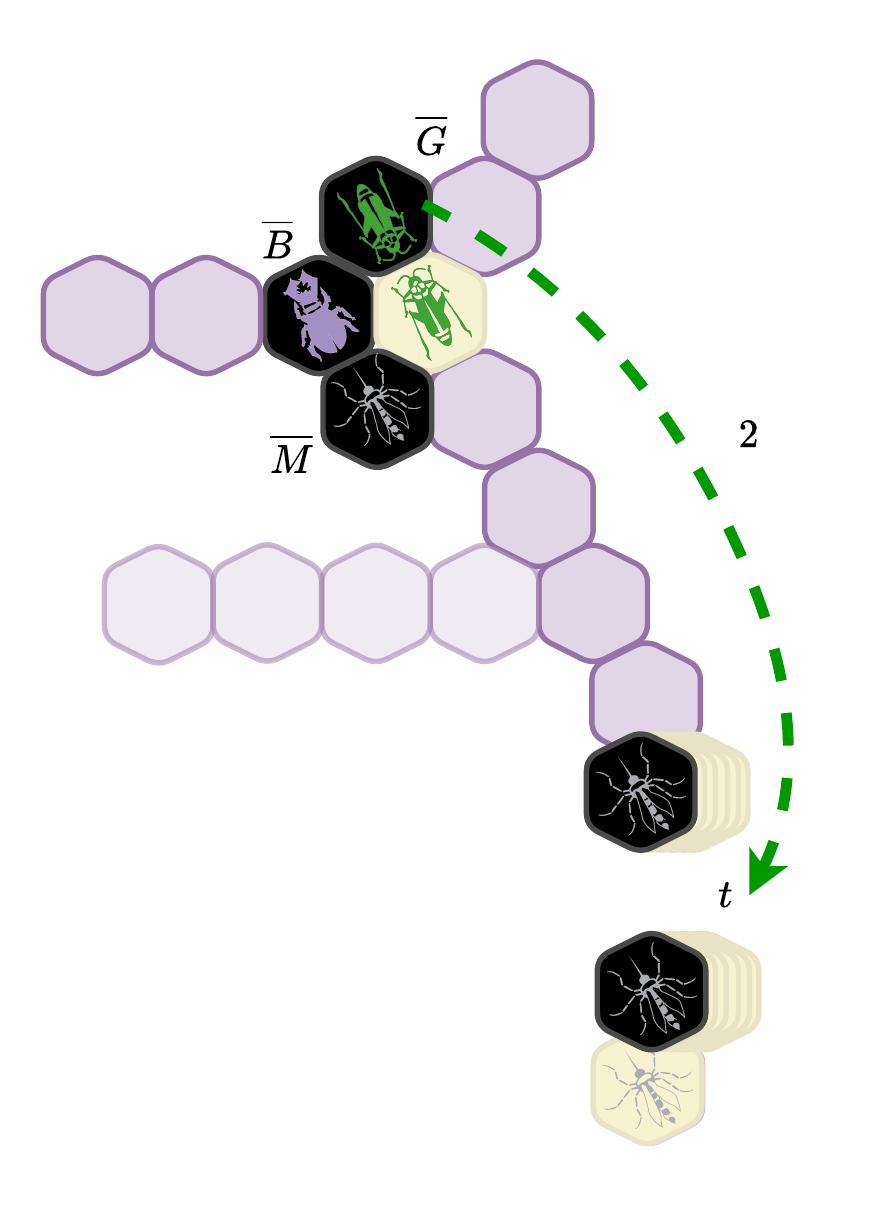}
    \par\smallskip\scriptsize (b) $\overline G$ restores terminal $t$.
  \end{minipage}\hfill
  \begin{minipage}[t]{0.31\linewidth}
    \centering
    \includegraphics[width=0.97\linewidth]{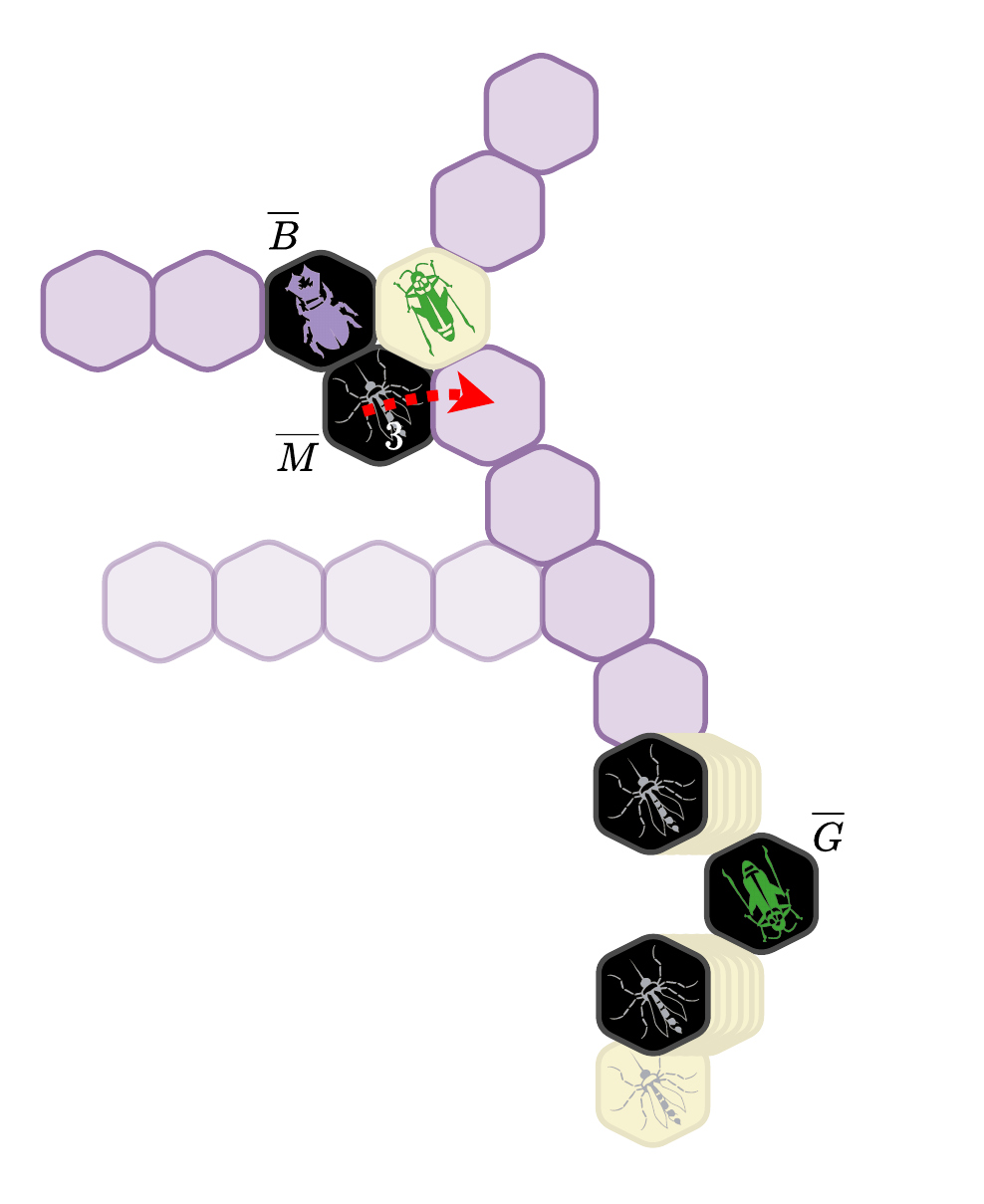}
    \par\smallskip\scriptsize (c) $\overline M$ begins the rover's route.
  \end{minipage}
  \caption{The Grasshopper trap and the start of the rover's route.
  Purple cells represent Mosquito supports;
  faded hexagons indicate continuations into the vertex structure.
  }
  \label{fig:trap-hive}
\end{figure}
\FloatBarrier

\paragraph{Leaving the vertex gadget.}
Releasing rovers is not sufficient: at least one must reach an edge
gadget, rather than remain confined to the vertex gadget.

After a lid deviation, the two successive rovers may originate from a
terminal-contact stack of the vertex gadget. If the guarded terminal
is occupied, one rover can use that contact to cross onto the edge row.
If it is empty, the first rover occupies it as a stationary bridge,
and the second crosses onto the edge row.

Unlike the Grasshopper-deviation case below, this passage is not
protected by lid ownership: the other lids at the terminal belong
to the deviating player. Any lid covering a rover receives the
prescribed permanent countercover from its source, exposing two
fresh rovers. A covered bridge remains occupied and continues to
provide the required contact. A fresh rover can reach the interception
cell directly
and resume the escape. Each interception therefore adds
only $O(1)$ moves and permanently consumes an attacking lid.
Consequently, repeated interceptions cannot prevent a rover from
reaching the edge network.

After a Grasshopper deviation, player $\overline{C}$ keeps the replacement
Grasshopper $\overline{G}$ at $t$.
The rover follows its vertex arm to this occupied
terminal and crosses onto the affected edge row.
The route uses stationary support, so reversals at the other
terminals cannot remove this path.
On this passage from the vertex gadget to an edge gadget, the only
stacks the rover must pass are the terminal-contact stacks, whose lids
are controlled by the rover's player and therefore cannot block it.
This is the only stack crossing before the rover reaches its target-edge
gadget.

Once a rover reaches an edge gadget, the global route result applies
(Section~\ref{sec:correctness-assembly}).
The passage through the vertex is included in the $O(N^2s)$ route bound.

The following lemma makes the reduction robust to deviations: the opponent
can prevent a player from winning by bypassing the simulated target-edge
objective.

\begin{lemma}
\label{lem:deviation-no-shortcut}
Let $C$ be the first player to deviate from a faithful
configuration, before either target edge has been reversed.
Suppose that $\overline{C}$ applies the prescribed immediate local reply
to every move requiring neutralization and introduces no
independent deviations. Between replies, $\overline{C}$ may make ordinary
edge or pass moves, or pursue the prescribed punishment.
Then any win by $C$ still requires its target-edge reversal
and the full traversal of the corresponding royal region.
\end{lemma}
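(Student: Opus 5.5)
We argue by an invariant maintained from the moment of $C$'s first deviation: the opposing Queen $Q^{\overline C}$ can become surrounded only by pieces arriving through the royal region of $C$'s target edge, and the region can be entered only by $C$'s target Grasshopper after it jumps from its designated terminal. The plan is to classify every move $C$ can make after the deviation, and to show that each one either preserves the invariant or is answered by a prescribed reply of $\overline C$ that restores it. It follows that any win by $C$ factors through the target-edge reversal followed by the royal traversal, whose length is fixed by the construction of Section~\ref{sec:royal}.

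\textbf{Step 1: mobile pieces of $C$.} By Lemma~\ref{lem:vertex-gating}(iii) and the analysis opening Section~\ref{sec:deviations-traps}, the only pieces of $C$ that can move from a faithful configuration are its Grasshoppers and the exposed lids of stacks $MS^{C}$, which guard the Grasshoppers of $\overline C$. Royal-guard and pass pieces are handled by their own local analysis. After a deviation, $C$ may also gain movable pieces: body Mosquitoes exposed on $\overline C$'s stacks, and Grasshoppers whose rays have changed. The first are controlled by $\overline C$, not $C$, by the ownership convention of the covering stacks. The second are the affected Grasshoppers discussed after the covering-stack traps; they belong to the player who is \emph{not} the lid mover.

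\textbf{Step 2: lid deviations by $C$.} Each such lid is countercovered immediately and permanently by $\overline C$. A covered lid cannot act or supply an ability, so it cannot become an elevated attacker travelling toward $Q^{\overline C}$. The only lasting side effects are changed contacts and rays near a terminal of $\overline C$'s own Grasshopper. These can relax inflow constraints, and so enlarge the set of source reversals $C$ may perform. The key observation is that a relaxed constraint lets a $C$-Grasshopper leave its terminal more easily, but every landing is still determined by the straight edge row. Reaching the royal region still requires the designated jump of $C$'s target Grasshopper, because the winning-side arm is omitted only there (Section~\ref{sec:royal}).

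\textbf{Step 3: Grasshopper deviations by $C$.} Here the trap applies. The deviating Grasshopper lands at $z$, becomes a cut piece after $\overline G$ refills $t$, and stays pinned permanently. It therefore contributes no further mobility and never reaches the royal region.

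\textbf{Step 4: no bypass of the royal traversal.} We must check that no sequence of $C$'s remaining moves fills the mate cell $F^{\overline C}$ other than through the traversal. I would argue from the royal-region design: its $\Omega(s^2)$ prescribed moves, together with the fact that $C$'s elevated pieces are either covered or confined to local stacks owned by $\overline C$.

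I expect Step 4 to be the main obstacle. The difficulty is that $\overline C$'s freedom to play ordinary moves between replies, combined with relaxed constraints from Step 2, might open a shortcut into the region. My first attempt would be to use the fixed-width separation of Lemma~\ref{lem:layout}(2): no piece of $C$ outside the target row is within constant distance of the royal region. I would then check that every move class in Steps 1–3 preserves this separation.
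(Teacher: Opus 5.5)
Your Steps 1--3 use the same ingredients as the paper: lid ownership, so that $C$ cannot open new rays for its own Grasshoppers; permanent countercovers; the trap; and the observation that relaxed constraints change only which designated jumps are legal, not where they land. There is a genuine gap, however. Step 4, which you leave open, is where the lemma's content lies, and the tool you propose for it fails. Geometric separation cannot be the mechanism. $C$'s royal region is built largely from $C$'s own pieces: the serpentine supports, the ramp Grasshoppers, and the bodies of the keep, exit and lateral stacks. It is also joined to the hive through the target row and the keep's constant-length connector. So the claim that no piece of $C$ lies within constant distance of the region is false. What actually excludes a bypass is ownership and mobility. Every lid in that region belongs to $\overline{C}$, which keeps them fixed apart from harmless moves of $G_1$'s guard that leave $G_1$ immobile. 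One Hive pins the supports and the ramp until a connection is made at $v_f$ or $v_f^\star$. Finally, $v_0$ is the only entrance, since external pieces create no other contacts or landings.

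The missing idea is an invariant on positions that can be checked move by move, in place of your invariant, which restates the conclusion. The paper maintains a confinement invariant at each turn of $C$: $C$'s encoding Grasshoppers are at their designated terminals or trapped, $C$'s lids are at their original cells or countercovered, and every other piece of $C$ is immobile apart from the separately analysed pass and royal pieces. Your Step 1 lists new movable pieces only as consequences of $C$'s own lid moves. You must also show that $\overline{C}$'s moves never give $C$'s Grasshoppers new rays or touch the royal entrance. These moves include stationary bridges at empty terminals, the replacement $\overline{G}$ at a vacated terminal, rover moves, and ordinary reversals. This is where the paper invokes separation, applied to $\overline{C}$'s responses rather than to $C$'s pieces. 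Exhaustiveness then follows from a minimal-counterexample argument. The first move of $C$ outside the analysed classes would have to be a jump by an encoding Grasshopper at a terminal, but under the invariant its rays are exactly the classified ones. With this invariant in hand, Step 4 reduces to the royal-region observations above.
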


\begin{proof}
The local analyses establish a confinement invariant at the
beginning of each turn of $C$: ordinary encoding Grasshoppers
are at their designated terminals or immobilized in traps;
covering-stack lids are at their original cells or neutralized;
and the remaining pieces of $C$ are immobile, except for the
separately analysed pass and royal pieces.

Every lid adjacent to a $C$-Grasshopper terminal belongs to $\overline{C}$.
Thus $C$ cannot move these lids to create new jump rays for its Grasshoppers, and
$\overline{C}$ leaves them in place. A lid deviation by $C$ occurs instead
at a $\overline{C}$-Grasshopper terminal and receives the prescribed local
reply. A Grasshopper deviation is answered by replacing the
vacated terminal and immobilizing the deviating Grasshopper.

The local analyses ensure that these replies remain legal,
including after earlier deviations and replies. Each response
preserves the occupied source supports and confines the displaced
pieces to the analysed neighbourhood. By separation, $\overline{C}$'s responses
affect neither $C$'s terminals nor the royal entrance.
The prescribed punishment moves also preserve the guards needed
for subsequent replies. Induction therefore maintains both the
availability of the responses and the confinement invariant,
although the position need no longer be faithful.

Suppose that a sequence of deviations enables a move of $C$
outside the analysed classes, and consider the first such move.
Immediately before it, the invariant holds and $\overline{C}$ has introduced
no independent deviations. Immobile pieces cannot move, and all lid
moves have already been analysed. Pass and royal
moves are covered in Sections~\ref{sec:pass} and~\ref{sec:royal}.
The purported move must therefore be made by an ordinary encoding
Grasshopper at a terminal. But its possible jumps remain those
already classified, since the configuration satisfies the invariant. This contradicts the choice of the
first unanalysed move.

Deviations may change which designated edge jumps satisfy the
vertex constraints, allowing reversals forbidden during faithful
play. They create neither a new class of moves nor another route
to the opposing Queen. In conclusion, $C$ must
still reverse its target edge and complete the full traversal
of Section~\ref{sec:royal}.
\end{proof}

% V4-CHECK COMPLETION: compare with the remaining traversal, not the
% full value W(R,L), if a target has already been reversed.

% The superseded trap and vertex sections of V3 were removed here.
\subsection{Face connectors and the One Hive property}
\label{sec:onehive}

The local AND and OR gadgets determine when a terminal Grasshopper may leave
without disconnecting a floating component. We must also ensure that all edge
gadgets form a single connected hive. Face connectors provide a fixed connected
backbone. 

\begin{definition}
\label{def:face-graph}
Let $\mathcal{G}$ be a plane graph and let $F(\mathcal{G})$ be its set of faces. The \emph{face
graph} $\mathcal{G}_F$ has vertex set $F(\mathcal{G})$; two distinct faces are adjacent when they
share an edge of $\mathcal{G}$ on their boundaries.
\end{definition}

\begin{lemma}
\label{lem:face-graph-connected}
The face graph $\mathcal{G}_F$ is connected, even when $\mathcal{G}$ is disconnected.
\end{lemma}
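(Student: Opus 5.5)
We argue through the plane itself, using the fact that the faces of a plane graph partition the complement of the drawing. Fix two faces $f,g\in F(\mathcal{G})$ and choose points $p\in f$ and $q\in g$ off the drawing. Connect $p$ to $q$ by a polygonal path $\gamma$ in the plane. After a small perturbation, $\gamma$ avoids all vertices of $\mathcal{G}$ and meets the drawn edges in finitely many transversal crossings, each in the relative interior of an edge. Since the drawing is finite and polygonal (as in Lemma~\ref{lem:layout}), such a generic perturbation exists.

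Next, we walk along $\gamma$ from $p$ to $q$. Between consecutive crossings, $\gamma$ stays in the complement of the drawing, so it remains inside a single face. At a transversal crossing of an edge $e$ at an interior point, the faces immediately on either side of the crossing are the two faces incident to $e$ on its two sides. If these faces are distinct, they share $e$ on their boundaries and are therefore adjacent in $\mathcal{G}_F$. If they coincide, the crossing does not change the face. The sequence of faces visited thus forms a walk in $\mathcal{G}_F$ from $f$ to $g$, which proves connectivity.

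Disconnectedness of $\mathcal{G}$ requires no separate case. The argument uses only the fact that the plane is path-connected, not any connectivity of $\mathcal{G}$. Faces of a disconnected plane graph may have disconnected boundaries; for example, the outer face may touch several components. This does not matter, because adjacency is witnessed at single edge crossings. If $\mathcal{G}$ has no edges, there is a single face and the statement is trivial.

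The main obstacle is justifying the local claim at a crossing: near an interior point of a drawn edge, a small disc is split by the edge into two half-discs, each lying in a single face whose boundary contains $e$. For a polygonal edge away from its endpoints and from other edges, this follows directly from the fact that the drawing is a finite union of segments meeting only at vertices. We would state this fact and use it without invoking the general Jordan curve theorem.
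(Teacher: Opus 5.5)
Your proposal is correct and follows the same route as the paper: join points of the two faces by a curve that avoids vertices and crosses edges transversely, then read off the successive faces as a walk in $\mathcal{G}_F$. You supply details the paper leaves implicit, namely the genericity of the polygonal path, the two-sided local structure at a crossing, and why disconnectedness of $\mathcal{G}$ plays no role.
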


\begin{proof}
Join interior points of two faces by a curve that crosses edges transversely
and avoids vertices. The successive faces visited by the curve give a walk in
$\mathcal{G}_F$ between them.
\end{proof}

For each face $f$, we construct a \emph{connector gadget} $C_f$: a
connected network of flat Mosquitoes inside $f$. It meets the dead zone $D_e$
of each distinct boundary edge $e$ at one designated attachment away
from the terminals. For a polygonal edge, the attachment lies on just
one straight run. A bridge encountered twice along the same face
boundary needs only one attachment.

Since the connected components of $\mathcal{G}$ are placed side by side,
every bounded face has a connected boundary: otherwise, one component
would lie inside a bounded face of another, contrary to the layout.
Starting from any boundary edge, we route $C_f$ along the boundary walk at constant
cell-scale distance, with short branches to the designated dead
zones, as in Figure~\ref{fig:connector}. Following this walk reaches
every boundary edge of the face. For the outer face, we first build a track along the
outer boundary of each connected component of $\mathcal{G}$. We then join these
tracks inside the outer face to form a single connector.

The routing avoids terminal neighbourhoods, Grasshopper rays and
landings, floating components, traps, and reserved royal channels.
The attachments have the local separation required at each dead zone;
the figure shows only their global placement.

Every Mosquito of $C_f$ is adjacent only to flat Mosquitoes. It is therefore immobile by the same
Mosquito-only-adjacency argument as in the dead-zone observation
in Section~\ref{sec:edge-gadget}. This argument does not depend on
whether $C_f$ contains a cycle.

\begin{figure}[H]
\centering
\resizebox{0.8\linewidth}{!}{%
\begin{tikzpicture}
  \begin{scope}[x={(0.42cm,0cm)},y={(0.21cm,0.364cm)}]
    \tikzset{connector hex/.style={
      regular polygon,
      regular polygon sides=6,
      shape border rotate=30,
      minimum size=0.47cm,
      inner sep=0pt,
      outer sep=0pt,
      draw=white,
      line width=0.5pt
    }}

    \def\connectorhex#1#2#3{%
      \node[connector hex,fill=#3] at (#1,#2) {};
    }
    \def\connectorwhitehex#1#2#3{%
      \node[connector hex,draw=hivewhiteoutline,fill=#3]
        at (#1,#2) {};
    }

    % Vertex-gadget arms, lighter than the vertex cores.
    % The upper-left corner is a bend, not a fourth vertex.
    \foreach \q in {4,...,30} {
      \connectorhex{\q}{4}{hivepatch!55!white}
    }
    \foreach \r in {5,...,20} {
      \connectorhex{4}{\r}{hivepatch!55!white}
      \connectorhex{30}{\r}{hivepatch!55!white}
    }
    \foreach \q in {5,...,29} {
      \connectorhex{\q}{20}{hivepatch!55!white}
    }

    % Six player-colored cells per edge gadget.
    %
    % The gray attachments use the two internal positions:
    % one touches cells 2--3 and the other cells 4--5.
    % Hence the two terminal cells (1 and 6) are never touched
    % by a gray connector cell.

    % Bottom attachment: six black cells.
    \foreach \q in {14,...,19} {
      \connectorhex{\q}{4}{hiveedge}
    }

    % Left attachment: six white cells.
    \foreach \r in {10,...,15} {
      \connectorwhitehex{4}{\r}{hivewhiteedge}
    }

    % Right attachment: six black cells.
    \foreach \r in {9,...,14} {
      \connectorhex{30}{\r}{hiveedge}
    }

    % One closed connector, with a direct attachment
    % to each distinct edge.
    \foreach \q in {8,...,27} {
      \connectorhex{\q}{7}{hiveconnector}
    }
    \foreach \q in {8,...,26} {
      \connectorhex{\q}{17}{hiveconnector}
    }
    \foreach \r in {8,...,17} {
      \connectorhex{7}{\r}{hiveconnector}
    }
    \foreach \r in {7,...,16} {
      \connectorhex{27}{\r}{hiveconnector}
    }

    % Attachments of the closed connector.
    %
    % Bottom:
    % the unique gray attachment is placed in one of the two
    % internal positions, touching cells 2--3 rather than
    % being centered on the six-cell segment.
    %
    % Left:
    % the inner connector touches white cells 2--3.
    %
    % Right:
    % the inner connector touches black cells 4--5.
    \foreach \q/\r in {
      15/6,15/5,
      6/11,5/11,
      28/13,29/13
    } {
      \connectorhex{\q}{\r}{hiveconnector}
    }

    % Exterior connector tracks along the two oblique edges.
    % Two fading, complete hexagons at each end suggest continuation.
    % Their vertical extrema coincide with those of the vertex cores.
    \begin{scope}
      \foreach \q in {1,33} {
        \foreach \r in {5,...,19} {
          \connectorhex{\q}{\r}{hiveconnector}
        }
        \foreach \r in {4,20} {
          \node[connector hex,fill=hiveconnector,opacity=0.55]
            at (\q,\r) {};
        }
        \foreach \r in {3,21} {
          \node[connector hex,fill=hiveconnector,opacity=0.22]
            at (\q,\r) {};
        }
      }

      % Opposite-side attachments.
      %
      % Left exterior connector touches white cells 4--5.
      % Right exterior connector touches black cells 2--3.
      %
      % Thus the two gray connectors on either side of an edge
      % are close, but never meet the same player-colored cell,
      % and neither touches a terminal colored cell.
      \foreach \q/\r in {
        2/14,3/14,
        31/10,32/10
      } {
        \connectorhex{\q}{\r}{hiveconnector}
      }
    \end{scope}

    % Vertex cores.
    \foreach \x/\y in {4/4,30/4,30/20} {
      \foreach \dq/\dr in {0/0,1/0,-1/0,0/1,0/-1,1/-1,-1/1} {
        \pgfmathtruncatemacro{\qq}{\x+\dq}
        \pgfmathtruncatemacro{\rr}{\y+\dr}
        \connectorhex{\qq}{\rr}{hivepatch}
      }
      \fill[black!78] (\x,\y) circle[radius=0.075cm];
    }

    \node[font=\footnotesize,fill=white,inner sep=1.4pt]
      at (16,12.2) {connector gadget $C_f$};
    \draw[->,gray!65!black,line width=0.55pt]
      (16,11.55)--(18,7.5);

    \draw[<->,gray!70!black,line width=0.65pt]
      (21,4.55)--(21,6.45);
    \node[font=\scriptsize,anchor=west,fill=white,inner sep=0.4pt]
      at (21.6,5.35) {$d=O(1)$};
  \end{scope}
\end{tikzpicture}%
}
\caption{A closed face connector (gray) following the boundary of
a three-vertex face at constant cell-scale distance. At each attachment,
six cells simulating an edge gadget retain the player's colour. Pale purple
cells represent the vertex-gadget arms; darker purple patches mark the
vertex cores. Exterior connector tracks are shown along the two oblique
edges.}
\label{fig:connector}
\end{figure}

\begin{lemma}
\label{lem:onehive}
The union of all face connectors and edge-gadget rows is connected,
even when $\mathcal{G}$ is disconnected. Consequently, for any orientation
$\mathcal O$ of $\mathcal{G}$, the occupied support of the edge, vertex, and connector
gadgets encoding $\mathcal O$ is connected iff $\mathcal O$ is legal.
\end{lemma}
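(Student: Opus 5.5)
We prove the two claims in order. First we show that the fixed backbone, meaning the face connectors together with the edge-gadget rows, is connected. Then we attach the floating components of the vertex gadgets and invoke Lemma~\ref{lem:vertex-gating}(i).

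\textbf{Connectivity of the backbone.} We build an auxiliary bipartite graph $\Gamma$. Its vertices are the faces $f\in F(\mathcal{G})$ and the edges $e\in E(\mathcal{G})$, and $f$ is joined to $e$ whenever $e$ lies on the boundary of $f$.

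Each edge row is internally connected: it is a straight run of occupied cells $k_1,\dots,k_{\ell-2}$, and it contains its dead zone $D_e$. Each connector $C_f$ is connected by construction. For the outer face, $C_f$ is connected because the component tracks are joined inside the outer face.

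By the construction in Section~\ref{sec:onehive}, $C_f$ meets $D_e$ at a designated attachment for every edge $e$ on the boundary of $f$. Hence each edge of $\Gamma$ corresponds to an occupied contact between the corresponding pieces, and it suffices to show that $\Gamma$ is connected.

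Two faces sharing a boundary edge $e$ are joined in $\Gamma$ through the vertex $e$. Lemma~\ref{lem:face-graph-connected} therefore gives connectivity among all face vertices of $\Gamma$. Every edge vertex is adjacent to at least one face, namely one on either side of the edge, so $\Gamma$ is connected. Disconnectedness of $\mathcal{G}$ is harmless, since Lemma~\ref{lem:face-graph-connected} already covers that case.

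The terminals are the only part of the edge row excluded from the backbone. An occupied terminal $t$ holds a Grasshopper adjacent to $k_1$ or $k_{\ell-2}$, so it also belongs to the backbone's component. An empty terminal contributes nothing.

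\textbf{Adding the floating components.} The occupied support is the union of the backbone, the occupied terminals, and the floating components with their trap supports. Each floating component is internally connected. By construction, its only contacts with the rest of the hive pass through the occupied terminals adjacent to it. In particular, the connectors avoid floating components and terminal neighbourhoods, and strips of nonincident arcs are disjoint by Lemma~\ref{lem:layout}(2).

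It follows that the whole support is connected if and only if every floating component touches some occupied incident terminal. By Lemma~\ref{lem:vertex-gating}(i) this holds precisely when every vertex inflow constraint is met, that is, when $\mathcal O$ is legal. For the converse direction, if some constraint fails, the corresponding component has no occupied contact and so forms a separate connected component of the support.

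\textbf{Main obstacle.} The main difficulty is justifying that the contacts are exactly as claimed. Connectors must reach the dead zones without also touching terminals, floating components, or the vertex arms running alongside the same arc. An AND component must not meet its partner component except through the weight-$2$ terminal.

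We would discharge this with the separation provided by Lemma~\ref{lem:layout}(2) and the constant-distance routing of the connectors. We would also check locally that each attachment cell has all its neighbours in $D_e\cup C_f$.
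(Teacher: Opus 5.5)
Your proposal is correct and follows essentially the same argument as the paper. Your face--edge incidence graph $\Gamma$ repackages the paper's step that two faces sharing an edge $e$ are linked through the connected dead zone $D_e$, together with Lemma~\ref{lem:face-graph-connected}, and both proofs then attach the floating components only through occupied terminals and invoke Lemma~\ref{lem:vertex-gating}(i); your remark that an occupied terminal joins the backbone through $k_1$ or $k_{\ell-2}$ makes explicit a step the paper leaves implicit.
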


\begin{proof}
For each face $f$, the connected network $C_f$ meets the dead zone of every
distinct edge on its boundary, placing those edge rows in one component.
If two faces share an edge $e$, both connectors meet the connected dead zone
$D_e$ and hence lie in the same component. The face graph is connected by
Lemma~\ref{lem:face-graph-connected}, and every edge lies on the boundary of
at least one face. The outer-face connector joins the boundary components of
disconnected drawings. Therefore all connectors and edge rows lie in
one fixed exterior component.

The only remaining components are the floating components of
the AND and OR gadgets, whose only contacts with this exterior are their occupied
incident terminals. By Lemma~\ref{lem:vertex-gating}, all these components
are attached to at least one edge gadget if and only if every source inflow constraint
is satisfied. Thus the entire support is connected if and only if $\mathcal O$ is
legal. The same lemma ensures connectivity during each faithful edge
reversal.
\end{proof}

Thus, starting from a legal initial 2CL configuration, all structures
introduced so far belong to the connected component containing all the face
connectors. The remaining pieces will attach to edge gadgets,
so the complete initial \Hive{} position will also satisfy the One Hive rule.

The total connector length remains $O(N^2s)$, matching the bound for the
stretched edge arcs: the tracks follow these arcs at constant cell-scale
distance and with constant multiplicity, using only constant-length
attachment branches. The outer-face joins add only $O(Ns)$ cells, so the
total connector size is also $O(N^2s)$.
A rover reaching any edge gadget can therefore follow a simple upper path
through the connected edge--connector network to its own target-edge gadget.

%% file: Sec/RoyalRegions.tex
\section{Pass gadgets, royal regions and correctness}
\label{sec:royal-punishment}

The preceding gadgets already simulate legal edge reversals and would suffice
for a normal-play source game. The source formulation used here additionally
permits passes and specifies target edges. We now implement these features with
pass gadgets and royal regions, then assemble the position and compare the
global punishment time with the royal delay.

\begin{remark}[A normal-play alternative]
An EXPTIME-hard variant of unbounded 2CL on the same graph family,
with no passing or target edges and with the first player unable to
move losing, would require neither pass gadgets nor the royal countdown.
It would suffice to surround each Queen $Q^C$ with five covering stacks
controlled by $C$, leaving one neighbour empty and unreachable by
Grasshopper jumps. With no legal source move, $C$ must deviate and release
an opposing rover, which occupies that neighbour and surrounds $Q^C$. Hearn discusses replacing target edges with
this normal-play condition for 2CL~\cite[p.~53]{hearn2006thesis}.
Our attempts to prove EXPTIME-hardness for this variant were
unsuccessful, so we retain the target-edge formulation.
\end{remark}

\begin{figure}[H]
    \centering
    \includegraphics[width=1.0\linewidth]{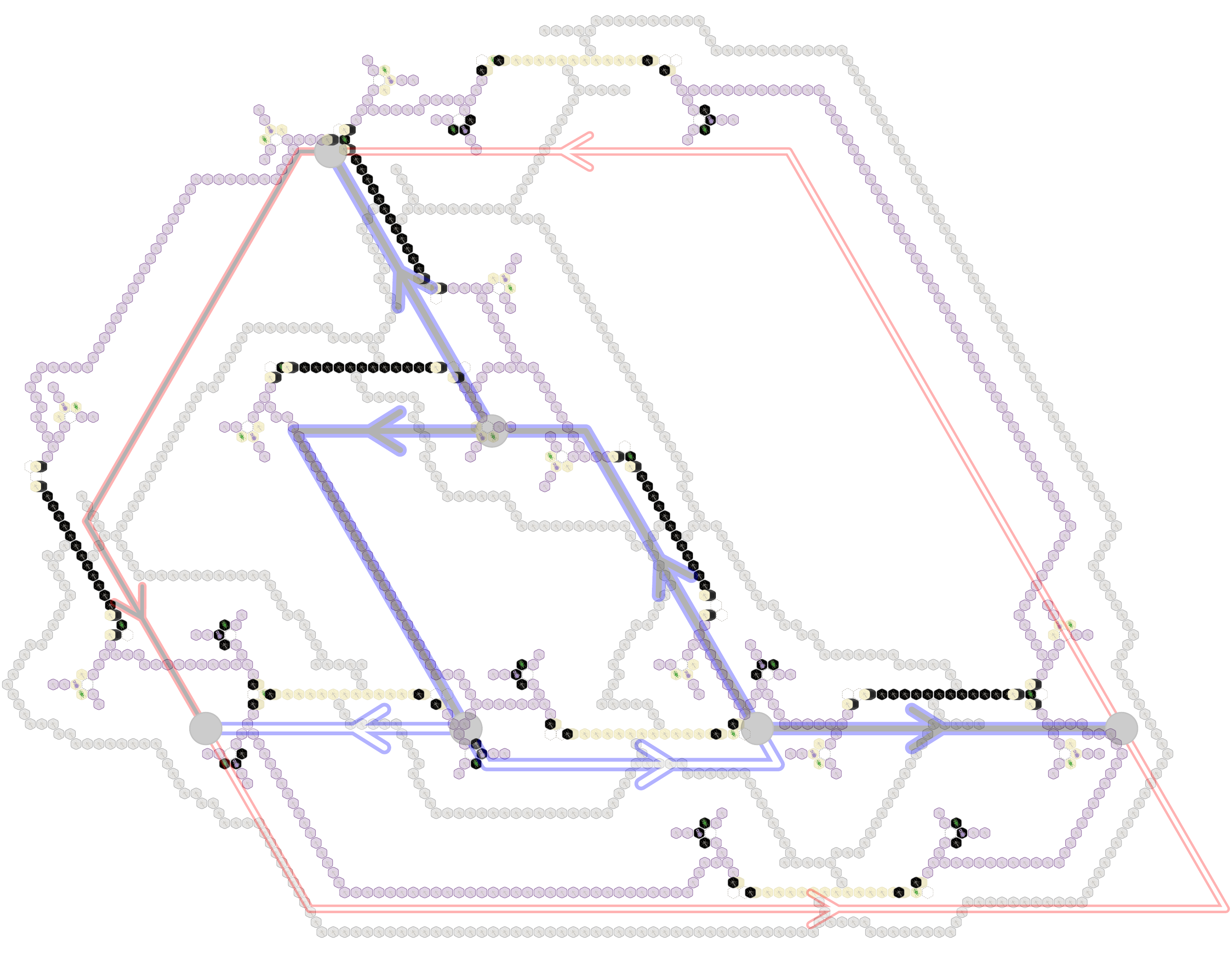}
    \caption{An example of the assembled vertex, edge, and connector gadgets,
    with a semi-transparent overlay of the oriented AND/OR graph embedding.}
    \label{fig:hive-instance}
\end{figure}

\subsection{Pass gadgets and unused pieces}
\label{sec:pass}

We realize optional source passes (Remark~\ref{rem:no-pass}) with one
gadget per player in the outer face (Figure~\ref{fig:pass-component}).
Five consecutive cells in a straight row contain flat Mosquitoes.
The two cells immediately beyond the row are the player's Grasshopper's
resting cells: it occupies one and jumps to the other, which is empty.
The outer-face connector attaches to the middle row cell, away from
both resting cells. Each resting cell touches only its row neighbour,
so no other jump ray is available.

These jumps preserve connectivity and change no encoded orientation.
The Mosquito beside the Grasshopper is a cut piece; all others see only
Mosquitoes. Thus the jumps are always available and are the gadget's
only moves.

\begin{figure}[H]
    \centering
    \begin{minipage}[c]{0.40\linewidth}
      \centering
      \includegraphics[width=\linewidth]{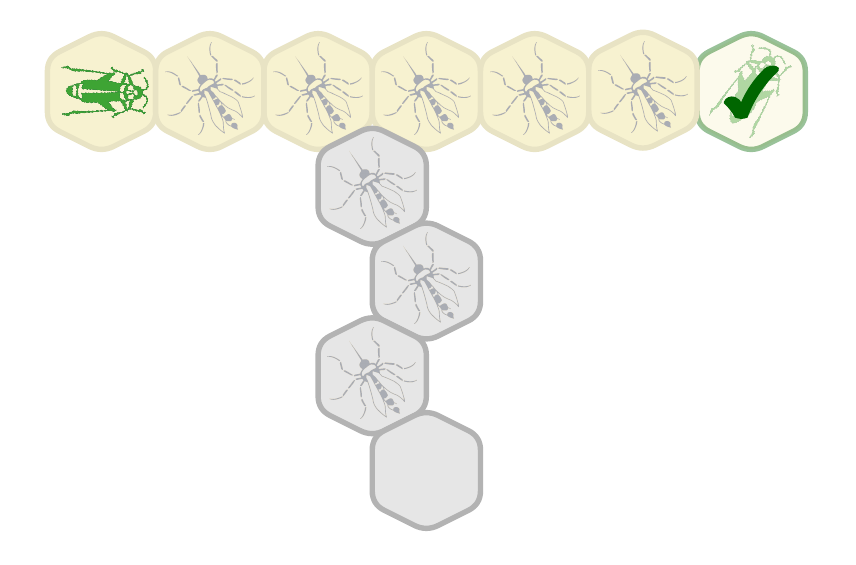}
    \end{minipage}\hfill
    \begin{minipage}[c]{0.50\linewidth}
      \centering
      \includegraphics[width=\linewidth,trim=0 0 0 70bp,clip]{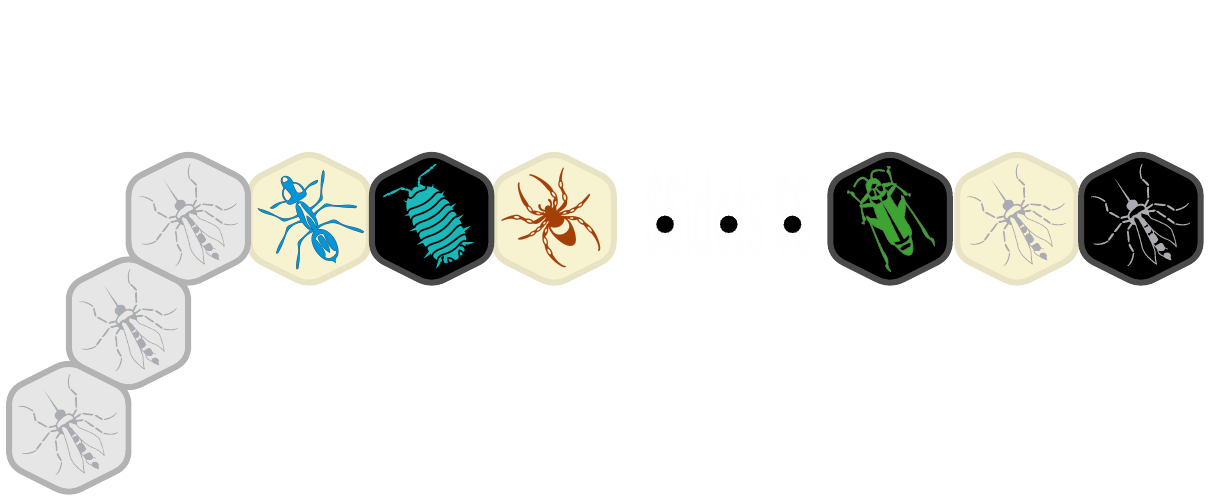}
    \end{minipage}
    \par\smallskip
    \begin{minipage}[t]{0.40\linewidth}
      \centering\scriptsize
      (a) The Grasshopper pass for White.
    \end{minipage}\hfill
    \begin{minipage}[t]{0.50\linewidth}
      \centering\scriptsize
      (b) Additional pieces with a two-Mosquito tail.
    \end{minipage}
    \caption{A pass gadget and an immobile inventory appendage.
    In the appendage, the additional pieces may have either colour. Gray cells belong to the outer-face connector.}
    \label{fig:pass-component}
    \label{fig:inventory-appendage}
\end{figure}

\FloatBarrier

\begin{remark}[Unused inventory pieces]
\label{rem:unused-inventory}
In games like generalized Chess, captures can leave a valid instance
without any pieces of a certain type. In \Hive{}, every inventory
piece must be already placed or in reserve, so omitting a required type
defines a different variant. Requiring further types or copies could
lower the complexity; EXPTIME-hardness for a restricted inventory
therefore does not automatically extend to such versions.

In our construction, however, arbitrarily many additional non-Queen
pieces can be neutralized in a straight
appendage joined to an outer-face connector by flat Mosquitoes
(Figure~\ref{fig:inventory-appendage}). The row starts directly after
this chain and ends with two flat Mosquitoes, preceded by a non-Pillbug
piece. Every appendage piece except the free-end Mosquito is a cut
piece; the endpoint sees only a Mosquito. Pillbugs and Mosquitoes
copying them can relocate only cut pieces, since the endpoint is
out of reach. Thus EXPTIME-hardness extends to every
inventory variant with enough Grasshoppers, Beetles, and Mosquitoes
to realize the construction.
\end{remark}

\subsection{Royal regions}
\label{sec:royal}

Reversing player $C$'s target edge sends its Grasshopper into what we call
$C$'s royal region, containing the opposing Queen $Q^{\overline{C}}$ and its
mate cell $F^{\overline{C}}$. The reversal does not immediately surround
$Q^{\overline{C}}$: it starts the \emph{completion phase}, a long sequence folded
into a serpentine of quadratic length and linear diameter. This delay
allows the punishment mechanisms of the preceding section to finish first.
We describe the region entered by White's target Grasshopper and shown in
Figure~\ref{fig:royal-region}; Black's region is its colour-swapped mirror.

Choose even parameters $R,L=\Theta(s)$, with $R\ge4$ and $L\ge8$.

\paragraph{The serpentine.}
The occupied support consists of $R$ parallel rows, each containing
$L+1$ cells. Consecutive rows are traversed in opposite directions
and joined by constant-size exterior turns. The ordinary supports are flat White Grasshoppers.

The traversing Grasshopper $G$ visits empty cells alternately on
the two sides of a row. Each jump crosses one occupied support cell
and advances to the next crossing position. A row has $L$ crossing
positions, numbered $0,\ldots,L-1$; an exterior turn contributes
one further landing before the next row begins. The even value of
$L$ and the alternating offsets keep the turns outside the rows.
Separation between folds prevents a turn landing from giving access
to the axis of another row, which would allow that row to be crossed
in a single jump.

At each row entrance, $G$ can touch three consecutive support cells.
Without an additional attachment, this contact could make the
middle support removable. We attach a short pendant branch to
that support, ending in the Beetle stack
\[
  (\mathrm{Bt}^{W},\mathrm{Bt}^{W},\mathrm{Bt}^{B}),
\]
listed from bottom to top. Removing the support would disconnect
the branch, so it remains a cut piece. We call these \emph{lateral stacks}.
They lie beside the serpentine and do not replace its crossing supports.

\paragraph{The entrance and intended continuation.}
The winning end of the target-edge gadget leads directly into the region,
without an arm extending to the corresponding vertex gadget: this edge
supplies no inflow there before its reversal, which ends the source game.

White's target reversal places $G$ at the first crossing position
$v_0$. This is the only entrance to the serpentine from the
surrounding construction. Its final jump reaches an initially
empty cell $v_f$.

The Black Queen $Q^B$ has five occupied neighbours and one empty
neighbour, the mate cell $F^B$. We call the constant-size structure
containing the Queen and these surrounding cells the \emph{keep}.
The five occupied neighbours contain copies of the covering stack
$MS^B$ defined in Section~\ref{sec:edge-gadget}, with Black lids and
White bodies. While these neighbours remain occupied, Freedom to Move
prevents the Queen from sliding to $F^B$, since both flanking cells are
occupied.
A straight ramp of White Grasshoppers
$G_1,G_2,G_3,\ldots$ points towards $F^B$. The first Grasshopper, $G_1$,
is covered by a Black Beetle.
This is the only Beetle-over-Grasshopper stack in the serpentine
and ramp. The other ramp Grasshoppers are initially cut pieces.
Although their forward jump ray leads to $F^B$, One Hive prevents
them from leaving.

The last occupied cell crossed by $G$ carries the exit stack:
a Black Beetle lid above a constant-size White Beetle body.
The body keeps this cell occupied when an exposed White Beetle
leaves it. The cell $v_f$ lies beside this stack and the start of the ramp.

Before the terminal connection, with the stacks undisturbed,
One Hive pins the serpentine supports and the uncovered ramp
Grasshoppers. At row entrances, the pendant branches preserve
this pinning despite the additional contacts made by $G$.
The Beetle bodies and $G_1$ remain covered.
The Black Beetle lids can move; we discuss their moves below.

Occupying $v_f$ joins the serpentine to the ramp and closes a cycle
through the keep and its external connection. This releases $G_2$
and the later ramp Grasshoppers. A released Grasshopper jumps along
the occupied ramp to $F^B$, surrounding $Q^B$.
The role of $G$ is to maintain the connection at $v_f$; the winning
jump is made by a Grasshopper already on the ramp.

The neighbouring empty cell $v_f^\star$ provides an alternative
connection at $G_3$. It is not another landing of the prescribed
countdown. Its name is used below to distinguish the possible
exit-lid moves.

The keep attaches to the target-edge gadget through a connector
of constant length. The long side of the region is the Grasshopper
ramp, traversed by the winning jump. External pieces are separated
from the serpentine except at the declared interfaces, so they
create neither additional contacts nor additional jump landings.

\begin{figure}[tb]
    \centering
    \includegraphics[width=0.95\linewidth]{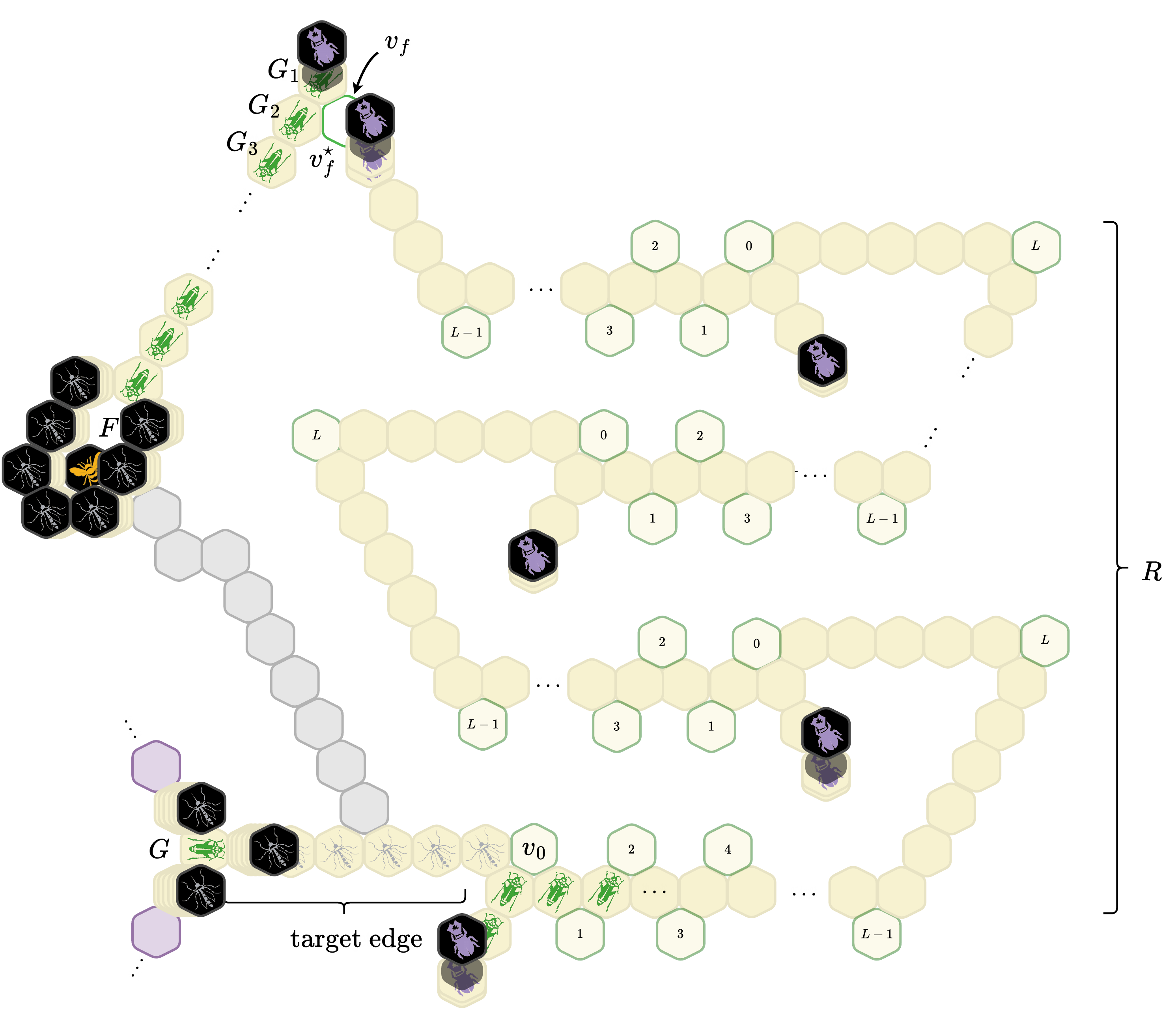}
    \caption{White's royal region, containing $Q^B$ and $F^B$ (labelled $F$).
    The numbers indicate the target Grasshopper's traversal order along each row.}
    \label{fig:royal-region}
\end{figure}

\paragraph{Lateral and terminal lid moves.}
If Black moves a lateral lid, the upper White Beetle underneath
covers it on White's next move and remains there as a guard.
The lower White Beetle stays at the source, preserving the
attachment. This also permits the reply when the lid descends
into an empty cell beyond the pendant branch.

Each lateral lid is permanently covered, and the lid move and its
countercover consume one turn per player without changing the rest
of the construction. Suppose the defender has a winning or drawing
strategy $\tau$. We obtain a strategy with the same guarantee by
omitting these pairs. Indeed, any attacker strategy against the
resulting strategy can be extended to one against $\tau$ by inserting
the prescribed countercovers, keeping the guards fixed, and leaving
the newly exposed Beetles unused. Since the pairs preserve the player
to move and all existing continuations, the corresponding plays have
the same outcome. Moreover, only finitely many pairs can occur, as
each lateral lid moves at most once. Thus the defender may avoid
lateral lid moves without loss of generality and therefore the strategies constructed below keep their own lateral lids fixed.

This argument applies both before the target reversal and during
the countdown. The prescribed response requires no punishment rover.

Moving the exit lid into $v_f$ immediately creates the connection
that releases $G_2$; moving it into $v_f^\star$ releases $G_3$.
In either case White has the corresponding winning ramp jump.

If the exit lid moves elsewhere, White uses the exposed body
Beetle to occupy either $v_f$ or $v_f^\star$, leaving the exit cell
occupied. Regardless of the position chosen, the displaced lid cannot cover both $G_2$ and $G_3$ on Black’s next turn. Thus White wins.

The guard of $G_1$ can move, but must keep $G_1$ immobile,
either covered or pinned by One Hive. Such harmless moves represent
source passes before either target reversal (and provide the pass
mechanism in the Mosquito-free reduction of
Section~\ref{sec:beetle-stacks}). Freeing $G_1$ gives White
an immediate winning jump along the ramp to $F^B$, even before
the target reversal.
After the terminal connection, the guard cannot prevent the win:
covering $G_2$ frees $G_1$, whereas keeping $G_1$ pinned leaves
a released ramp Grasshopper available for the winning jump.
Thus the terminal mechanism uses Beetles and Grasshoppers
throughout; it requires no Mosquito movement ability. After the terminal connection, some ordinary supports may cease
to be cut pieces. They all belong to White, which leaves them
in place. Their release therefore gives Black no additional
move that can break the connection.

\begin{lemma}
\label{lem:countdown-progress}
The first player to enter its royal region during faithful play
can force a win in at most $W(R,L)=R\cdot(L+1)$ further moves
of its own, excluding immediate replies to neutralized deviations:
each such deviation--reply pair consumes one turn of each player
without advancing towards a win.
\end{lemma}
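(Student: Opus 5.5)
We argue for White's region; Black's case follows by colour symmetry. The plan is to exhibit an explicit strategy for the entering player and count its moves. The strategy has three parts: follow the prescribed serpentine traversal with $G$, answer each neutralizable deviation with its prescribed local reply, and make the winning ramp jump once $v_f$ is occupied.

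\textbf{Counting the traversal.} There are $R$ rows with $L$ crossing positions each, and each of the $R-1$ exterior turns adds one landing, with one final jump to $v_f$. Counting jumps from $v_0$, the traversal therefore uses at most $R\cdot L+(R-1)+1=RL+R-1$ moves of $G$. Adding the single ramp jump to $F^B$ gives at most $R(L+1)=W(R,L)$ moves of White. Before counting, we check that each prescribed jump is legal and forced to land where intended:
\begin{enumerate}[label=\textup{(\alph*)}]
\item the next crossing position is the first empty cell beyond exactly one occupied support;
\item $G$'s departure never disconnects the hive, because the supports are cut pieces held in place by the pendant branches and the row structure;
\item separation between folds gives no ray that skips a row.
\end{enumerate}
At $v_f$, we invoke the terminal-connection argument already given: occupying $v_f$ closes the cycle through the keep and releases $G_2$. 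The only possible defence is the guard of $G_1$, and any guard move either frees $G_1$ or leaves a released ramp Grasshopper. Either way White has a ramp jump to $F^B$ on its next move, completing the sixth neighbour of $Q^B$.

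\textbf{Handling Black's interference.} This is the main obstacle. We must show that no Black move during the countdown delays the count except through a neutralized deviation--reply pair. Black's mobile pieces in or near the region fall into four classes.
\begin{itemize}[leftmargin=*]
\item \emph{Lateral lids.} Each is countercovered by the upper White Beetle. The pair preserves the rest of the region, and each lid moves at most once.
\item \emph{The exit lid.} Moving it to $v_f$ or $v_f^\star$ gives White an immediate ramp win. Moving it elsewhere lets White occupy $v_f$ or $v_f^\star$ with the exposed body Beetle and win next turn. Both cases only shorten the count.
\item \emph{The guard of $G_1$.} Its harmless moves keep $G_1$ immobile and change nothing White uses. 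Freeing $G_1$ gives White an immediate win.
\item \emph{Pieces outside the region.} These are covered by Lemma~\ref{lem:deviation-no-shortcut} and the local trap analyses. Every Black deviation elsewhere has a prescribed local reply, and by separation none of them touches the serpentine, the keep or the ramp. A faithful Black move, meaning an edge reversal or pass, cannot reach the region either, since it is the unique entrance area.
\end{itemize}
Two points remain to be verified at this step. First, the serpentine supports and ramp Grasshoppers, all White, stay cut pieces or are left in place by White, so Black cannot break a needed connection. Second, a Black reversal of its own target cannot intervene, because White entered first and Black would still need its own full traversal of length $W(R,L)$, which is no faster. Combining these, every White move is either a step of the count or an immediate reply to a neutralized deviation, which gives the bound.
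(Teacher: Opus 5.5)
\textbf{A missing case and an off-by-one count.} Your strategy is the paper's. You count the cells of the prescribed route and add the final ramp jump. You answer each Black interference with its one-move local reply or take an earlier win. You then observe that the omitted deviation--reply pairs preserve White's lead in the race between equal-length completions. Several steps, however, need repair.

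\emph{The count.} The displayed count is off by one: $R\cdot L+(R-1)+1$ equals $R(L+1)$, not $RL+R-1$. That sum counts the cells of the route \emph{including} $v_0$: the $RL$ crossing positions, the $R-1$ turn landings, and $v_f$. The target reversal has already put $G$ on $v_0$, so the traversal takes $R(L+1)-1$ jumps, and only then does the ramp jump bring the total to $W(R,L)$. Read literally, your sum plus the ramp jump gives $R(L+1)+1$.

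\emph{Item (b).} Your item (b) is beside the point. $G$ may depart because its occupied neighbours are supports of a single row or turn, which stay connected without it. The supports being cut pieces is what keeps \emph{them} in place, not what frees $G$.

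\emph{The keep lids.} More substantively, your four classes of Black's mobile pieces are presented as exhaustive, yet they omit the Black lids of the five copies of $MS^B$ in the keep around $Q^B$. Your fourth class concerns pieces outside the region, which you say do not even touch the keep. These lids sit next to $Q^B$, and the two flanking ones sit next to $F^B$, so they are exactly the pieces that must be checked. The paper's proof covers them through its appeal to the local responses of Section~\ref{sec:deviations-traps}. A lid descending into $F^B$ surrounds Black's own Queen. Any other lid move is countercovered in one move by the exposed White body Mosquito, which also leaves a White rover within constant distance of $F^B$.

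\emph{Two smaller points.} Lemma~\ref{lem:deviation-no-shortcut} is stated only for play before either target reversal. During the countdown you should therefore appeal to the local analyses directly, as the paper does. Also, to obtain a \emph{forced} win rather than just a bound on White's non-reply moves, you must note that every neutralized piece stays immobilized. Only then are there finitely many deviation--reply pairs, which the paper states explicitly.
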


\begin{proof}
By colour symmetry, assume White enters first.
The prescribed route contains $RL$ crossing positions,
$R-1$ intermediate turn landings, and $v_f$.
Since the target reversal has already placed $G$ at $v_0$,
reaching $v_f$ takes $R(L+1)-1$ further moves.
One final ramp jump occupies $F^B$.
Fold separation prevents shortcuts, and One Hive pins the ramp
until the terminal connection is made.

White keeps its ordinary supports fixed. The local responses above
and in Section~\ref{sec:deviations-traps} either yield an earlier win
or neutralize a deviation in one move; harmless deviations need no
reply. Each neutralized deviation--reply pair consumes one turn per
player, does not advance Black towards a win, and cannot recur.
Omitting these finitely many pairs preserves White's lead in the race between
equal-length royal completions, so White wins first.
\end{proof}

With $R,L=\Theta(s)$, the countdown has length $\Theta(s^2)$.
The support rows use $O(RL)$ pieces; the turns and constant-height
stacks add $O(R)$, and the ramp adds $O(R+L)$.
The whole region therefore uses $O(s^2)$ pieces and fits inside
a parallelogram whose side lengths are $O(s)$. By the embedding
construction, every face contains a parallelogram with side
lengths $\Theta(s)$ (Lemma~\ref{lem:layout}), providing space for the
royal region, including its longer turns.

The race in Lemma~\ref{lem:countdown-progress} compares remaining
completion times. By contrast, the bound $P_B(N,s)<W(R,L)$ below
compares punishment with a full traversal before the deviator
enters its royal region.

% V4-CHECK ROYAL COMPLETION: the support-colour issue is addressed:
% ordinary supports belong to the completing player, who keeps them fixed.
% Still verify the response to deviations during completion using remaining
% traversal times, and justify that neutralized lateral pairs preserve
% relevant jump rays, gates, and the residual routes of both royal regions.

\subsection{Assembly and distance bounds}
\label{sec:correctness-assembly}

We now join the local gadgets and compare the global punishment route
with the royal countdown.

\begin{lemma}
\label{lem:global-assembly}
The gadgets can be assembled in polynomial time into a single \Hive{} position
such that
\begin{enumerate}[label=\textup{(\roman*)}]
\item faithful configurations satisfy the One Hive rule exactly when their
      encoded source orientations are legal;
\item no connector gadget changes a vertex-gadget contact or a faithful Grasshopper
      landing;
\item every trap release point has an upper route to the appropriate keep, of
      length $O(N^2s)$.
\end{enumerate}
\end{lemma}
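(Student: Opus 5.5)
The proof assembles the already constructed components in a fixed order and checks each clause against the local lemmas. We first compute the layout of Lemma~\ref{lem:layout} with $s=\Theta(N^3)$ and place, in this order, the vertex cores and arms, the edge gadgets on their chosen segments $\sigma_e$, and the face connectors $C_f$. We then add the pass gadgets and any inventory appendage in the outer face. Finally, we place the two royal regions: each goes inside a bounded face adjacent to the corresponding target edge, or in the outer face if necessary, using the $\Theta(s)$ parallelogram of Lemma~\ref{lem:layout}(3). Every component is an explicit constant-size pattern or a straight run of length $O(s)$ repeated along the drawing. The number of pieces is therefore $O(N^2s+s^2)$, and writing them down is polynomial in $N+s$, hence in $N$.

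For (i), we invoke Lemma~\ref{lem:onehive}. It already shows that the edge, vertex, and connector gadgets are connected exactly when the encoded orientation is legal. It remains to check that every added structure (pass gadgets, appendages, trap supports, royal regions with their keeps) is a pendant attached to the fixed exterior component at a dead zone or connector cell, and never to a floating component. Adding such pieces preserves connectivity of the exterior component and neither creates nor removes a path between a floating component and the exterior. The equivalence of Lemma~\ref{lem:onehive} is therefore unchanged. The royal entrance attaches to the target-edge row only at its winning end, where no vertex arm exists, so it supplies no spurious inflow contact.

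For (ii), we use separation. By Lemma~\ref{lem:layout}(2), strips of nonincident features are disjoint. Connector attachments lie in dead zones at distance at least $2$ from both terminals (the cells $k_2,\dots,k_{\ell-3}$ with $\ell\ge6$). Routing avoids terminal neighbourhoods, Grasshopper rays, and landing cells. A faithful Grasshopper landing is the empty opposite terminal, whose only occupied neighbours are the edge-row and vertex stacks. Its unintended rays are exactly those analysed in the traps, and no connector cell enters the neighbourhood of a terminal. Consequently, no connector cell is adjacent to a floating component or to a landing.

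Item (iii) is the main obstacle. A rover released at a trap first leaves its vertex gadget by the passages described in Section~\ref{sec:deviations-traps}, in $O(1)$ moves plus $O(s)$ along an arm. It then moves above the edge--connector network, which is connected by Lemma~\ref{lem:onehive} and has total size $O(N^2s)$. A simple upper path from any edge row to the target-edge gadget of the appropriate player therefore has length $O(N^2s)$. From the target-edge gadget, a constant-length connector leads to the keep, where the rover descends onto the mate cell $F$. The delicate points are that the elevated route is never blocked by height gates, and that the rover never needs to pass through a royal serpentine or a stack whose lid the opponent controls. For the gates, we would argue that along connectors and dead zones all cells are flat Mosquitoes, so a Beetle step at height $1$ has travel level $1$ and is never blocked by two flat neighbours. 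For the stacks, we would route around all other stacks using the constant-width clearance, noting that the only stacks crossed are the terminal-contact stacks already handled. Summing the three segments gives the $O(N^2s)$ bound.
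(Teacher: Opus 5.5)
Your decomposition matches the paper's proof. You derive (i) from Lemma~\ref{lem:onehive} plus the fact that later structures touch only the connected skeleton, (ii) from the clearance of Lemma~\ref{lem:layout}, and (iii) from the $O(N^2s)$ size of the edge--connector network plus the constant-length connector to the keep.

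\textbf{The error in (i).} You list the trap supports among the added structures and require them to be pendants on the exterior component, never attached to a floating component. The construction does the opposite. Each trap lies inside its vertex gadget, and its pendant support is attached to the floating component through $\overline{G}$ (Section~\ref{sec:deviations-traps}). This is forced, not incidental. If the support also touched a dead zone or connector, $\overline{G}$ would permanently join the floating component to the exterior. That would destroy the gating of Lemma~\ref{lem:vertex-gating}(i), and $\overline{G}$ would no longer be a cut piece. The repair is to treat trap supports as vertex-gadget pieces already covered by Lemma~\ref{lem:onehive}, touching nothing outside their own component. Only the pass gadgets, appendages, and royal regions are new attachments to the skeleton, which is essentially the check the paper makes.

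\textbf{Height gates in (iii).} Here you take a different route. You argue that the route is flat away from the terminal-contact stacks, so no gate arises in the untouched position. With the paper's convention, which counts the moving Beetle in $h_x$, a rover on a flat cell has travel level $\max\{2,1+1\}=2$, and flat flanks of height $1$ do not reach it. Your ``travel level $1$'' is correct only for levels counted from $0$.

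The paper instead uses a local bypass valid for arbitrary heights. If a step $x\to y$ is blocked, the rover climbs onto a blocking flank $z$, for which $y$ is a lower flank. It then descends to $y$, for which the vacated column $x$ is a lower flank. This costs only a constant number of extra moves. Your flatness argument certifies the route only for the initial heights, whereas the bypass survives the height changes caused by deviations and countercovers. That robustness is what Lemma~\ref{lem:global-punishment-final} needs, since it invokes (iii) after deviations have occurred, so you should adopt the bypass rather than rely on flatness.
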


\begin{proof}
Lemma~\ref{lem:layout} supplies disjoint strips along the stretched edge arcs,
the required local connections to the vertex gadgets, and the necessary
clearance. By Lemma~\ref{lem:onehive}, the edge, vertex, and connector gadgets
have connected support if and only if the source orientation is legal. The pass gadgets
and royal regions are internally connected and attach to the skeleton without
creating additional contacts with the floating components. These additions
therefore preserve the equivalence, proving (i). The same clearance leaves
vertex contacts and faithful Grasshopper landings unchanged, proving (ii).
The skeleton reaches each keep through the constant-length
connector beside its target-edge gadget. A gate can be bypassed
by climbing a lateral tower and descending towards the next cell,
at constant additional cost. Since the stretched edge arcs have
total length $O(N^2s)$, the upper routes have the same order.
\end{proof}

\begin{lemma}
\label{lem:global-punishment-final}
If player $C$ is the first to deviate before either target edge
has been reversed, player $\overline{C}$ can surround $Q^C$
within
\[
  P_{\overline{C}}(N,s)=O(N^2s)
\]
of its own moves, regardless of $C$'s subsequent play.
\end{lemma}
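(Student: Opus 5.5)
We argue by cases on the first deviation, lid deviation or Grasshopper deviation, and in each case exhibit an explicit strategy for $\overline{C}$ with three phases: (a) the prescribed immediate local reply, which releases at least one rover; (b) escape of a rover from the vertex gadget onto the edge network; (c) travel of that rover along an upper route to the keep of $Q^C$, where it occupies the mate cell $F^C$. The total count of $\overline{C}$'s own moves is then $O(1)+O(1)\cdot I+O(N^2s)$, where $I$ is the number of interceptions that $C$ performs.

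For phase (a), a lid deviation is answered by the countercover of Section~\ref{sec:stack-trap}, which exposes two successive rovers. A Grasshopper deviation is answered by the terminal replacement of the Grasshopper trap, after which $\overline M$ climbs onto the support in one move. For phase (b), we use the paragraph ``Leaving the vertex gadget.'' If the guarded terminal is empty, the first rover serves as a bridge and the second crosses. Each interception by a $C$-lid is answered by a permanent countercover that costs $O(1)$ moves and consumes that lid forever.

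To keep $C$ from creating new obstacles while all this happens, we invoke Lemma~\ref{lem:deviation-no-shortcut} and its confinement invariant. The invariant says that $C$'s Grasshoppers stay at terminals or in traps, lids stay in place or are neutralized, and all other pieces of $C$ remain immobile apart from pass and royal pieces. Consequently the only ways $C$ can interfere with a rover are covering it with a lid, or making harmless royal-guard and pass moves; lateral lid pairs can be discarded by the omission argument of Section~\ref{sec:royal}.

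For phase (c), Lemma~\ref{lem:global-assembly}(iii) supplies an upper route of length $O(N^2s)$ from every trap release point to the appropriate keep. Its gates are bypassed by climbing lateral towers at constant cost. The route ends at the keep of $Q^C$; in the single-Queen construction this lies in $\overline{C}$'s royal region, reached through the constant-length connector beside the target-edge gadget. By the definition of the keep, $Q^C$ has five occupied neighbours, and Freedom to Move prevents it from sliding to $F^C$ while those neighbours stay occupied. The rover therefore descends into $F^C$, or climbs onto the stack covering it, and all six neighbours become occupied. We must also rule out $C$ vacating a neighbour of $Q^C$. Those neighbours are covering stacks $MS^C$ whose bodies belong to $\overline{C}$, so a lid move there leaves the cell occupied and triggers a countercover; this again costs $O(1)$ and cannot recur for that lid.

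The main obstacle is bounding the total number of interceptions. The key counting fact is that each interception permanently consumes one of $C$'s lids. Only lids within constant distance of the route's passage through the vertex gadget, together with the finitely many lids along the route, can ever intercept, and each lid adds only $O(1)$ moves. I would therefore try to show that only $O(1)$ lids are near the vertex passage and that only $O(1)$ lids lie near each gate on the route. That gives $O(N^2s)$ moves along the route overall, hence $I=O(N^2s)$ and a total of $P_{\overline{C}}(N,s)=O(N^2s)$.

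A second delicate point concerns $\overline{C}$'s own lids at $C$'s terminals and the possibility that $C$ wins by another route in the meantime. Both are handled by Lemma~\ref{lem:deviation-no-shortcut}, which leaves $C$ only the full royal traversal. The comparison of that traversal with the punishment time is not needed for this lemma, since it only bounds $\overline{C}$'s move count.
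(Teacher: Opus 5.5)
Your three-phase architecture matches the paper's: the prescribed local reply, escape from the vertex gadget, the upper route of Lemma~\ref{lem:global-assembly}(iii) to $F^C$, and Lemma~\ref{lem:deviation-no-shortcut} to confine $C$. The gap is in the move count, which is the only quantitative content this lemma adds. Your tally $O(1)+O(1)\cdot I+O(N^2s)$ charges $\overline{C}$ only for interceptions of the rover. Yet the confinement invariant you import from Lemma~\ref{lem:deviation-no-shortcut} holds only because $\overline{C}$ spends a move answering \emph{every} later deviation of $C$:
\begin{itemize}
\item lid departures nowhere near the rover;
\item further unintended Grasshopper jumps;
\item keep-lid moves;
\item lateral royal-lid moves.
\end{itemize}
You discard the last class by the omission argument of Section~\ref{sec:royal}. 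That argument only shows that a lid's owner loses nothing by leaving the lid in place. Here the bound must hold regardless of $C$'s play, so those countercovers must be paid for. Your proposal bounds none of these replies. Even for $I$ you only outline a local plan (``$O(1)$ lids near each gate'') and do not carry it out.

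The missing idea is a global count rather than a local one. Each reply permanently immobilizes the piece that triggered it. So every $C$-piece that can deviate costs $\overline{C}$ at most one constant-cost reply over the whole game, interceptions included: the $O(N)$ covering-stack lids and encoding Grasshoppers, and the $O(R)$ lateral royal lids. This gives $P_{\overline{C}}(N,s)\le C_1N^2s+C_2(N+R)+C_3=O(N^2s)$, since $R=\Theta(s)$, with no analysis of which lids lie near the route.

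Separately, your case split misses a first deviation inside a royal region. Examples are freeing $G_1$ by moving its guard, where no Beetle body lies beneath the guard so your countercover response does not apply, or moving the exit lid. The paper disposes of these first: Section~\ref{sec:royal} gives $\overline{C}$ a win in $O(1)$ moves.
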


\begin{proof}
A first deviation in a royal region gives $\overline{C}$ a win
in a constant number of moves by Section~\ref{sec:royal}.
Otherwise, the local responses of Section~\ref{sec:deviations-traps}
neutralize the deviation and provide a rover that can reach
the edge network. By Lemma~\ref{lem:global-assembly}, this rover
has a route of length $O(N^2s)$ to the mate cell $F^C$,
including its escape from the releasing gadget.
It retains Beetle movement until its final descent.

Along this route, the rover encounters covering stacks only
at vertex gadgets and at the keep. The local escape analysis in
Section~\ref{sec:deviations-traps} handles both protected passages
after Grasshopper deviations and possible interceptions after
lid deviations, at $O(1)$ additional cost per attacking lid.
At the keep, the prescribed local response releases a rover
already within constant distance of $F^C$, allowing
$\overline{C}$ to finish in $O(1)$ further moves.

Player $\overline{C}$ answers each further deviation before
resuming the route, keeping all prescribed guards and terminal
replacements in place. The local analyses ensure that these
responses remain available, permanently immobilize the deviating
pieces, and preserve the rover's escape route. Any affected
$\overline{C}$-Grasshoppers can simply be left fixed.
Even if vertex constraints are relaxed,
Lemma~\ref{lem:deviation-no-shortcut} still requires $C$
to reverse its target edge and complete the full royal traversal.
The royal responses of Section~\ref{sec:royal} likewise
neutralize an interruption or give an earlier win.

There are $O(N)$ covering-stack lids and encoding Grasshoppers,
and $O(R)$ lateral royal lids. Each can cause at most one
neutralized interruption, at constant cost. Hence, for suitable
absolute constants $C_1,C_2,C_3$,
\[
  P_{\overline{C}}(N,s)
  \le C_1N^2s+C_2(N+R)+C_3
  =O(N^2s),
\]
since $R=\Theta(s)$. This bound includes all subsequent
replies and the constant-cost completion at the keep.
\end{proof}

\begin{corollary}
\label{cor:timing-separation}
The stretch factor $s$ can be chosen polynomially so that
\begin{equation}
\label{eq:punishment-before-win}
  P_B(N,s)<W(R,L).
\end{equation}
\end{corollary}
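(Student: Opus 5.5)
The plan is to compare the two explicit growth rates. By Lemma~\ref{lem:global-punishment-final} there is an absolute constant $K$ such that
\[
  P_B(N,s)\le C_1N^2s+C_2(N+R)+C_3\le K\,N^2s
\]
for all $N\ge1$ and all sufficiently large $s$. The constants $C_1,C_2,C_3$ depend only on the fixed local gadget patterns, the constant stack height $h$, and the constant-cost replies; they do not depend on $N$ or $s$. Here we use that $R=\Theta(s)$ with an absolute constant and $N\le N^2s$. The same bound holds for $P_W$, by colour symmetry of the construction.

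On the other side, Lemma~\ref{lem:countdown-progress} gives $W(R,L)=R(L+1)$, which is exactly the length of the prescribed traversal. With $R,L=\Theta(s)$, I would fix explicit choices, say $R=2\lfloor c s/2\rfloor$ and $L=2\lfloor c s/2\rfloor$ for a small absolute constant $c>0$. The constant $c$ is chosen so that the region fits inside the $\Theta(s)$ parallelogram supplied by Lemma~\ref{lem:layout}(3). This choice makes $R,L$ even with $R\ge4$ and $L\ge8$ once $s$ is large, and it gives $W(R,L)\ge c' s^2$ for an absolute $c'>0$.

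It then suffices to have $K N^2 s< c' s^2$, that is, $s> (K/c')N^2$. Taking $s=\lceil (K/c')N^2\rceil+s_0$ works, and so does the overview's choice $s=\Theta(N^3)$ with a large constant, which gives extra slack. Here $s_0$ is the threshold for "sufficiently large $s$" in Lemma~\ref{lem:layout} and in the gadget margins. This $s$ is polynomial in $N$. The construction then uses $O(N^2s+s^2)$ pieces and is built in time polynomial in $N$, because Lemma~\ref{lem:layout} is polynomial in $N+s$.

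The main obstacle is making sure the constants in $P_B$ are truly independent of $s$ and $N$ beyond the stated $N^2s$ dependence. In particular, the $O(R)$ term from lateral royal lids must not secretly carry an $s^2$ factor, and the punishment bound must cover the route all the way to the keep. I would handle this by going back through the proof of Lemma~\ref{lem:global-punishment-final} and checking three things. First, each interception costs $O(1)$. Second, the number of lids is $O(N+R)$. Third, the route length from Lemma~\ref{lem:global-assembly}(iii) is linear in the total arc length $O(N^2s)$. All remaining terms are then subquadratic in $s$, so the quadratic $W(R,L)$ dominates.
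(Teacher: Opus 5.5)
Your proposal is correct and follows the paper's argument: bound $P_B(N,s)\le AN^2s$ via Lemma~\ref{lem:global-punishment-final}, bound $W(R,L)=R(L+1)\ge as^2$ using $R,L=\Theta(s)$, and then choose $s$ as a sufficiently large polynomial in $N$. The differences are cosmetic. The paper takes $s=m\lceil KN^3/m\rceil$, rounding to a multiple of a fixed $m$ to meet the gadgets' divisibility requirements and using $N^3$ for extra slack. Your choice $s=\Theta(N^2)$ already suffices, and you handle parity directly by choosing $R,L$ even. Because the condition $KN^2s<c's^2$ is monotone in $s$, rounding $s$ up to any required multiple costs you nothing.
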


\begin{proof}
By Lemma~\ref{lem:countdown-progress} and the choices $R,L=\Theta(s)$ in
Section~\ref{sec:royal}, an undisturbed royal completion takes exactly
$W(R,L)=R(L+1)$ moves and satisfies $W(R,L)\ge as^2$ for some constant $a>0$.
Lemma~\ref{lem:global-punishment-final} gives $P_B(N,s)\le AN^2s$ for another
constant $A>0$. Let $m$ absorb the fixed divisibility requirements of the gadgets
and choose
\[
  s=m\left\lceil\frac{KN^3}{m}\right\rceil
\]
with $K>A/a$ and large enough for all geometric requirements. Since
$N\ge1$, we have $AN^2s<as^2\le W(R,L)$.
This compares punishment with a full royal traversal; it applies
when the deviating player has not already entered that traversal.
The choice $s=O(N^3)$ is polynomial.
\end{proof}

\subsection{Correctness and complexity}
\label{sec:reduction-correctness}

We now prove that the resulting position preserves winning strategies. Let $I$ be a
source instance, and let $\mathcal{G}$ be its constraint graph. Recall that
$N=|V(\mathcal{G})|+|E(\mathcal{G})|$.

Construct the position described in Sections~\ref{sec:embedding}--\ref{sec:royal}
using the value of $s$ fixed above and the same parameters $R,L$ in both royal
regions. The layout and gadget size bounds give a total piece count of
\[
  O(N^2s+s^2),
\]
which is $O(N^6)$ for $s=\Theta(N^3)$. All coordinates, colours, types, and stack
orders are produced explicitly, so the construction takes polynomial time.

We balance the inventories using the immobile appendage of
Remark~\ref{rem:unused-inventory}, leaving no pieces in reserve.
The added pieces are linear in the construction size and do not
affect the simulation.

\begin{theorem}[Correctness of the reduction]
\label{thm:reduction-correctness-final}
White can force its target reversal in $I$ if and only if White can force a win
in the constructed \singleQueenHive{} position.
\end{theorem}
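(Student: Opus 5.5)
We prove the two directions by strategy transfer, with a symmetric argument for each. Both directions rest on the same translation: a play of the \Hive{} position that stays faithful corresponds to a play of $I$. Faithful moves map to source edge reversals or passes. Completed neutral interruptions and harmless royal-guard moves map to nothing, or to passes. By Lemma~\ref{lem:vertex-gating} and Lemma~\ref{lem:global-assembly}(i), this correspondence is a bijection between legal faithful moves and legal source moves, pass moves included via Section~\ref{sec:pass}. It preserves the player to move. We also use the earlier argument that lateral lid pairs can be omitted without loss of generality.

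\emph{($\Rightarrow$)} Suppose White has a strategy $\sigma$ in $I$ forcing its target reversal. White plays the \Hive{} strategy $\hat\sigma$ defined as follows.
\begin{itemize}
\item While play is faithful, White copies $\sigma$ through the correspondence.
\item After a neutral interruption, White plays the prescribed countercover and then resumes $\sigma$.
\item At Black's first deviation, White switches to the punishment of Lemma~\ref{lem:global-punishment-final}. It answers every further deviation with its prescribed local reply.
\end{itemize}
Consider first a faithful play. Black can only realise source moves, so $\sigma$ forces White's target reversal in finitely many source turns. Since the source game is a win for White, Black's target is never reversed first. White then enters its royal region first, and Lemma~\ref{lem:countdown-progress} gives a forced win.

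Now suppose Black deviates first, before either target reversal. Lemma~\ref{lem:global-punishment-final} lets White surround $Q^B$ within $P_W(N,s)$ own moves. Lemma~\ref{lem:deviation-no-shortcut} says Black can only win by reversing its target and completing the full traversal, which needs at least $W(R,L)$ moves. By Corollary~\ref{cor:timing-separation} (colour-swapped, so $P_W<W$), White finishes strictly earlier. The punishment route never surrounds White's own Queen, so the outcome is not a draw.

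One more case must be checked: a deviation made by Black after White's target reversal. Here Lemma~\ref{lem:countdown-progress} already applies, because it counts only White's remaining moves and excludes neutralized pairs. White simply continues the countdown.

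\emph{($\Leftarrow$)} Suppose White cannot force its target reversal in $I$. In a 2CL game where White has no forced win, Black has a strategy $\tau$ that guarantees either its own target reversal or an infinite draw. Black transfers $\tau$ symmetrically, as $\hat\tau$.
\begin{itemize}
\item Faithful White moves are translated and answered by $\tau$.
\item Neutral interruptions by White receive countercovers.
\item A first White deviation triggers Black's punishment, with White's royal traversal bounded below by $P_B(N,s)<W(R,L)$.
\end{itemize}
In a faithful play, either Black reverses its target first and wins via Lemma~\ref{lem:countdown-progress}, or play is infinite. An infinite play is a draw, so White does not win. It remains to check that White cannot win in some other way. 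Lemma~\ref{lem:deviation-no-shortcut} rules out shortcuts. The royal-region analysis rules out gains from Black's guard moves, since those are harmless passes.

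The main obstacle is the interleaving of deviations with a traversal already in progress. A player may deviate after the opponent has begun a royal completion. Conversely, a deviation may occur once the deviator itself is partway through its own completion. In that case Corollary~\ref{cor:timing-separation}, which compares punishment against a full traversal, does not apply. I would handle this by case analysis on who first reversed a target. If someone has reversed a target, that player just continues the countdown and ignores everything except immediate local replies. Lemma~\ref{lem:countdown-progress} and the royal lid analysis give the needed bound on remaining moves. If no target has been reversed yet, the first deviator faces the full traversal and the corollary applies. The key point to verify is that these two regimes never overlap. In particular, punishment is only ever invoked against a player who has not yet entered its royal region, and a player already in the countdown never needs to punish.
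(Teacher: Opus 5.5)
Your proposal is correct and follows the paper's proof essentially step for step. Both use the same ingredients: strategy transfer through designated moves, countercovers for neutral interruptions, punishment of a first deviation made before either target reversal via Lemmas~\ref{lem:global-punishment-final} and~\ref{lem:deviation-no-shortcut} together with \eqref{eq:punishment-before-win}, and Lemma~\ref{lem:countdown-progress} once a target has been reversed. The only thing to tidy is your first bullet in ($\Rightarrow$): White should switch to punishment only when Black's first deviation comes before both target reversals, as your closing case split already prescribes.
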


\begin{proof}[Completeness]
Let $\sigma$ be a winning strategy for White in $I$. White implements each move prescribed by $\sigma$ with the corresponding
designated \Hive{} move; edge and pass jumps implement legal source moves by
Lemmas~\ref{lem:vertex-gating} and~\ref{lem:global-assembly} and
Section~\ref{sec:pass}; harmless royal-guard moves also represent passes.
Whenever Black makes a neutral interruption, White answers with the
prescribed countercover. This pair changes no encoded orientation and
restores the source position and player to move, so White continues to
follow $\sigma$ from that state.

Before either target reversal, the remaining royal deviations give White an early win by
Section~\ref{sec:royal}. If Black deviates in an edge or vertex gadget
before either target reversal, White punishes by
Lemma~\ref{lem:global-punishment-final}. Black still needs its full
royal traversal by Lemma~\ref{lem:deviation-no-shortcut}, so
\eqref{eq:punishment-before-win} ensures White wins first.

Otherwise $\sigma$ brings White into its royal region first, and
Lemma~\ref{lem:countdown-progress} gives a forced win against all
subsequent Black play.
\end{proof}

\begin{proof}[Soundness]
We prove the contrapositive. Suppose White cannot force its target reversal
in $I$, and fix a Black strategy $\tau$ that prevents it.
Black implements each move prescribed by $\tau$ with its corresponding
designated Hive move. Whenever White makes a neutral interruption,
Black answers with the prescribed countercover. This pair changes no
encoded orientation and restores the source position and player to move,
so Black continues to follow $\tau$ from that state, treating harmless
royal-guard moves as passes.

Before either target reversal, the remaining royal deviations give Black an early win. If White
deviates in an edge or vertex gadget before either target reversal,
Black punishes by Lemma~\ref{lem:global-punishment-final}.
Lemma~\ref{lem:deviation-no-shortcut} requires White's full traversal,
while \eqref{eq:punishment-before-win} lets Black win sooner.

Otherwise $\tau$ prevents White's target reversal. If Black reverses
its own target, Lemma~\ref{lem:countdown-progress} gives a Black win;
if neither target is reversed, White never reaches a winning completion.
Thus White cannot force a win in the constructed position.
\end{proof}

Since only finitely many interruptions can occur, an infinite
faithful play projects to an infinite source history. If White's target
is never reversed, it is not a White win, under the reachability
convention of Definition~\ref{def:decision}.
Passing causes no mismatch: Section~\ref{sec:pass} represents each optional source
pass by a genuine Grasshopper move, so the automatic no-move pass of \Hive{} is
never used during faithful play. During punishment, an automatic pass by the
deviating player can only give the punisher another turn.

By Theorems~\ref{thm:2cl-exptime} and~\ref{thm:reduction-correctness-final},
the polynomial-time reduction above establishes EXPTIME-hardness
for \singleQueenHive{}.

\begin{lemma}
\label{lem:hive-membership-final}
Deciding whether White can force a win is in EXPTIME for
\singleQueenHive{} and both multi-Queen variants of
Definition~\ref{def:multi-queen-variants}.
\end{lemma}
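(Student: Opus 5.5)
The plan is to build the finite game graph of all game states reachable in principle from the input, with positions normalized up to translation. We then solve the reachability game on this graph by backward induction (attractor computation) in time polynomial in its size, and show that the size is $2^{\mathrm{poly}(n)}$.

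First, we bound the states. Every position uses at most $2n$ pieces, since both reserves together hold at most $2n$ and no piece is ever removed. The support is connected, so after translating so that a canonical cell (for instance the lexicographically least occupied cell in axial coordinates) is at the origin, every occupied cell lies within distance $2n-1$ of the origin. Such a normalized state consists of four parts:
\begin{itemize}
\item a map from $O(n^2)$ cells to stacks, each stack a sequence of at most $2n$ pieces drawn from $2\cdot|\mathcal T|$ coloured types;
\item the reserve contents;
\item the player to move;
\item the preceding-turn record $r=(p,b)$.
\end{itemize}
To make the record well defined, we identify the last-moved piece by its cell and stack height after the move. Altogether a state is a string of $\mathrm{poly}(n)$ bits, so there are at most $2^{\mathrm{poly}(n)}$ normalized states.

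Next, the rules are translation invariant. Legality of placements, slides, jumps, climbs, Mosquito copying, Pillbug relocation, One Hive, Freedom to Move, and the win and draw conditions depend only on relative positions. The successor relation is therefore well defined on normalized states. Each successor can be generated and normalized in polynomial time: for each uncovered piece and ability, we enumerate the candidate destinations, of which there are $O(n^2)$ within the bounded window, and check connectivity and the gate conditions. We label terminal states as White wins, Black wins, or draws, according to the loss condition of the variant under consideration. For the first multi-Queen variant this means some Queen of a player is surrounded; for the second it means all of them are; in both cases a double satisfaction is a draw. The inventory encoding is unchanged across variants, so one argument covers \singleQueenHive{} and both variants of Definition~\ref{def:multi-queen-variants}. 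Automatic passes become self-loops with the player to move flipped.

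Finally, since infinite play is a draw and does not count as a White forced win, the question is a reachability game for White. We compute the least fixed point $A$. Initially $A$ is the set of White-win terminal states. We then repeatedly add every nonterminal state in which either White is to move and some successor lies in $A$, or Black is to move and all successors lie in $A$. Standard attractor arguments show that White forces a win exactly from states in $A$. The iteration stabilizes after at most as many rounds as there are states, and each round takes time polynomial in the number of states, so the total time is $2^{\mathrm{poly}(n)}$. Normalizing the input state and testing membership in $A$ decides the problem. The input also lists the inventory and contains $1^n$, so $n$ is polynomial in the input length.

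The main obstacle I expect is the bookkeeping in the first and second steps. We must confirm that every rule's precondition is captured by the normalized state, in particular the Pillbug and Mosquito-as-Pillbug record constraints and the rule that the Queen must be placed by the fourth turn. If turn counts matter for placement-phase states, we add a bounded turn counter (capped at $4$ per player) to the state. We must also check that normalization does not collapse the record's reference to a piece; the post-move cell and height coordinates handle this.
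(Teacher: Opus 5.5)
Your proposal is correct and follows the same route as the paper. Like the paper, you normalize states up to translation via the lexicographically least occupied cell and bound the state space exponentially. Your $2^{\mathrm{poly}(n)}$ count is looser than the paper's $2^{O(n\log n)}$ but suffices. You then generate moves and test each model's loss condition in polynomial time, and solve the resulting reachability game by backward propagation with infinite play as a non-win. Two small bookkeeping remarks, neither of which affects the argument: the turn counter is unnecessary, since every instance (and hence every reachable state) already has all Queens on the hive; and a pass edge should also empty the record $r$, so it is not a literal self-loop.
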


\begin{proof}
Let $n$ count all inventory pieces per player, including reserves. Normalize
positions by translating the lexicographically least
occupied cell to the origin. One Hive places every occupied cell
within lattice distance $2n-1$ of the origin. The board, ordered
stacks, reserves, player to move, and preceding-turn record therefore
use $O(n\log n)$ bits, even when most pieces are in reserve.
Since inventories are listed explicitly, $n$ is at most linear in
the input size, and there are at most $2^{O(n\log n)}$ normalized states.

There are polynomially many candidate placements, destinations, and
special-ability actions; their legality and terminal outcomes can be
checked in polynomial time in $n$. Allowing more Queens changes neither
the state bound nor move generation; testing whether any or all Queens
are surrounded is also polynomial.

Standard backward propagation of winning states solves the resulting
finite turn-based reachability game in time polynomial in its graph
size, with draws and infinite play treated as non-wins. Hence all
three decision problems are in EXPTIME.
\end{proof}

\begin{theorem}
\label{thm:single-queen-complete-final}
The \singleQueenHive{} decision problem is EXPTIME-complete,
even when all pieces are already placed.
This holds for every inventory variant with enough Grasshoppers,
Beetles, and Mosquitoes to realize the construction, with any
additional non-Queen pieces neutralized.
\end{theorem}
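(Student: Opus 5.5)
The proof combines the two directions already established. For membership, I would invoke Lemma~\ref{lem:hive-membership-final} directly: it places the \singleQueenHive{} decision problem in EXPTIME for arbitrary positions and inventories. This includes positions with all pieces placed, so the restriction in the statement is covered.

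For hardness, I would chain Theorem~\ref{thm:2cl-exptime} with the reduction of Sections~\ref{sec:embedding}--\ref{sec:royal}. Given a planar AND/OR instance $I$ of \UCL{}, with explicit passes allowed as in Remark~\ref{rem:no-pass}, choose $s=\Theta(N^3)$ as in Corollary~\ref{cor:timing-separation}. Then build the position using Lemma~\ref{lem:global-assembly}.

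Next I would check that the output is a valid instance in the sense of Definition~\ref{def:decision}. The following points need to be verified:
\begin{itemize}
\item the piece count is $O(N^6)$, so the unary parameter $1^n$ is polynomial;
\item the inventory is balanced between the two players via the appendage of Remark~\ref{rem:unused-inventory}, so no reserves remain and all pieces are placed;
\item each player has exactly one Queen, namely $Q^W$ and $Q^B$ in the two keeps;
\item the support is connected because the initial configuration is legal (Lemma~\ref{lem:onehive});
\item the player to move is $C_0$ and the preceding-turn record is empty.
\end{itemize}
Polynomial-time computability follows from the explicit coordinates noted before Theorem~\ref{thm:reduction-correctness-final}. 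That theorem then gives the equivalence of White's forced win in $I$ and in the constructed position. Thus EXPTIME-hardness transfers, and together with membership this yields completeness for all-placed positions.

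For the inventory clause, take any inventory variant with at least the required numbers of Grasshoppers, Beetles and Mosquitoes. The extra non-Queen pieces, of any type including Pillbugs and Ladybugs, go into the immobile appendage of Remark~\ref{rem:unused-inventory}. The key point to check is that these pieces create no new legal moves affecting the simulation. The tail pin makes every appendage piece a cut piece or a piece with nothing to copy, and Pillbugs can relocate only cut pieces. The appendage attaches to the outer-face connector away from all gadgets, so no new jump rays or contacts arise, and Lemma~\ref{lem:global-assembly}(ii) is unaffected. Consequently, every lemma used in the correctness proof applies verbatim.

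The main obstacle is this last robustness check. I would need to confirm that extra piece types, especially Pillbug or Ladybug abilities copied by Mosquitoes adjacent to the appendage, give neither player a new deviation class outside Lemma~\ref{lem:deviation-no-shortcut}. I would handle it by insisting that only flat Mosquitoes connect the appendage and that its first piece is a non-Pillbug cut piece. Beyond that point the argument is bookkeeping.
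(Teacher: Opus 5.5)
Your proposal is correct and matches the paper's proof. The paper likewise obtains hardness from the reduction (via Theorem~\ref{thm:reduction-correctness-final}), handles the inventory clause with Remark~\ref{rem:unused-inventory}, and gets membership from Lemma~\ref{lem:hive-membership-final}; your instance-validity checks are bookkeeping it leaves implicit. One small point: the remark's essential non-Pillbug requirement is on the piece immediately preceding the terminal two-Mosquito tail, so that $M_1$ cannot copy a relocation ability and move the free endpoint. Your appeal to the tail pin already covers this, and requiring the first appendage piece to be non-Pillbug is only an extra, harmless safeguard.
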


\begin{proof}
The reduction above and Remark~\ref{rem:unused-inventory} give hardness
under the stated restrictions, and
Lemma~\ref{lem:hive-membership-final} gives membership.
\end{proof}

\begin{corollary}
\label{cor:multi-queen-complete-final}
Both multi-Queen variants of Definition~\ref{def:multi-queen-variants}
are EXPTIME-complete under the same placement and inventory restrictions.
\end{corollary}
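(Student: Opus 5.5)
Membership and hardness will be proved separately.

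\emph{Membership.} Lemma~\ref{lem:hive-membership-final} already places both multi-Queen variants in EXPTIME. Its state bound and move generation do not depend on the number of Queens. The only change is the terminal test, which checks whether any, respectively all, Queens of a player are surrounded, and this remains polynomial. So nothing further is needed for the upper bound.

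\emph{Hardness.} I will give a trivial polynomial-time reduction from \singleQueenHive{}, whose EXPTIME-hardness with all pieces placed and the stated inventory restrictions is Theorem~\ref{thm:single-queen-complete-final}. The map is the identity. A single-Queen instance, with $m_Q=1$ and both Queens on the hive, is also an instance of each multi-Queen variant of Definition~\ref{def:multi-queen-variants}. That definition keeps the inventory and movement rules unchanged and allows any $m_Q\ge1$.

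It remains to check that the game is literally the same game.
\begin{itemize}
\item Legal moves coincide, since placement and movement rules are unchanged.
\item Terminal outcomes coincide. When each player has exactly one Queen, ``any Queen surrounded'' and ``all Queens surrounded'' both mean ``the unique Queen is surrounded''. The draw rule for simultaneous satisfaction matches the single-Queen double-surround draw.
\item Infinite play is a non-win in all three models.
\end{itemize}
Hence the game trees are identical, and White forces a win in one model exactly when it does in the others.

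Because the identity map leaves the position and inventory unchanged, the placement and inventory restrictions of Theorem~\ref{thm:single-queen-complete-final} carry over verbatim. These are that all pieces are placed, the inventory contains enough Grasshoppers, Beetles and Mosquitoes, and any extra non-Queen pieces are neutralized as in Remark~\ref{rem:unused-inventory}.

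The main obstacle is only to confirm that no rule of Definition~\ref{def:multi-queen-variants} depends on $m_Q$ beyond the loss condition, so that the two variants restricted to $m_Q=1$ really coincide with \singleQueenHive{}. A quick check of the definition settles this, and I expect no real difficulty.
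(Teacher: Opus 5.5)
Your proposal is correct and matches the paper's proof: membership comes from Lemma~\ref{lem:hive-membership-final}, and hardness transfers by the identity map. This works because single-Queen instances are instances of both variants, and with one Queen per player the legal moves and both loss conditions coincide, so the placement and inventory restrictions carry over. Your explicit checks of the draw convention and of infinite play only spell out the same argument in more detail.
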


\begin{proof}
Every \singleQueenHive{} instance is also an instance of both
multi-Queen variants. With one Queen per player, the legal moves and
both loss conditions coincide, so the identity map transfers the
hardness of Theorem~\ref{thm:single-queen-complete-final}, including
the stated placement and inventory restrictions.
Lemma~\ref{lem:hive-membership-final} supplies the common upper bound.
\end{proof}

%% file: Sec/BeetleStacks.tex
\section{Replacing Mosquitoes with Beetle stacks}
\label{sec:beetle-stacks}

Eliminating the Mosquito, an expansion piece, establishes
EXPTIME-completeness for single-Queen $n$-Hive using only base-game
pieces. We presented the Mosquito construction first because it
simplifies both the proof and the diagrams; we now explain how
to obtain the stronger result.

Most Mosquitoes in our reduction serve as immobile structural supports:
a flat Mosquito adjacent only to Mosquitoes has no movement ability to
copy. With Beetles, this immobility must instead be enforced
strategically. Each support becomes a stack whose lid belongs to
one player and whose body belongs to the other. Moving the lid
exposes an opposing Beetle, which covers it, while further body
Beetles carry out the punishment.

Two points require proof. Covering stacks now occur throughout the
construction, and changes in neighbouring stack heights can make a
countercover illegal under the Freedom to Move rule. Rovers can also be intercepted
anywhere along their route, rather than only during a vertex gadget
escape. We first ensure that every lid departure admits a countercover,
then bound the delay caused by repeated interceptions.

\subsection{Replacement and immediate responses}
\label{sec:beetle-replacement}

We replace structural Mosquitoes in the edge rows, dead zones,
vertex and pass gadgets, and connectors by Beetle stacks in the
same cells. Existing covering stacks retain their lid colours.
Write $h(x)$, abbreviated $h_x$, for the height of cell $x$ in the
position under consideration, and $h_x^0$ for its initial height.
A replacement stack with lid colour $C$ has the form
\[
  BS^C(x)=
  \bigl(
    \mathrm{Bt}^{\overline{C}}_1,\ldots,
    \mathrm{Bt}^{\overline{C}}_{h_x^0-1},
    \mathrm{Bt}^{C}_{\mathrm{lid}}
  \bigr),
\]
written from bottom to top, with $h_x^0$ chosen below.

The rover Mosquito in each Grasshopper trap becomes a single Beetle
of the same colour. It already has the movement ability formerly
obtained by copying, and there are no remaining structural Mosquitoes
to which it could supply that ability.

These changes preserve the occupied support, so faithful Grasshopper
landings and the connectivity tests of the vertex gadgets remain
unchanged.

\paragraph{Lid colours and local responses.}
For an encoding Grasshopper of colour $C$, the covering stacks at its
terminals retain lids of colour $\overline{C}$; we give the same colour
to the replacement lids in its Grasshopper traps. Thus $C$ cannot
move these lids to change jump rays or trap landings, or to reconnect
the pendant support that pins a trapped Grasshopper. Its opponent
keeps them fixed when applying the trap response.

Conversely, a lid departure by $\overline{C}$ near a terminal can change the
available jumps of a Grasshopper belonging to $C$.
Player $C$ need not use those jumps. It leaves that Grasshopper
fixed and pursues punishment with the released Beetles.
The countercover can therefore remain permanently above the displaced
lid; it never has to transfer onto the Grasshopper.

The vertex gadget escape retains its two-rover provision. When the
terminal is empty, one rover occupies it as a stationary bridge and
a second crosses onto the edge row. When the terminal is occupied,
a rover can cross using that existing contact.
The royal mechanisms retain their separately analysed local responses.
The replacement network considered below consists of the edge,
vertex, pass, and connector gadgets.

\paragraph{Freedom to Move above the hive.}
We recall the Freedom to Move rule from the preliminaries.
For a Beetle step from $x$ to an adjacent cell $y$, take $h_x,h_y$
before the move, including the moving Beetle at $x$, and let $f_1,f_2$
be the heights of the two flanks. The travel level and height test are
\[
  L=\max\{h_x,h_y+1\},\qquad \min\{f_1,f_2\}<L.
\]
A single lower flank therefore suffices, as illustrated in
Figure~\ref{fig:beetle-height-gates}. 

\begin{figure}[H]
  \centering
  \begin{minipage}[t]{.245\linewidth}
    \centering
    \includegraphics[width=.70\linewidth]{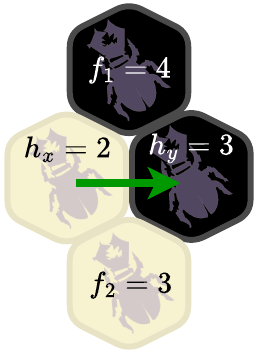}\par
    {\scriptsize (a) Climb allowed: $L=4$.\par}
  \end{minipage}\hfill
  \begin{minipage}[t]{.245\linewidth}
    \centering
    \includegraphics[width=.70\linewidth]{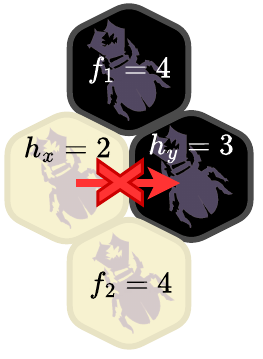}\par
    {\scriptsize (b) Climb blocked: $L=4$.\par}
  \end{minipage}\hfill
  \begin{minipage}[t]{.245\linewidth}
    \centering
    \includegraphics[width=.70\linewidth]{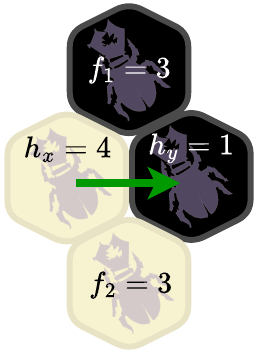}\par
    {\scriptsize (c) Descent allowed: $L=4$.\par}
  \end{minipage}\hfill
  \begin{minipage}[t]{.245\linewidth}
    \centering
    \includegraphics[width=.70\linewidth]{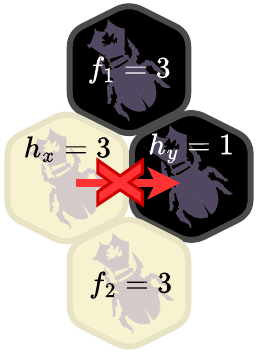}\par
    {\scriptsize (d) Descent blocked: $L=3$.\par}
  \end{minipage}
  \caption{Freedom to Move above the hive, shown from above.
  Numbers give stack heights before the move, including the moving
  Beetle at $x$.}
  \label{fig:beetle-height-gates}
\end{figure}

\paragraph{Geometry and initial heights.}
The staggered connector attachments ensure that every structural
cell has at most three initially occupied neighbours.
In particular, each replacement stack is initially adjacent to
at most three lid-bearing stacks.

For our purposes, it will be enough to keep a fixed gap between the initial
heights of adjacent stacks. The triangular lattice admits a proper
three-colouring, which lets us do this with three height classes.
In axial coordinates
$(q,r)$ with basis
vectors at $60^\circ$, set
\[
  \chi(q,r)=(q-r)\bmod 3.
\]
The neighbour steps $\pm(1,0)$, $\pm(0,1)$, and $\pm(1,-1)$
change $q-r$ by $\pm1$ or $\pm2$. Applying this colouring to the
assembled construction, we assign
\[
  h_x^0=16\bigl(1+\chi(x)\bigr)\in\{16,32,48\}.
\]
Adjacent stacks thus have distinct initial heights, independently
of their lid colours.

\paragraph{The response strategy.}
Let $C$ be the first player to deviate and let $\overline{C}$ be its opponent.
Player $\overline{C}$ follows this strategy:
\begin{enumerate}
  \item
  After a lid departure, $\overline{C}$ moves the newly exposed Beetle from
  the same source onto that lid. This countercover remains in
  place permanently.
  \item
  After an unintended Grasshopper jump, $\overline{C}$ uses the original trap
  response: restore the vacated terminal, pin the deviating
  Grasshopper by One Hive, and release the rover. The replacement
  Grasshopper stays at the restored terminal.
  \item
  Between responses, $\overline{C}$ keeps its original lids and ordinary
  encoding Grasshoppers fixed. It moves at most one travelling
  rover, using initially occupied cells except for a prescribed
  terminal bridge or the final winning descent.
  \item
  After supplying its countercover, each source supplies at most
  two additional body Beetles. One may remain as a stationary
  bridge at an empty terminal, while the other travels.
  No terminal receives more than one such stationary bridge.
\end{enumerate}

\paragraph{Changes in height.}
Assume all earlier lid departures have received permanent countercovers,
so each original lid has moved at most once. A replacement stack
loses at most four pieces: its lid, the countercover, and two further
body Beetles. Each arriving lid adds itself and its countercover.
If it covers a rover, that rover also becomes permanent, giving
a charge of $+3$ rather than $+2$.
At most three lids arrive, contributing at most nine permanent pieces;
one travelling rover may also be present. Hence
\[
  h_x^0-4\le h_x\le h_x^0+10.
\]

We now bound the heights of cells that are initially empty and contain no replacement stack.

Each terminal has at most three original lid sources.
Allowing one ground piece, one stationary bridge, at most three
permanent pieces per arriving lid (including a captured rover),
and one travelling rover gives height at most
\[
  1+1+3\cdot3+1=12.
\]
The protected trap cells may contain a ground piece and a rover,
but their fixed lids prevent opposing lid arrivals.
All remaining cells without replacement stacks receive only
lid--countercover pairs, giving height at most $2\cdot6=12$.
Thus every cell without a replacement stack has height at most $12$.

\begin{lemma}[Permanent responses to deviations]
\label{lem:beetle-counterstrategy}
Let $C$ make the first deviation from faithful play in the replacement
network, and let $\overline{C}$ be its opponent.
Player $\overline{C}$ has a response strategy that, regardless of $C$'s subsequent
play, neutralizes every subsequent lid departure and unintended
Grasshopper jump in this network in one move.
Each neutralized piece remains immobilized for the rest of play
under this strategy. Immediately after a countercover, the next
exposed body Beetle also has a legal step onto the countercovered
column.
\end{lemma}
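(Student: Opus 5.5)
The plan is an induction over the turns of play after $C$'s first deviation. The invariant $\mathcal I$ holds at the start of each turn of $C$ and has four parts:
\begin{enumerate}
\item every original lid has moved at most once;
\item every displaced lid carries a permanent countercover;
\item each deviating encoding Grasshopper is pinned by One Hive beside its restored terminal;
\item $\overline{C}$ has followed items 1--4 of the response strategy.
\end{enumerate}
Under $\mathcal I$, the height bookkeeping above applies verbatim. Replacement stacks satisfy $h_x^0-4\le h_x\le h_x^0+10$, and all other cells have height at most $12$. The proof reduces to two local claims: under $\mathcal I$ every required response is legal, and it restores $\mathcal I$.

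The first step is permanence. A countercovered lid is covered, so it cannot act. A countercover is a piece of $\overline{C}$, which by item 3 never moves it again. A trapped Grasshopper stays pinned, because the pendant support's only other connection is through lids of colour $\overline{C}$ that $\overline{C}$ keeps fixed, and $C$ cannot move those lids. Lid departures by $C$ can only involve $\overline{C}$-coloured bodies, which affect only $\overline{C}$-Grasshoppers, and $\overline{C}$ simply leaves those fixed. Hence each original lid moves at most once, and part 1 of $\mathcal I$ persists.

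The second step is the legality of the Grasshopper response. The trap response of Section~\ref{sec:deviations-traps} depends only on the occupied support and on ground-level jumps. Replacement preserved that support, and Grasshopper jumps ignore Freedom to Move. The response only needs the trap cells to be unaffected by earlier deviations, and the fixed trap lids guarantee this.

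The third step, and the main obstacle, is the countercover's legality under the height rule. Let the lid leave source $x$ for a neighbour $y$. The exposed body Beetle at $x$ now steps from $x$ to $y$. Before this step, $h_x\ge h_x^0-4-1$ only if $x$ is a replacement stack; otherwise $x$ is a terminal-contact or covering stack of height at least $h+1$. Also $h_y\le$ its bound, including the arrived lid. The travel level is $L=\max\{h_x,h_y+1\}$, and we need one flank $f$ of the step with $f<L$. By the three-colouring, the two flanks $f_1,f_2$ of $x\to y$ together with $x$ form a triangle. Their classes are therefore the two classes different from $\chi(x)$, and at least one flank class lies strictly below $\chi(x)$ or strictly below $\chi(y)$.

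I would case on $\chi(x)$. If $\chi(x)\ge1$, some flank has initial height at most $h_x^0-16$, hence current height at most $h_x^0-6<h_x^0-5\le h_x\le L$. If $\chi(x)=0$, then $h_x\ge 11$ and flanks lie in classes 1 and 2, so I would instead use $y$. When $y$ is a replacement stack of higher class, $h_y+1$ exceeds the lower flank by the gap $16-14>0$. When $y$ is not a replacement stack, I would use that a lower class $0$ source is adjacent to cells of height at most $12$ in the initially empty directions and that $h_x\ge11$, comparing the remaining cases directly. The margin $16>4+10+1$ is exactly what makes these comparisons work. I would verify the few boundary configurations (terminals, empty cells with height $\le12$) one by one; this case check is where the slack is tight and where I expect most effort.

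The final step is the claim about the next body Beetle. After the countercover, $h_y$ increases by one and $h_x$ decreases by one. The step $x\to y$ now has $L=h_y+1$, which only increases, so the same flank witness still lies strictly below $L$. Item 4 of the response strategy caps the bodies consumed at two, so $h_x^0\ge16$ leaves a Beetle available, closing the induction.
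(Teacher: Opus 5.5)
You have a genuine gap in the third and fourth steps. Your overall structure matches the paper: induction on an invariant, reuse of the height bookkeeping, and permanence via lid colours. The colouring claim, however, is false. The flanks $f_1,f_2$ of a step $x\to y$ are the two common neighbours of $x$ and $y$. They are not adjacent to each other, and each forms a triangle with $x$ and $y$, so both lie in the \emph{single} class different from $\chi(x)$ and $\chi(y)$.

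Take $\{\chi(x),\chi(y)\}=\{0,1\}$. Both flanks may then be class-$2$ replacement stacks of current height at least $44$, while the countercover's travel level is at most $44$, so no flank lies below it. No case analysis on height classes can fix this, because in that configuration the countercover really is blocked. The lemma survives only because the lid's own move is then blocked as well. You never use the legality of the lid move, and that is the missing idea.

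The paper starts from it. Legality gives $\min\{f_1,f_2\}<L_0=\max\{h_x,h_y+1\}$. The flanks are unchanged by the two follow-up steps, whose travel levels are $L_j=\max\{h_x-j,h_y+j+1\}$ for $j=1,2$. So whenever $L_j\ge L_0$, the same witness flank works.

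The remaining case is exactly where your last two estimates break. If $L_j<L_0$, then $L_0=h_x$ and the levels are $h_x,h_x-1,h_x-2$. They \emph{decrease}, for instance when a lid of a $48$-stack descends to an empty cell. So your claim that the next step's level is $h_y+1$ and only increases is wrong. You therefore need every flank below $h_x$ to lie at least three below $h_x$.

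Your bound $h_x\ge h_x^0-5$ at the countercover step does not give this. For a class-$0$ source it allows a travel level of $11$, while a flank without a replacement stack may have height $12$. The paper instead uses the exact value $h_x=h_x^0$ before the lid move. This holds because a lid that can still move is an uncovered original lid on its untouched column: nothing can have arrived on it and nothing beneath it can have left. With that value, a replacement flank below $x$ has height at most $h_x-6$. Any other flank has height at most $12\le h_x-4$. Both are therefore below $h_x-2\le L_j$.
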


\begin{proof}
We prove inductively that $\overline{C}$ can follow the prescribed responses.
Assume all earlier deviations have been neutralized.
The height bounds above then apply, and a $C$-lid that can still
move remains at its original cell and initial height.

Let such a lid move from $x$ to $y$, with pre-move heights $h_x,h_y$;
in particular, $h_x=h_x^0$. We analyse the travel levels of the lid move, its countercover,
and the next exposed Beetle's step onto the countercover.
Their respective travel levels are
\[
  L_j=\max\{h_x-j,h_y+j+1\},\qquad j=0,1,2.
\]
The flank heights $f_1,f_2$ are unchanged by these steps.
For $j=1,2$, if $L_j\ge L_0$, the legal lid move already guarantees
clearance. Otherwise $L_0=h_x$: in this case the two inequalities are
\[
  \min\{f_1,f_2\}<h_x \qquad\text{and}\qquad
  \min\{f_1,f_2\}\ge L_j\ge h_x-j.
\]
The first follows from the legality of the lid move; the second is the
condition for Freedom to Move to block the countercovering Beetle's step.

The height separation excludes this obstruction. A flank occupied by a replacement stack has an initial height
at least $16$ below or above $h_x$.
Its current height is therefore at most $h_x-6$ in the former case
and at least $h_x+12$ in the latter.
A flank without a replacement stack has height at most
$12\le h_x-4$. Thus every flank below $h_x$ is at most $h_x-4$,
which is below both follow-up levels.
Both steps also satisfy One Hive: the source remains occupied
and the destination already contains the displaced lid.
The countercover stays permanently above that lid, including when
the rover subsequently crosses the column.

Grasshopper deviations use the trap response already described.
Both responses preserve the strategy's conditions, completing the induction.
\end{proof}

\subsection{Upper route and punishment}
\label{sec:beetle-punishment}

Let $C$ be the first player to deviate before either target reversal,
and let $\overline{C}$ apply the response strategy above.
After the local escape, $\overline{C}$ has a rover on the connected
upper network leading to the open neighbour $F^C$ of $Q^C$.

Height gates do not prevent traversal of this network.
If a step from $x$ to $y$ is blocked, the rover can instead
climb onto either blocking flank $z$ and then descend to $y$.
Both steps satisfy Freedom to Move: $y$ provides a lower flank
for the climb, and the vacated column $x$ provides one for
the descent. One Hive is preserved because the underlying
support cells remain occupied.

These detours still use cells of the same network, which contains
$O(N^2s)$ cells. Removing cycles from the resulting walk therefore
gives a simple legal route of length $O(N^2s)$.
The separated height ranges established above preserve the
relative order of adjacent replacement columns, while all other
columns remain too low to obstruct these passages.
Thus subsequent local replies do not invalidate the chosen route;
interceptions of the rover are handled below.

Including the local escape and final royal continuation,
the route still takes $O(N^2s)$ moves.
The rover traverses neither the serpentine nor the long ramp.

\begin{lemma}[Continuing the upper route]
\label{lem:beetle-duel}
Covering the rover and receiving its countercover leaves $\overline{C}$ with
a fresh rover whose legal upper continuation is at most one step
longer. This also holds if the captured rover had not yet left its
release cell.
\end{lemma}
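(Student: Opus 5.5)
We reduce the claim to the single-column situation treated in Lemma~\ref{lem:beetle-counterstrategy}. Suppose the travelling rover $r$ of $\overline{C}$ stands on cell $y$, and a $C$-lid leaves its source $x$ and covers it. That is a lid departure, so the response strategy prescribes a countercover: the newly exposed body Beetle at $x$ steps onto $y$. By Lemma~\ref{lem:beetle-counterstrategy} this step is legal. It covers the intercepting lid permanently and buries $r$ as a permanent piece of $y$, which is the $+3$ charge already included in the height bounds. The same lemma supplies a further legal step: the next exposed Beetle at $x$ can move onto the countercovered column $y$. We take that Beetle as the fresh rover $r'$. This uses the allowance of at most two additional body Beetles per source, so rule~4 of the strategy is respected.

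Let $\pi=(y,y_1,\dots,y_k)$ be the legal upper continuation that $r$ had from $y$ before the interception. The fresh rover $r'$ first steps from $x$ onto $y$ and then follows $\pi$. Its continuation therefore has length $k+1$, which is at most one more than before. The step onto $y$ is the one guaranteed by Lemma~\ref{lem:beetle-counterstrategy}.

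It remains to check that the steps of $\pi$ from $y$ onwards are still legal for $r'$. Column $y$ has gained two pieces (the lid and its countercover), and column $x$ has lost pieces, but only within the ranges $h_x^0-4\le h_x\le h_x^0+10$ and the bound $12$ for cells without replacement stacks. The argument of the upper-route paragraph then applies. Adjacent replacement columns keep their relative order because their initial heights differ by at least $16$, and non-replacement columns stay too low to form both flanks of a gate. Any step of $\pi$ that has nevertheless become blocked is bypassed by the climb-and-descend detour through a blocking flank. After removing cycles, the route length is unchanged up to this bookkeeping. One Hive is preserved throughout, because every support cell stays occupied and $x$ retains its lower body Beetles.

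The degenerate case is when $r$ had not yet left its release cell, so that $y$ is itself the source column $s$ of $r$. Then the Beetle exposed by the countercover comes from the intercepting lid's source $x$, and it again steps onto $s$. From there it continues along $r$'s original route, which starts at $s$, so the continuation is again at most one step longer. If $r$ was a stationary bridge rather than a travelling rover, it stays covered and still provides contact, as in the escape analysis.

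The main obstacle is showing that the accumulated height changes never close a gate needed by $\pi$ in a way that the detour cannot repair, particularly in the degenerate case at the release cell. I would handle this by rechecking the flank inequalities with the $j=2$ travel level from the proof of Lemma~\ref{lem:beetle-counterstrategy}.
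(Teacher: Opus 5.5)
Your proposal follows the paper's proof. The fresh rover is the next Beetle exposed at the intercepting lid's source, it steps onto the captured rover's cell using the last clause of Lemma~\ref{lem:beetle-counterstrategy}, and it then reuses the old continuation, so the length becomes at most $1+\ell$; the release-cell case needs no prior move by the captured rover. One caution: drop the climb-and-descend fallback, because any such detour adds a step and would break the $+1$ bound. The claim must rest, as in the paper, on the old continuation from the captured rover's cell remaining legal, since the height ranges preserve the relative order of adjacent replacement columns.
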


\begin{proof}
Let $v$ be the captured rover's cell and $w$ the attacking lid's
source. Lemma~\ref{lem:beetle-counterstrategy} supplies a permanent
countercover and a fresh rover that can step from $w$ to $v$.
The old continuation from $v$ remains legal, so a route of length
$\ell$ is replaced by one of length at most $1+\ell$.
This does not require any prior move by the captured rover.
\end{proof}

\paragraph{Bounding the delay.}
Player $\overline{C}$ answers each deviation immediately and otherwise advances
its rover along the route. Faithful edge reversals do
not change the connector network.

Let $D=O(N^2s)$ bound the initial route together with the prescribed
local manoeuvres. Let $I$ count interceptions of the rover,
and let $J$ count the neutralizing replies,
including the initial reply.
Each ordinary advance decreases the
remaining route length by one, while Lemma~\ref{lem:beetle-duel}
allows an increase of at most one per interfering lid. Thus the
total number of $\overline{C}$-moves needed for punishment is at most
\[
  D+I+J.
\]
This counts both costs of an interception: the immediate countercover
and the possible extra step by the replacement rover.

Let $T=O(N^2s)$ be the number of $C$-lids in the replacement network,
and let $G=O(N)$ be the number of encoding Grasshoppers.
By Lemma~\ref{lem:beetle-counterstrategy}, each deviating piece is
permanently immobilized after its first deviation. Consequently,
\[
  I\le T,\qquad J\le T+G,
  \qquad D+I+J\le D+2T+G=O(N^2s).
\]
There is therefore no indefinite sequence of interceptions.
As in the original punishment argument, we omit neutralized lateral
royal-lid move--reply pairs: each consumes
one turn per player and changes neither the upper route nor the
remaining royal continuation. The other royal deviations retain
their previously established local responses.

\begin{theorem}
\label{thm:mosquito-free-completeness}
The single-Queen $n$-Hive decision problem and both multi-Queen
variants are EXPTIME-complete without Mosquitoes, even when all pieces are already placed. This holds for every
inventory variant with enough Grasshoppers and Beetles to realize
the construction.
\end{theorem}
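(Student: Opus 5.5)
The plan is to combine the membership result with the Mosquito-free version of the reduction. Lemma~\ref{lem:hive-membership-final} already places \singleQueenHive{} and both multi-Queen variants in EXPTIME for every inventory, including inventories with $m_M=0$, so only hardness remains. For hardness I will reuse the reduction from \UCL{} (Theorem~\ref{thm:2cl-exptime}) and the proof of Theorem~\ref{thm:reduction-correctness-final}, with every Mosquito replaced as described in Section~\ref{sec:beetle-replacement}. The construction will still be computable in polynomial time. Each replacement stack has constant height at most $48$, so the piece count rises only by a constant factor to $O(N^2s+s^2)$. The inventory is balanced as before, except that the tail of the appendage in Remark~\ref{rem:unused-inventory} is rebuilt with Beetle stacks. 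Corollary~\ref{cor:multi-queen-complete-final} then transfers the result to the multi-Queen variants, since the identity map still works.

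First I will check that faithful play is unchanged. The occupied support is the same, so Lemmas~\ref{lem:vertex-gating}, \ref{lem:onehive} and~\ref{lem:global-assembly}(i)--(ii) carry over verbatim. The one exception is the immobility claim in Lemma~\ref{lem:vertex-gating}(iii): it becomes the statement that every structural piece is covered or is a lid. I also need the pass gadget to function without Mosquitoes. Its row becomes a row of stacks whose lids may move, so I will fall back on the harmless royal-guard moves of Section~\ref{sec:royal}. These are already stated to provide the pass mechanism in the Mosquito-free reduction, and the Grasshopper pass can be kept or dropped. Any lid move in the former gadgets is a deviation, and the analysis below handles it. The royal regions need no change because they use only Beetles and Grasshoppers, with Beetle stacks in the keep in place of $MS^B$.

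Next I will rebuild the punishment lemma. Lemma~\ref{lem:beetle-counterstrategy} gives permanent one-move neutralization of every lid departure and Grasshopper deviation. The upper-route paragraph and Lemma~\ref{lem:beetle-duel} bound the punishment by $D+2T+G=O(N^2s)$ moves. This plays the role of Lemma~\ref{lem:global-punishment-final}, with $P_{\overline{C}}(N,s)\le A'N^2s$ for a new constant $A'$. Corollary~\ref{cor:timing-separation} then holds with $K>A'/a$, so the choice $s=\Theta(N^3)$ still works. I also need the analogue of Lemma~\ref{lem:deviation-no-shortcut}: the deviator can win only through its full royal traversal. Its confinement invariant transfers, because every lid near a $C$-terminal or trap is owned by $\overline{C}$. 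In addition, $C$'s own lids elsewhere are countercovered at once and cannot reach the royal entrance or the keep before being immobilized. With these ingredients, the completeness and soundness arguments of Theorem~\ref{thm:reduction-correctness-final} go through word for word.

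I expect the main obstacle to be the interaction between the many new lids and the keep and royal interfaces. Structural stacks now touch the connector that leads to the keep, and a lid there could reach a ramp cell or $v_f$ within a few moves. I would first try to keep every replacement stack at constant distance at least three from the royal regions, using flat non-mobile supports such as ramp-style pinned Grasshoppers near the interface. This turns any such lid into an ordinary network deviation that is countercovered before it arrives. I would also argue that the height bound $h_x^0+10$ keeps the keep approach within the gate-bypass argument of the upper-route paragraph.
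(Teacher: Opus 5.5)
Your proposal is correct and follows the paper's route. The paper's proof is only the summary you expand: the Beetle-stack replacement keeps the occupied support, so faithful simulation is unchanged; deviations are punished via Lemmas~\ref{lem:beetle-counterstrategy} and~\ref{lem:beetle-duel} with the $D+2T+G=O(N^2s)$ delay bound; constant stack heights keep the size polynomial; the identity map transfers hardness to the multi-Queen variants; and Lemma~\ref{lem:hive-membership-final} gives membership unchanged.

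The Grasshopper-support safeguard in your last paragraph is not needed. Each replacement lid moves at most once before its permanent countercover, so only lids adjacent to sensitive royal cells would matter. The paper excludes these by separating external structures from the royal regions and by keeping the royal regions' separately analysed responses. New flat Grasshoppers would also bring their own pinning conditions, which you would have to re-check once the terminal connection closes the cycle through the keep.
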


\begin{proof}
The replacement preserves faithful simulation and punishes deviations.
Its constant stack heights preserve polynomial size, giving
single-Queen hardness under the stated restrictions.
The same instances give hardness for both multi-Queen variants:
with one Queen per player, their loss conditions coincide.
The previously established EXPTIME membership argument applies
unchanged when Mosquitoes are excluded.
\end{proof}

%% file: Sec/End.tex
\section{Conclusions}
\label{sec:conclusions}

We have proved that deciding whether a player can force a win in
\singleQueenHive{} is EXPTIME-complete, even when all pieces are already
placed. The reduction uses the geometry
and the authentic movement rules of \Hive{} directly: Grasshoppers encode reversible logical
states, the One Hive rule enforces the local constraints, and Mosquitoes turn departures from
the simulation into a global punishment. A polynomial separation between the
prescribed royal traversal and the punishment route completes the construction.
The Beetle-stack replacement of Section~\ref{sec:beetle-stacks} yields the
same complexity result using only Queen Bees, Beetles, and Grasshoppers.

Both multi-Queen variants defined here are also EXPTIME-complete:
the single-Queen reduction applies to either loss condition, and the
same upper bound covers all three models. 

A natural question is whether the reduction can be made more compact.
The current construction dilates the embedding to accommodate royal
regions whose completion time exceeds the global punishment time.
One possible approach would be to use unbounded two-player Constraint
Logic under normal play, with no passes or target edges and with the
first player unable to move losing. Is this variant EXPTIME-hard on
the same planar AND/OR graph family? Such a result would remove the
need for pass gadgets and royal countdowns: a player with no legal
simulated move would be forced to deviate, triggering the opponent's
punishment. This could yield a simpler reduction and substantially
smaller \Hive{} positions.

\section*{Acknowledgements}

We thank the Elicsir Foundation for introducing us to Hive
and giving us the opportunity to witness the first victory
by a bot over the European Hive champion, further fuelling
our shared enthusiasm for the game.

This work also owes something to FUN with Algorithms 2026,
where Andel and Rin's paper was presented. One of us brought a copy of Hive, which was played most evenings of the conference.
The enthusiasm for the game at the conference motivated us to resolve
the question left open in their paper.

We are grateful to the Hive community for the valuable feedback
we received at the 2026 European Hive Convention and Open.
We also thank John Yianni and the Gen42 Games team for creating
such an entertaining game and allowing us to use the official
piece artwork.

Finally, we are especially grateful to Francesco Salerno
for his invitation, warm welcome, and the many long discussions
about our work, Hive bots and playing strategies.